\documentclass{LMCS}
\usepackage{amsmath,amssymb}
\usepackage{hyperref,enumerate,tikz}
\usetikzlibrary{arrows}
\IfFileExists{microtype.sty}{\usepackage{microtype}}{}
\overfullrule=2 pt

\usepackage{macros-fr}
\usepackage{macros-dk}
\xdefinecolor{LemonChiffon}{rgb}{1,.98,.8}
\xdefinecolor{vert}{rgb}{0,.55,0}
\xdefinecolor{rouge}{rgb}{.9,.1,.1}
\xdefinecolor{bleu}{rgb}{0.2,0.2,0.7} 
\xdefinecolor{marron_clair}{cmyk}{0,0.46,0.50,0} 
\xdefinecolor{marron}{cmyk}{0,0.46,0.50,0.4} 
\xdefinecolor{jaune}{cmyk}{0,0,1,0} 
\xdefinecolor{violet}{cmyk}{0.47,0.91,0,0.08} 
\xdefinecolor{bleu_clair}{rgb}{0.4, 0.4, 0.9}
\xdefinecolor{bleu_vert}{cmyk}{0.62,0,0.12,0} 
\xdefinecolor{rose}{cmyk}{0,1,0,0} 

\def\letenv#1#2{\global\newenvironment{#1}{\begin{#2}}{\end{#2}}}
\letenv{theorem}{thm}
\letenv{definition}{defi}
\letenv{remark}{rem}
\letenv{lemma}{lem}
\letenv{proposition}{prop}
\letenv{corollary}{cor}
\letenv{example}{exa}

\def\doi{4 (2:9) 2008}
\lmcsheading%
{\doi}
{1--28}
{}
{}
{Sep.~20, 2007}
{Jun.~20, 2008}
{}   

\begin{document}
\title{Model Checking One-Clock Priced Timed Automata\rsuper *}

\author[P.~Bouyer]{Patricia~Bouyer\rsuper a}
\address{{\lsuper a}LSV, CNRS \& ENS de Cachan, France and 
  Oxford University Computing Laboratory, UK }
\email{bouyer@lsv.ens-cachan.fr}
\thanks{{\lsuper a}Partly supported by project DOTS (ANR-06-SETI-003), and
  by a Marie-Curie fellowship.}

\author[K.G.~Larsen]{Kim~G.~Larsen\rsuper b}
\address{{\lsuper b}Aalborg University, Denmark}
\email{kgl@cs.aau.dk}
\thanks{{\lsuper b}Partly supported by an invited professorship from ENS Cachan.}

\author[N.~Markey]{Nicolas~Markey\rsuper c}
\address{{\lsuper c}LSV, CNRS \& ENS de Cachan, France}
\email{markey@lsv.ens-cachan.fr}
\thanks{{\lsuper c}Partly supported by project DOTS (ANR-06-SETI-003)}

\keywords{priced timed automata, model-checking}
\subjclass[2000]{F.1.1,F.3.1} \titlecomment{{\lsuper *}This article is
  a long version of~\cite{BLM-fossacs07}. It has been extended with
  recent related results from~\cite{BM-formats07}.}

\begin{abstract}

  We consider the model of priced (a.k.a.~weighted) timed automata, an
  extension of timed automata with cost information on both locations and
  transitions, and we study various model-checking problems for that model
  based on extensions of classical temporal logics with cost constraints on
  modalities. We prove that, under the assumption that the model has only one
  clock, model-checking this class of models against the logic~\WCTL, \CTL
  with cost-constrained modalities, is \PSPACE-complete (while it has been
  shown undecidable as soon as the model has three clocks).
  We also prove that model-checking \WMTL, \LTL with cost-constrained
  modalities, is decidable only if there is a single clock in the model and a
  single stopwatch cost variable (\textit{i.e.}, whose slopes lie
  in~$\{0,1\}$).
\end{abstract}


\maketitle

An interesting direction of real-time model-checking that has recently
received substantial attention is the extension and re-targeting of
timed automata technology towards optimal scheduling and controller
synthesis~\cite{AAM06,RLS04,BBL05}.

In particular, scheduling problems can often be reformulated in terms
of reachability questions with respect to behavioural models where
tasks and resources relevant for the scheduling problem in question
are modelled as interacting timed
automata~\cite{performance}. Although there exists a wide body of
literature and established results on (optimal) scheduling in the
fields of real-time systems and operations research, the application
of model-checking has proved to provide a novel and competitive
technology.  In~particular, model-checking has the advantage of
offering a generic approach, going well beyond most classical
scheduling solutions, which have good properties only for scenarios
satisfying specific assumptions that may or, quite often, may not
apply in actual practical circumstances. Of~course, model-checking
comes with its own restrictions and stumbling blocks, the~most
notorious being the state-space explosion. A~lot of research has thus
been devoted to ``guide'' and ``prune'' the reachability
search~\cite{guiding}.

As part of the effort on applying timed automata technology to
scheduling, the notion of priced (or weighted) timed
automata~\cite{BFH+01,ATP01}
has been promoted as a useful extension of the classical model of
timed automata allowing continuous consumption of resources
(e.g.~energy, money, pollution,~etc.) to be modelled and
analyzed. In~this way one may distinguish different feasible schedules
according to their consumption of resources
(\textit{i.e.},~accumulated cost) with obvious preference for the
\textit{optimal} schedule with minimal resource requirements.

Within the model of priced timed automata, the cost variables serve
purely as
\textit{evaluation functions} or \textit{observers}, \textit{i.e.},~the
behaviour of the underlying timed automata may in no way depend on
these cost variables. As~an important consequence of this restriction
---and in contrast to the related models of constant slope and linear
hybrid automata--- a~number of optimization problems have been shown
decidable for priced timed automata including minimum-cost
reachability~\cite{BFH+01,ATP01,BBBR07},
optimal (minimum and maximum~cost) reachability in multi-priced
settings~\cite{LR05b}
and cost-optimal infinite  schedules~\cite{BBL04,BBL05}
in terms of minimal (or~maximal) cost per time ratio in the
limit. Moreover UPPAAL Cora~\cite{cora} provides an efficient tool for
computing cost-optimal or near-optimal solutions to reachability
questions, implementing a symbolic~$A^{*}$ algorithm based on a new
data structure (so-called priced zones) allowing for efficient
symbolic state-representation with additional cost-information.

Cost-extended versions of temporal logics such as \CTL~(branching-time) and
\LTL (linear-time) appear as a natural ``generalizations'' of the above
optimization problems. Just~as \TCTL and \MTL provide extensions of
\CTL and~\LTL with time-constrained modalities, \WCTL~and \WMTL are
extensions with
\textit{cost}-constrained modalities interpreted with respect to priced timed
automata. Unfortunately, the addition of cost now turns out to come
with a price: whereas the model-checking problems for timed automata
with respect to TCTL and MTL are decidable, it has been shown
in~\cite{BBR04}
that model-checking priced timed automata with respect to \WCTL is
undecidable.  Also, in~\cite{BBR05}
it has recently been shown that the problem of determining
cost-optimal winning strategies for priced timed games is not
computable. In~\cite{BBM06}
it has been shown that these negative results hold even for priced
timed (game) automata with no more than three clocks.

Recently, the restriction of timed systems to a single clock has
raised some attention, as it leads to much nicer decidability and
complexity results.  Indeed, the emptiness problem in single-clock
timed automata becomes
\NLOGSPACE-Complete~\cite{LMS04}
instead of \PSPACE-Complete in the general framework~\cite{AD94}.
Also, the emptiness problem is decidable for single-clock alternating
timed automata and is undecidable for general alternating timed
automata~\cite{LW05,OW05,LW07,OW07}.
Even more recently, cost-optimal timed games have been proved
decidable for one-clock priced timed games~\cite{BLMR06},
and construction of almost-optimal strategies can be done.

In this paper we focus on model-checking problems for priced timed
automata with a single clock. On~the one~hand, we~show that the
model-checking problem with respect to~\WCTL is \PSPACE-Complete under
the ``single clock'' assumption. This~is rather surprising as
model-checking~\TCTL (the~only cost variable is the time elapsed)
under the same assumption is already
\PSPACE-Complete~\cite{LMS04}.
On~the other hand, we~prove that the model-checking problem with
respect to~\WMTL, the linear-time counterpart of~\WCTL, is~decidable
if we add the extra requirements that there is only one cost variable
which is stopwatch (\textit{i.e.},~with slopes in~$\{0,1\}$). We~also
prove that those two conditions are necessary to get decidability,
by~proving that any slight extension of that model leads to
undecidability.

\medskip 

The paper is organized as follows: In Section~\ref{sec:preliminaries},
we present the model of priced timed automata. Section~\ref{sec:WCTL}
is devoted to the definition of~\WCTL, and to the proof that it is
decidable when the model has only one clock. We~propose an \EXPTIME
algorithm, which we then slightly modify so that it runs
in~\PSPACE. Section~\ref{sec:WMTL} then handles the linear-time case:
we~first define~\WMTL, prove that it is decidable under the
single-clock and single-stopwatch-cost assumptions, and that it is
undecidable if we lift any of these restrictions.

\section{Preliminaries}\label{sec:preliminaries}

\subsection{Priced Timed Automata}
In the sequel, $\bbbr_+$ denotes the set of nonnegative reals.
Let $\clocks$ be a set of clock variables. The set of clock constraints (or
guards) over~$\clocks$ is defined by the grammar ``$g ::= x \sim c \mid g
\wedge g$'' where $x \in \clocks$, $c \in \bbbn$ and $\sim\, \in
\{<,\leq,=,\geq,>\}$. The set of all clock constraints is denoted
$\clockconstr$. That a valuation $v \colon \clocks \to \bbbr_+$ satisfies a
clock constraint~$g$ is defined in a natural way ($v$~satisfies $x \sim c$
whenever $v(x) \sim c$), and we then write $v \models g$. We~denote by~$v_0$
the valuation that assigns zero to all clock variables, by~$v+t$
(with~$t\in\bbbr_+$) the valuation that assigns $v(x)+t$ to all $x\in\clocks$,
and for $R\subseteq\clocks$ we write $[R \leftarrow 0]v$ to denote the
valuation that assigns zero to all variables in~$R$ and agrees with~$v$ for
all variables in~$\clocks\smallsetminus R$.

\begin{definition}
  A \emph{priced timed automaton} (\PPTA for short) is a tuple $\Aut =
  (\Loc,\Init,\penalty-1000\Var,\penalty-1000\Edg,\Inv,(\cost_i)_{1 \leq i
    \leq p})$ where $\Loc$ is a finite set of \emph{locations}, $\Init \in
  \Loc$ is the \emph{initial} location, $\Var$ is a set of \emph{clocks},
  $\Edg \subseteq \Loc\times \Constr[\Var]\times 2^\Var\times\Loc$ is the set
  of \emph{transitions}, $\Inv\colon \Loc \to \Constr[\Var]$ defines the
  \emph{invariants} of each location, and each $\cost_i\colon \Loc\cup\Edg\to
  \Nat$ is a \emph{cost (or price) function}.

  For $S \subseteq \N$, a cost $\cost_i$ is said to be \emph{$S$-sloped} if
  $\cost_i(Q) \subseteq S$. If $S = \{0,1\}$, it is said \emph{stopwatch}. If
  $|S| = n$, we say that the cost $\cost_i$ is $n$-sloped.
\end{definition}
The semantics of a \PPTA~$\Aut$ is given as a labeled timed transition system
$\TTS_{\Aut} = (S, s_0, \trans)$ where $S \subseteq Q \times \bbbr^{\Var}_+$
is the set of states, $s_0=(q_0,v_0)$
is the initial state,
and the transition relation $\mathord{\to} \subseteq S\times(\Edg \cup
\bbbr_+) \times S$ is composed of delay and discrete moves defined as follows:
\begin{enumerate}
\item \textit{(discrete move)} $(q,v) \xrightarrow{e} (q\ip,v\ip)$ if $e =
  (q,g,R,q\ip)\in E$ is s.t. $v \models g$, $v\ip = [R \leftarrow 0]v$, $v\ip
  \models \Inv(q\ip)$. The $i$-th cost of this discrete move is
  $\cost_i\big((q,v) \xrightarrow{e} (q\ip,v\ip)\big) = \cost_i(e)$.
\item \textit{(delay move)} $(q,v) \xrightarrow{t} (q,v+t)$ if $\forall 0 \leq
  t\ip \leq t$, $v+t\ip \models \Inv(q)$. The $i$-th cost of this delay move is
  $\cost_i\big((q,v) \xrightarrow{t} (q,v+t)\big) = t \cdot \cost_i(q)$.
\end{enumerate}
A discrete move or a delay move will be called a \emph{simple move}. A
\emph{mixed move} $(q,v) \xrightarrow{t,e} (q',v')$ corresponds to the
concatenation of a delay move and a discrete move. 
For technical reasons, we only consider non-blocking \PPTA{}s, because
we will further interpret logical formulas over infinite paths.
The $i$-th cost of this mixed move is the sum of the $i$-th costs of the two
moves.

A finite (resp. infinite) \emph{run} 
of a~\PPTA is a finite (resp. infinite) sequence of mixed moves in the
underlying transition system. A run of $\Aut$ will thus be distinguished from
a path in $\TTS_{\Aut}$, which is composed of simple moves and where
stuttering of delay moves is allowed. Note however that a path in
$\TTS_{\Aut}$ is naturally associated with a run in~$\Aut$.
The $i$-th cost of
a run $\varrho$ in $\Aut$ (resp. path $\varrho$ in $\TTS_{\Aut}$) is the sum
of the $i$-th costs of the mixed (resp. simple) moves composing the run (resp.
path), and is denoted $\cost_i(\varrho)$. The length~$|\varrho|$ of a finite
run $\varrho = s_0 \xrightarrow{t_1,e_1} s_1 \xrightarrow{t_2,e_2} \cdots
\xrightarrow{t_n,e_n} s_n$ is~$n$.
A \emph{position} along $\varrho$ is a nonnegative integer~$\pi\leq
|\varrho|$. Given a position~$\pi$, $\varrho[\pi]$ denotes the corresponding
state~$s_{\pi}$, whereas $\varrho_{\leq \pi}$ denotes the finite prefix of
$\varrho$ ending at position~$\pi$, and $\varrho_{\geq \pi}$ is the suffix
starting in~$\pi$. 

\begin{remark}
  In the model of priced timed automata, the cost variables 
only play the role of 
  \emph{observers} (they are \emph{history variables} in the sense
  of~\cite{OG,AL}): the values of these variables don't constrain the
  behaviour of the system (the behaviours of a priced timed automaton are
  those of the underlying timed automaton), but can be used as evaluation
  functions. For~instance, problems such as ``optimal
  reachability''~\cite{BFH+01,ATP01}, ``optimal infinite
  schedules''~\cite{BBL04} or ``optimal reachability timed
  games''~\cite{ABM04,BCFL04,BBR05,BBM06} have recently been investigated. The
  problem we consider in this paper is closely related to these kinds of
  problems: we will use temporal logics as a language for evaluating the
  performances of a system.
\end{remark}

\subsection{Example}
\label{subsec:example}

The \PPTA of Figure~\ref{ex1} models a never-ending process of repairing
problems, which are bound to occur repeatedly with a certain frequency. The
repair of a problem has a certain cost, captured in the model by the cost
variable~$c$. As soon as a problem occurs (modeled by the {\sf Problem}
location) the value of $c$ grows with rate~3, until actual repair is taking
place in one of the locations {\sf Cheap} (rate~2) or {\sf Expensive}
(rate~4). At most $20$ time units after the occurrence of a problem it will
have been repaired one way or another.

\begin{figure}[!ht]
\begin{minipage}{.4\linewidth}
\centering
\begin{tikzpicture}[scale=.7]
   \everymath{\scriptstyle}
  \draw (0,0) node[draw,rounded corners=1mm,line width=.7pt] (A) {$\begin{array}{c} \dot{c}=0 \\ x \leq 9 \end{array}$};
  \draw (3,0) node[draw,rounded corners=1mm,line width=.7pt] (B) {$\begin{array}{c} \dot{c}=3 \\ x \leq 10 \end{array}$};
  \draw (6,1.5) node[draw,rounded corners=1mm,line width=.7pt] (C) {$\begin{array}{c} \dot{c}=2 \\ x < 20 \end{array}$};
  \draw (6,-1.5) node[draw,rounded corners=1mm,line width=.7pt] (D) {$\begin{array}{c} \dot{c}=4 \\ x \leq 15 \end{array}$};
  \draw[-latex'] (A)--(B);
  \draw[-latex'] (B).. controls +(.5,1) and +(-2,0) ..(C) node[near end,above,sloped] {$x \geq 2$};
  \draw[-latex'] (B).. controls +(.5,-1) and +(-2,0) ..(D) node[near end,below,sloped] {$x \geq 4$};
  \draw[-latex'] (C).. controls +(0,2) and +(0,2) ..(A) node[midway,sloped,above] {$x=20$, $x:=0$, $c+=5$};
  \draw[-latex'] (D).. controls +(0,-2) and +(0,-2) ..(A) node[midway,sloped,below] {$x=15$, $x:=0$};
  \draw (2.4,1.1) node (E) {\sffamily\footnotesize Problem};
  \draw (6,.35) node (F) {\sffamily\footnotesize Cheap};
  \draw (6,-.4) node (G) {\sffamily\footnotesize Expensive};
  \draw (1.3,-.65) node (H) {\sffamily\footnotesize OK};
\end{tikzpicture}
\caption{Repair problem as a \PPTA}\label{ex1}
\end{minipage}
\hfill
\begin{minipage}{.52\linewidth}
\centering
            \begin{tikzpicture}[xscale=.25,yscale=.35]
              \everymath{\scriptstyle}
              \draw[latex'-latex'] (0,12)--(0,0)--(12,0);
              \foreach \x in {2,4,6,8,10} \draw (\x,0)--(\x,-.4);
              \foreach \x in {2,4,6,8,10} \draw (\x,-1) node {$\x$};
              \draw (12,-1) node {$x$};
              \foreach \x in {2,4,6,8,10} \draw (0,\x)--(-.4,\x);
              \foreach \x in {10,20,30,40,50} \draw[yscale=.2] (-1.2,\x) node {$\x$};
              \draw (-1.2,12) node {$c$};
              \draw[yscale=.2,line width=.8pt] (0,47)--(2,41) node[near start] {{\tiny \phantom{tototototototototo} \textcolor{black}{Wait in} {Problem}}};
              \draw[yscale=.2,line width=.1mm] (2,41) -- +(0:2mm) -- +(0:0mm) -- +(90:10mm) -- +(0:0mm) -- +(180:2mm)  -- +(0:0mm) -- +(270:10mm);
              \draw[yscale=.2,line width=.8pt] (2,41)--(5,35) node[near start] {{\tiny \phantom{totototototototo} \textcolor{black}{Goto} {Cheap}}};
              \draw[yscale=.2,line width=.1mm] (5,35) -- +(0:2mm) -- +(0:0mm) -- +(90:10mm) -- +(0:0mm) -- +(180:2mm)  -- +(0:0mm) -- +(270:10mm);
              \draw[yscale=.2,line width=.8pt] (5,35)--(10,20) node[near start] {{\tiny \phantom{tototototototototo} \textcolor{black}{Wait in} {Problem}}};
              \fill[yscale=.2] (10,20) ellipse (5pt and 25pt);
              \draw[yscale=.2] (10,18) node {{\tiny Goto {Expensive}}};
              \draw[yscale=.2] (10,18) node {{\tiny \phantom{totototototototototototototototototoz}}};
            \end{tikzpicture}
\caption{Minimum cost of repair and associated strategy in location {\sf Problem}}\label{ex2}
\end{minipage}
\end{figure}

In~this setting we are interested in properties concerning the cost of
repairs. For instance, we would like to express that whenever a problem
occurs, it \emph{may} be repaired (\textit{i.e.}~reach the
location~$\textsf{OK}$) within a total cost of~$47$. In fact Figure~\ref{ex2}
gives the minimum cost of repair ---as~well as an optimal strategy--- for any
state of the form $(\textsf{Problem},x)$ with~$x\in[0,10]$. Correspondingly,
the minimum cost of reaching~$\textsf{OK}$ from states of the form
$(\textsf{Cheap},x)$ (resp. $(\textsf{Expensive},x)$) is given by the
expression~$45-2x$ (resp.~$60-4x$). Symmetrically, we would like to express
properties on the worst cost to repair, or to link the uptime with the (best,
worst) cost of repairing. As~will be illustrated later, extending temporal
logics with cost informations provides a nice setting for expressing such
properties.



\section{Model Checking Branching-Time Logics}\label{sec:WCTL}

We first focus on the case of branching-time logics. From this point on, \AP
denotes a fixed, finite, non-empty set of atomic propositions. We first define
the cost-extended version of~\CTL.

\subsection{The Logic \texorpdfstring{\WCTL}{WCTL}}


The logic \WCTL{}\footnote{\WCTL stands for ``Weighted \CTL'',
  following~\cite{BBR04} terminology. It would have been more natural
  to call it ``Priced \CTL'' (\PCTL) in our setting, but this would
  have been confusing with ``Probabilistic
  \CTL''~\cite{HJ-fac94}.}~\cite{BBR04} extends \CTL with cost
constraints.  Its syntax is given by the following grammar:
\[
\WCTL\ni \phi\ ::=\ a\ \mid\ \neg \phi\ \mid\ \phi\ou\phi\ \mid\
\E\phi\U[\cost\sim c]\phi\ \mid\ \A\phi\U[\cost\sim c]\phi
\]
where $a \in \AP$, $\cost$ is a cost function, $c$ ranges over~\Nat, and
$\mathord\sim\in\{ \mathord<,\mathord\leq,\penalty-1000\mathord=,
\penalty-1000\mathord\geq,\mathord>\}$.

We interpret formulas of \WCTL over labeled \PPTA, \textit{i.e.} \PPTA having
a labeling function~$\Lab$ which associates with every location~$q$ a subset
of~$\AP$. We~identify each cost appearing in the \WCTL formulas with the cost
having the same name in the model (which is assumed to exist).

\begin{definition}
  Let \Aut{} be a labeled \PPTA. The satisfaction relation of \WCTL is defined
  over configurations $(q,v)$ of \Aut as follows:\par
  \[\begin{array}{r@{\ \ }c@{\ \ }l}
    (q,v)\models a &\Leftrightarrow & a\in\Lab(q) \\
    (q,v)\models \non\phi &\Leftrightarrow & (q,v)\not\models \phi \\
    (q,v)\models\phi\ou\psi &\Leftrightarrow & (q,v)\models\phi \text{ or }
    (q,v)\models\psi \\
    (q,v)\models \E\phi\U[\cost \sim c]\psi &\Leftrightarrow & \text{there is an
      infinite run }\varrho 
    \text{ in }\Aut \\
    & & \text{from }(q,v)\text{ s.t. }\varrho\models\phi\U[\cost \sim c]\psi \\
    (q,v)\models \A\phi\U[\cost \sim c]\psi &\Leftrightarrow &\text{any infinite run
    }\varrho 
    \text{ in }\Aut \text{ from } (q,v)\\
    & & \text{satisfies }
    \varrho\models\phi\U[\cost \sim c]\psi \\
    \varrho \models \phi \U[\cost \sim c] \psi &
    \Leftrightarrow & \text{there exists a position }\pi>0 \text{  
      along}\ \varrho\ \text{s.t.}\\ 
    & & \varrho[\pi] \models \psi,\ \text{for every position}\ 0< \pi' < \pi, \\
    & & 
    \varrho[\pi'] \models \phi,\ \text{and}\ \cost(\varrho_{\leq \pi})\sim c
  \end{array}\]
%
%
  If \Aut is not clear from the context, we may write $\Aut,(q,v) \models 
  \phi$ instead of simply $(q,v) \models \phi$.
\end{definition}

As usual, we will use shorthands such as ``$\texttt{true} \eqdef a\ou \non
a$'', ``$(\varphi \Rightarrow \psi) \eqdef \neg \varphi \vee \psi$'', ``$\E
\F[\cost\sim c] \phi \eqdef \E \texttt{true} \,\U[\cost\sim c] \phi$'', and
``$\A \G[\cost\sim c] \phi \eqdef \neg \E \F[\cost\sim c] \neg \phi$''.
Moreover, if the cost function~$\cost$ is unique or clear from the context, we
may write $\phi\U[\sim c]\psi$ instead of $\phi\U[\cost \sim c]\psi$.
Finally, we omit to mention the subscript~``$\sim c$'' when it is equivalent 
to ``$\geq 0$'' (thus imposing no real constraint). 

\begin{example}
  We go back to our example of Section~\ref{subsec:example}. That it is always
  possible to repair a problem with cost at most~$47$ can be expressed
  in~\WCTL with the following formula:
  \[
  \A\G\big(\textsf{Problem}\Rightarrow \E\F[c \leq 47] \textsf{OK}\big).
  \]
  We can also express that the worst cost to repair is~$56$, in the sense that
  state~\textsf{Repair} can always be reached within this cost:
  \[
  \A\G\big(\textsf{Problem}\Rightarrow \A\F[c \leq 56] \textsf{OK}\big).
  \]

  Now, considering time as a special case of a cost (with constant slope~$1$),
  we can express properties relating the time elapsed in the \textsf{OK} state
  and the cost to repair:
  \[
  \A\G\big(\neg \E (\textsf{OK}\; \U[t \geq 8] (\textsf{Problem} \wedge \neg
  \E\F_{c<30} \textsf{OK}))\big).
  \]
  This expresses that if the system spends at least $8$ (consecutive) time
  units in the \textsf{OK} state, then the next \textsf{Problem} can be
  repaired with cost at most~$30$.
\end{example}

\medskip

\bigskip The main result of this section is the following theorem:

\begin{theorem}
  \label{theo:full}
  Model-checking \WCTL on one-clock \PPTA is \PSPACE-Complete.
\end{theorem}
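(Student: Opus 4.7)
My plan is to prove the two directions separately. For \PSPACE-hardness, I would reuse the fact that model-checking \TCTL on one-clock timed automata is already \PSPACE-hard~\cite{LMS04}: since the elapsing of time itself can be viewed as a cost of constant slope~$1$, every \TCTL formula is a \WCTL formula interpreted in a \PPTA whose sole cost variable tracks time, so no new argument is needed beyond a careful syntactic embedding.

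For the \PSPACE upper bound, I would proceed by structural induction on the \WCTL formula~$\phi$. Atoms and Boolean connectives are straightforward; the interesting step is to compute, assuming that the sets $S_\phi$ and $S_\psi$ of configurations satisfying $\phi$ and $\psi$ are already represented as finite unions of intervals of clock values per location, the set of configurations satisfying $\E\phi\U[\cost\sim c]\psi$ and its universal dual. The key structural fact I intend to exploit is that, with a single clock, a run decomposes into \emph{segments between consecutive resets}: inside such a segment, the cost accumulates as an affine function of the entry clock value and of the successive delays, while the locations visited form a path in the underlying graph on which the clock is never reset. I would encode each such between-resets segment by an \emph{abstract path} recording locations, guards, slopes, and entry/exit clock values; the extremal costs compatible with an abstract path are then obtained by a simple linear-programming step over a bounded polytope, and yield piecewise-affine functions of the entry clock value with polynomially many pieces.

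Stitching abstract paths together through resets should first yield an \EXPTIME algorithm: the set of configurations satisfying $\E\phi\U[\cost\sim c]\psi$ is again a finite union of (possibly exponentially many) intervals per location, whose endpoints are computable from the extremal-cost expressions along abstract paths together with the already-computed $S_\phi$ and $S_\psi$. To bring the complexity down to \PSPACE, I would not materialize this partition globally; instead, I would decide satisfaction \emph{pointwise} in the style of~\cite{LMS04}. To check whether a given $(q,v)$ satisfies $\E\phi\U[\cost\sim c]\psi$, one nondeterministically guesses a bounded-length abstract path realizing an adequate cost, verifies on the fly in polynomial space that every visited state lies in $S_\phi$ (save the endpoint, which must lie in $S_\psi$), and delegates the nested modalities to recursive \PSPACE calls. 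Closure of \PSPACE under polynomial-depth recursion, combined with Savitch's theorem to handle the universal quantifier, would then yield the claimed bound.

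The main difficulty I expect is the bookkeeping for costs: within one segment the cost is a sum $\sum_i \cost(q_i)\cdot\delta_i$ constrained by invariants and guards, so the reachable costs form an interval with rational endpoints whose bit-size must be kept polynomial; moreover, $\E\phi\U[\cost\sim c]\psi$ is a \emph{threshold} query on the infimum or supremum of such intervals, which forces a careful case analysis when the extremum is not attained (because of strict inequalities in guards). Ensuring that all such thresholds can be computed and compared in polynomial space, and that the nondeterministic guesses of abstract paths can be truncated at a length polynomial in $|\Aut|+|\phi|$ without altering the answer, is the technical crux of the proof.
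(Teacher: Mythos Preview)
Your hardness argument and the overall shape of your upper-bound argument---decompose runs at resets, compute cost intervals along reset-free segments, then guess a witness on the fly with recursive calls for subformulas---match the paper's strategy closely. The paper carries out the reset-free cost analysis combinatorially rather than via linear programming, but that difference is cosmetic.

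There is, however, a genuine gap in your plan: the claim that witnessing paths can be truncated to length \emph{polynomial} in $|\Aut|+|\phi|$ is false. Already for a single location with cost rate~$1$ and a self-loop resetting the clock at $x=1$, the formula $\E\F[\cost=c]\,p$ forces roughly $c$ resets, which is exponential in the bit-length of~$c$. More subtly, the paper exhibits a one-clock \PPTA and a family of nested \WCTL formulas of size~$O(n)$ whose truth set has $2^n$ interval endpoints, so even the region structure one must traverse is exponentially fine. What the paper proves instead is that (i)~the truth of every subformula is uniform on intervals whose endpoints are multiples of $1/C^{\height\Phi}$, where $C$ is the lcm of the positive location rates and $\height\Phi$ is the nesting depth of constrained modalities, and (ii)~a witness can be assumed to visit each node of the resulting (exponential-size) region graph only boundedly often. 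This yields an \emph{exponential}-length witness whose current step---a region, a step counter, and an accumulated-cost interval---fits in polynomial space, and that is what the on-the-fly \PSPACE argument actually needs.

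Your proposal also leaves unspecified at which clock values the recursive calls for $\phi$ and $\psi$ are invoked. Without a granularity result of the kind above, the intermediate clock values along a guessed witness are a priori arbitrary reals; the bound on denominators is precisely what makes those recursive calls finitely representable and hence implementable in polynomial space. Establishing this granularity (the analogue of the paper's key proposition) is the missing ingredient in your outline.
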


The \PSPACE lower bound can be proved by a direct adaptation of
the \PSPACE-Hardness proof for the model-checking of \TCTL, the restriction of
\WCTL to time constraints, over one-clock timed automata~\cite{LMS04}.

The \PSPACE upper bound is more involved, and will be done in two steps:
\begin{enumerate}
\item first we will exhibit a set of regions which will be correct for
  model-checking \WCTL formulas, see Section~\ref{section:granularite};
\item then we will use this result to propose a \PSPACE algorithm for
  model-checking \WCTL, see Section~\ref{sec:algo}.
\end{enumerate}

Finally, it is worth reminding here that the model-checking of \WCTL over
priced timed automata with three clocks is undecidable~\cite{BBM06}.

\subsection{Sufficient Granularity for \texorpdfstring{\WCTL}{WCTL}}

\label{section:granularite}

The proof of Theorem~\ref{theo:full} partly relies on the following
proposition, which exhibits, for every \WCTL formula $\Phi$, a set of
\emph{regions} within which the truth of $\Phi$ is uniform. Note that
these are not the classical regions as defined in~\cite{AD94,ACD93},
because their granularity needs to be refined in order to be
correct. Computing a sufficient granularity was already a key step for
checking duration properties in simple timed
automata~\cite{BES-lics93}.

\begin{proposition}\label{mainprop}
  Let $\Phi$ be a \WCTL formula and let~\Aut be a one-clock \PPTA. Then there
  exist a finite set of constants $\{a_0,...,a_n\}$
  satisfying the following conditions:  
  \begin{enumerate}[$\bullet$]
  \item $0 = a_0 < a_1 < \ldots < a_n < a_{n+1} = +\infty$;
  \item for every location~$q$ of~\Aut, for every $0 \leq i \leq n$,
  the truth of~$\Phi$ is uniform over $\{(q,x) \mid a_i < x < a_{i+1}\}$;
  %
  \item $\{a_0,...,a_n\}$ contains all the constants appearing in clock
    constraints of~\Aut;
  \item the constants are integral multiples of~$1/C^{\height\Phi}$
    where $\height\Phi$ is the \emph{constrained temporal height}
    of~$\Phi$, \textit{i.e.}, the maximal number of nested constrained
    modalities\footnote{With "constrained modality" we mean a modality
      decorated with a constraining interval different from
      $(0,+\infty)$.}  in~$\Phi$, and~$C$ is the lcm of all positive
    costs labeling a location of~\Aut;
  \item $a_n$ equals 
    the largest constant~$M$ appearing in the guards of~\Aut;
  \end{enumerate}
In particular, we have $n \leq M \cdot C^{\height\Phi}+1$.
\end{proposition}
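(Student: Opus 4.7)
The plan is to proceed by structural induction on~$\Phi$. For atomic propositions the truth depends only on the location, so any partition works; we use $\{0,M\}$, with the interval $(M,+\infty)$ uniform because guards cannot distinguish clock values beyond~$M$. For Boolean connectives the constrained temporal height is unchanged, and it is enough to take the common refinement of the partitions returned by the induction hypothesis on the immediate subformulas.

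The crucial case is a cost-constrained until, say $\Phi=\E\phi\U[\cost\sim c]\psi$; the universal case~$\A$ is treated by a symmetric worst-case argument. Let $h'=\max(\height\phi,\height\psi)$, so that $\height\Phi=h'+1$. By induction hypothesis there is a partition $\{b_0,\dots,b_m\}$ of~$[0,+\infty)$ with constants among the multiples of~$1/C^{h'}$, on whose cells both $\phi$ and $\psi$ are uniform at each fixed location. The idea is to analyse the extremal cost function
\[
F(q,x)\ \eqdef\ \inf\bigl\{\cost(\varrho_{\leq\pi}) \bigm| \varrho \text{ starts in }(q,x) \text{ and witnesses } \phi\U\psi \text{ at position }\pi\bigr\}
\]
(with $\sup$ in place of $\inf$ when $\sim\in\{>,\geq\}$, and correspondingly for the $\A$ case). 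The truth of $(q,x)\models\Phi$ is then read off from the comparison of $F(q,x)$ with~$c$; strict versus non-strict thresholds require only a finite extra bookkeeping of whether the optimum is attained.

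The main technical step is to show that, for every location~$q$, the function $x\mapsto F(q,x)$ is piecewise affine on each cell~$(b_j,b_{j+1})$, with integer slopes drawn from the location rates $\cost(q')$ and with breakpoints that are still multiples of~$1/C^{h'}$. This uses the one-clock structure: along any run, the cost accrued before the first clock reset is affine in~$x$, with slope equal to the current location rate and slope changes occurring only when a guard becomes active at one of the inherited breakpoints; the cost accrued after the first reset no longer depends on~$x$. Taking the extremum over the finitely many strategies of the region graph then yields a piecewise affine function of the announced form, with slopes bounded in absolute value by~$C$.

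Solving the equation $F(q,x)=c$ on each affine piece introduces at most one new breakpoint; since the slope is a nonzero integer dividing~$C$, that new breakpoint has denominator dividing $C\cdot C^{h'}=C^{\height\Phi}$. Adjoining these new breakpoints to the inherited partition (and keeping $M$ as the largest finite constant) yields a partition of the announced granularity, and the count $n\le M\cdot C^{\height\Phi}+1$ follows from the total number of multiples of~$1/C^{\height\Phi}$ in~$[0,M]$. The hard part is the piecewise-affineness claim on~$F$: infima need not be attained (so one must reason with limiting behaviours), cycles in the automaton interact with~$x$ only through the time until the first reset, and the $\A$ case requires a dual worst-case analysis; the key observation taming all of these is that, in a one-clock setting, optimal pre-reset behaviour in each visited location reduces to the choice between leaving immediately or waiting until an outgoing guard saturates, which is what ultimately keeps slopes integral and breakpoints controlled.
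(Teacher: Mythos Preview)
Your reduction to an extremal cost function $F(q,x)$ does not cover the equality constraint. For $\E\phi\U[\cost=c]\psi$ one must decide whether $c$ lies in the \emph{set} of costs of witnessing runs, and that set is in general a finite union of intervals with genuine gaps: if from $(q,0)$ the invariant forces an immediate branch into two incomparable locations with rates $1$ and $10$, each of which must wait until $x=1$ to see~$\psi$, the achievable costs are $\{1\}\cup\{10\}$, and neither the infimum nor the supremum (nor their attainment status) tells you whether $\cost=2$ is feasible. The paper's proof therefore does not work with an optimum at all: it computes, for each target region, the whole set of achievable costs as a (reducible to finite) union of intervals $\langle\alpha_m-\beta_m x,\ \alpha'_m-\beta'_m x\rangle$ with $\alpha_m,\alpha'_m$ multiples of $1/C^{h'}$ and $\beta_m,\beta'_m$ among the location rates; the truth of~$\Phi$ then changes exactly where some endpoint equals~$c$, and since each $\beta_m$ divides~$C$ this immediately gives the $1/C^{\height\Phi}$ granularity.

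Two further points. Your intermediate claim that the breakpoints of $F$ itself lie at multiples of $1/C^{h'}$ is unjustified: breakpoints of a minimum of affine pieces $\alpha_m-\beta_m x$ occur at $(\alpha_m-\alpha_{m'})/(\beta_m-\beta_{m'})$, and a \emph{difference} of two rates need not divide~$C$ (rates $2$ and~$5$ give $C=10$ but difference~$3$). This is harmless for your conclusion---only the crossings with the line $y=c$ matter, and there the denominator is a single rate---but the claim as stated is false. More seriously, the $\A$ case is not a ``symmetric worst-case argument'': $\A\phi\U[\leq c]\psi$ is not the statement that the supremum over runs of some reaching cost is at most~$c$, because along some runs $\phi\U\psi$ may simply fail. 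The paper handles every $\A$ modality by explicit logical equivalences reducing it to $\E\U$ together with two special formulas $\E\G[=c]\texttt{false}$ and $\E\G[\geq c]\texttt{false}$, which are analysed separately; you would need something comparable.
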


As a corollary, we recover the partial decidability result of~\cite{BBR04},
stating that the model-checking of one-clock \PPTA with a \emph{stopwatch
  cost}\footnote{\textit{I.e.}, cost with rates in~$\{0,1\}$.} against \WCTL
formulas is decidable using classical one-dimensional regions of timed
automata (\textit{i.e.},~with granularity~$1$).

\begin{proof}
  The proof of this proposition is by structural induction on~$\Phi$. The
  cases of atomic propositions and boolean combinations are straightforward;
  unconstrained modalities require no refinement of the granularity (the basic
  \CTL algorithm is correct and does not need to refine the granularity); we will thus focus on constrained
  modalities.

  \subsubsection{We first assume that~\Aut{} has no discrete costs}
  (\textit{i.e.}~$\cost(T)=\{0\}$), the extension to the general case will be
  presented at the end of the proof.

  \bigskip 
  \noindent$\blacktriangleright$ \ \textbf{We first focus on the case when $\Phi=\E\phi\U[\cost \sim
    c]\psi$} (we simply write $\Phi=\E\phi\U[\sim c]\psi$, and assume
  that~$\cost$ is the only cost of~$\Aut$, as its other costs play no role in
  the problem).
  Assume that the result has been proved for the \WCTL subformulas $\phi$
  and~$\psi$, and that we have merged all constants for $\phi$ and~$\psi$: we
  thus have constants $0 = a_0 < a_1 < \ldots < a_n < a_{n+1} = +\infty$ such
  that for every location~$q$ of~\Aut, for every $0 \leq i \leq n$, the truth
  of~$\phi$ and that of~$\psi$ are both uniform over $\{(q,x) \mid a_i < x <
  a_{i+1}\}$. By induction hypothesis, the granularity of these constants is
  $1/C^{\max(\height\phi,\height\psi)} = 1/C^{\height\Phi-1}$. We will exhibit
  extra constants such that the above proposition then also holds for the
  formula $\Phi$. For the sake of simplicity, we will call \emph{regions} all
  elementary intervals $(a_i,a_{i+1})$ and singletons $\{a_i\}$. \medskip

  In order to compute the set of states satisfying~$\E\phi \U[\sim c] \psi$,
  for every state~$(q,x)$ we compute all costs of paths from~$(q,x)$ to some
  region~$(q',r)$, along which~$\phi$ always
  holds after a discrete action has been done, and such that a $\psi$-state
  can immediately be reached via a discrete action from~$(q',r)$.
  We then check whether we can achieve a cost satisfying~``$\sim c$'' for the
  mentioned $\psi$-state. We thus first explain how we compute the set of
  possible costs between a state~$(q,x)$ and a region~$(q',r)$ in~\Aut.
  Indeed, for checking the existence of a run satisfying $\phi \U[\sim c]
  \psi$, we will first remove discrete transitions leading to states not
  satisfying $\phi$, and then compute all possible costs of runs from $(q,x)$
  to some $(q',r)$, where $(q',r)$ is a $\psi$-state just reached by a
  discrete action, in the restricted graph. 

  For each index~$i$, we restrict the automaton~\Aut to transitions
  whose guards contain the interval~$(a_i,a_{i+1})$, and that do not
  reset the clock. We denote by~$\Aut_i$ this restricted
  automaton. Let~$q$ and~$q'$ be two locations of~$\Aut_i$. As~stated
  by the following lemma, the set of costs of paths between $(q,a_i)$
  and~$(q',a_{i+1})$ is an interval that can be easily computed:
  \begin{lemma}\label{lemma4.2}
    We assume~$a_{i+1} \neq +\infty$. Let $S_i(q,q')$ be the set of locations
    that are reachable from~$(q,a_i)$ and co-reachable from~$(q',a_{i+1})$
    in~$\TTS_{\Aut_i}$, and assume it is non-empty (\emph{i.e.}, there is a
    path joining those two states). Let $c^{i,q,q'}_{\min}$
    and $c^{i,q,q'}_{\vphantom{\min}\max}$ be the minimum and maximum costs
    among the costs of locations in~$S_i(q,q')$.
    Then the set of all possible costs of paths in $\TTS_{\Aut_i}$ going
    from~$(q,a_i)$ to~$(q',a_{i+1})$ is an interval $\langle
    (a_{i+1}-a_i)\cdot c^{i,q,q'}_{\min}, (a_{i+1}-a_i)\cdot
    c^{i,q,q'}_{\vphantom{\min}\max}\rangle$. The interval is left-closed iff
    there exist two locations $r$ and~$s$ (with possibly~$r=s$) in~$S_i(q,q')$
    with cost~$c^{i,q,q'}_{\min}$ such that\footnotemark{} $(q,a_i)
    \leadsto^*_{\Aut_i} (r,a_i)$, $(r,a_i) \leadsto^*_{\Aut_i} (s,a_{i+1})$,
    and $(s,a_{i+1}) \leadsto^*_{\Aut_i} (q',a_{i+1})$. The interval is
    right-closed iff there exists two locations $r$ and $s$ in $S_i(q,q')$
    with cost $c^{i,q,q'}_{\vphantom{\min}\max}$ such that $(q,a_i)
    \leadsto^*_{\Aut_i} (r,a_i)$, $(r,a_i) \leadsto^*_{\Aut_i} (s,a_{i+1})$,
    and $(s,a_{i+1}) \leadsto^*_{\Aut_i} (q',a_{i+1})$.
  \end{lemma}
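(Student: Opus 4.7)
The key structural facts to exploit are that $\Aut_i$ has no clock resets, so the clock rises monotonically from $a_i$ to $a_{i+1}$ along any path from $(q,a_i)$ to $(q',a_{i+1})$, and that we have assumed no discrete costs at this stage of the proof, so cost accrues only through delays. Consequently, the cost of any such path equals $\sum_\ell t_\ell\cdot\cost(\ell)$, where the sum ranges over visited locations and $\sum_\ell t_\ell = a_{i+1}-a_i$. Every location traversed with positive delay is reachable from $(q,a_i)$ and co-reachable from $(q',a_{i+1})$, hence lies in $S_i(q,q')$, so $\cost(\ell) \in [c^{i,q,q'}_{\min}, c^{i,q,q'}_{\max}]$. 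Viewing the weights $t_\ell/(a_{i+1}-a_i)$ as a convex combination of location costs gives immediately the inclusion of the cost-set in $[(a_{i+1}-a_i)\cdot c^{i,q,q'}_{\min}, (a_{i+1}-a_i)\cdot c^{i,q,q'}_{\max}]$.

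To prove the cost-set is actually an interval (density), I would proceed by a continuous interpolation argument. Given any two paths achieving costs $c<c'$, I construct, for every $\alpha\in[0,1]$, a path of cost $\alpha c + (1-\alpha)c'$ by continuously shifting delay between low-cost and high-cost locations. Because guards of $\Aut_i$ contain the open interval $(a_i,a_{i+1})$ and delays are real-valued, small rearrangements of delay are always admissible, and the cost depends continuously on them. A concrete realization follows the first path up to some intermediate time $t^\star$, then via a zero-delay sequence of discrete transitions switches to a state of the second path and finishes along it; varying $t^\star$ sweeps the cost continuously over $[c,c']$.

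For the boundary analysis (the right-closure case is symmetric), the easier direction $\Rightarrow$ is: if $(a_{i+1}-a_i)\cdot c^{i,q,q'}_{\min}$ is actually attained by some path $\pi$, then every location visited with positive delay on $\pi$ must have cost exactly $c^{i,q,q'}_{\min}$. Taking $r$ to be the first such min-cost location encountered (after the possibly empty zero-delay prefix at $(q,a_i)$) and $s$ the last such one (before the zero-delay suffix reaching $(q',a_{i+1})$) yields the three required reachability witnesses. Conversely, given $r,s$ of cost $c^{i,q,q'}_{\min}$ satisfying the three reachability conditions, I would concatenate a zero-delay run $(q,a_i)\leadsto^*(r,a_i)$, a reshuffling of $(r,a_i)\leadsto^*(s,a_{i+1})$ in which all delay is forced into min-cost locations, and a zero-delay run $(s,a_{i+1})\leadsto^*(q',a_{i+1})$.

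The main obstacle I expect is constructing that middle all-min-cost segment from the bare assumption that some path $(r,a_i)\leadsto^*(s,a_{i+1})$ exists. The cleanest approach is to observe that every location along such a path is in $S_i(q,q')$, hence has cost $\geq c^{i,q,q'}_{\min}$, so any detour through a strictly larger-cost location can only raise the total cost; one then contracts the sojourn in those intermediate locations to zero and absorbs the corresponding delay into $r$ or $s$, whose invariants must tolerate the extended sojourn precisely because the witnessing sequence of discrete transitions in $\Aut_i$ carries guards containing the whole interval $(a_i,a_{i+1})$. Once this reshuffling is justified, both closure characterizations follow, and the lemma is complete.
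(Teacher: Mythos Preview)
Your convex-combination bound for the inclusion of the cost-set in $[(a_{i+1}-a_i)c^{i,q,q'}_{\min},(a_{i+1}-a_i)c^{i,q,q'}_{\max}]$ is correct and matches the paper, and your closure analysis is in fact more explicit than the paper's one-line treatment.

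The gap is in your convexity (density) argument. Your ``concrete realization'' --- follow the first path up to time~$t^\star$, then switch via a zero-delay sequence of discrete transitions to a state of the second path and finish along it --- is not generally available: at clock value $a_i+t^\star$ the two paths may sit in locations of $\Aut_i$ that are not connected by any zero-delay sequence of transitions. The preceding sentence about ``continuously shifting delay between low-cost and high-cost locations'' is valid \emph{within a fixed transition sequence}, but does not by itself show that the cost-intervals arising from two different transition sequences overlap, which is what you need to conclude that the global cost-set is a single interval.

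The paper closes this gap with an observation you can adopt directly. Fix a transition sequence $\tau_{\min}$ from $(q,a_i)$ to $(q',a_{i+1})$ passing through some location of cost $c^{i,q,q'}_{\min}$, and a sequence $\tau_{\max}$ through a location of cost $c^{i,q,q'}_{\max}$. By your delay-redistribution argument, each yields an interval of achievable costs, with left endpoint $(a_{i+1}-a_i)c^{i,q,q'}_{\min}$ for $\tau_{\min}$ and right endpoint $(a_{i+1}-a_i)c^{i,q,q'}_{\max}$ for $\tau_{\max}$. Now both sequences begin in $q$ and end in $q'$, so in \emph{each} of them the timing ``spend $\tfrac12(a_{i+1}-a_i)$ in $q$, take all intermediate transitions instantaneously, spend the remaining $\tfrac12(a_{i+1}-a_i)$ in $q'$'' is available and realises the common value $\tfrac12(\cost(q)+\cost(q'))(a_{i+1}-a_i)$. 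Hence the two intervals overlap, and the union is a single interval with the claimed endpoints.

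One small correction in your closure paragraph: you justify absorbing all delay into $r$ or $s$ by saying their ``invariants must tolerate the extended sojourn precisely because the witnessing sequence of discrete transitions in $\Aut_i$ carries guards containing the whole interval $(a_i,a_{i+1})$''. Guards constrain when transitions fire, not how long one may sojourn in a location; what you actually need is that invariants are constant on the elementary interval $(a_i,a_{i+1})$, which holds because the constants $a_i$ already include every constant appearing in the clock constraints of~$\Aut$.
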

  \footnotetext{The notation $\alpha \leadsto^*_{\Aut_i} \alpha'$ means that
    there is a path in~$\TTS_{\Aut_i}$ from $\alpha$ to~$\alpha'$.}

  The conditions on left\slash right-closures characterize the fact
  that it is possible to instantaneously reach\slash leave a location
  with minimal\slash maximal cost, or if a small positive delay has to
  elapse (due to a strict guard).

  \begin{proof}
    Obviously the costs of all paths in $\TTS_{\Aut_i}$ from $(q,a_i)$ to
    $(q',a_{i+1})$ belong to the interval $(a_{i+1}-a_i)\cdot
    [c^{i,q,q'}_{\min}, c^{i,q,q'}_{\vphantom{\min}\max}]$. We will now prove
    that the set of costs is an interval containing $(a_{i+1}-a_i)\cdot
    (c^{i,q,q'}_{\min}, c^{i,q,q'}_{\vphantom{\min}\max})$.

    \begin{figure}[!ht]
      \begin{center}
        \begin{tikzpicture}
          \draw[dashed,rounded corners=4mm] (2,1) --
          (2.2,1.9) node[midway,coordinate] (a2) {} -- (3.7,2.3) -- (5,2.5) --
          (5.7,1.9) -- (6,1.1) node[coordinate] (a5) {}
          node[pos=.3,coordinate] (a3) {} -- (5.8,.1)  
          node[pos=.6,coordinate] (a4) {} --
          (4.3,-.4) -- (2.2,-.4) -- (2,1) node[midway,coordinate] (a1) {};
          \path (0,1) node[draw=black,circle,fill=white!70!black] (A) 
          {\hbox to 20pt{\hss$\scriptstyle q,a_i$\hss}};
          \path (8,1) node[draw=black,circle,fill=white!70!black] (B) 
          {\hbox to 20pt{\hss$\scriptstyle q',a_{i+1}$\hss}};
          \path (3.7,1.8) node[draw=black,circle] (C) {};
          \path (5,.6) node[draw=black,circle,fill=white!20!black] (D) {};
          \draw (7,2.4) circle(1.5mm) node[right=3mm]
            {\hbox to 60pt{\small maximal cost\hfil}};
          \draw[fill=white!20!black] (7,2) circle(1.5mm) node[right=3mm]
            {\hbox to 60pt{\small minimal cost\hfil}};
          \path (3,.9) node[draw,circle,fill=white!70!black] (E) {};
          \path (3.6,.3) node[draw,circle,fill=white!70!black] (F) {};
          \path (4.6,1.6) node[draw,circle,fill=white!70!black] (G) {};
          \draw[-latex'] (a2) -- (C);
          \draw[-latex'] (C) -- (G); \draw[-latex'] (G) -- (a5);
          \draw[-latex'] (a1) -- (E); \draw[-latex'] (E) -- (D);
          \draw[-latex'] (D) -- (a4);
          \draw[-latex'] (A) -- (a1);
          \draw[-latex'] (A) -- (a2);
          \draw[dashed,-latex'] (a1) -- +(-10:7mm);
          \draw[dashed,-latex'] (a1) -- +(10:7mm);
          \draw[dashed,-latex'] (a2) -- +(-10:7mm);
          \draw[-latex'] (a3) -- (B);
          \draw[-latex'] (a4) -- (B);
          \draw[-latex'] (a5) -- (B);
          \draw[latex'-,dashed] (a3) -- +(-170:7mm);
          \draw[latex'-,dashed] (a3) -- +(-190:7mm);
          \draw[latex'-,dashed] (a4) -- +(-170:7mm);
          \draw[latex'-,dashed] (a5) -- +(-170:7mm);
        \end{tikzpicture}
      \end{center}
      \caption{The set of costs between two states is an
        interval.}\label{fig-interval}
    \end{figure}
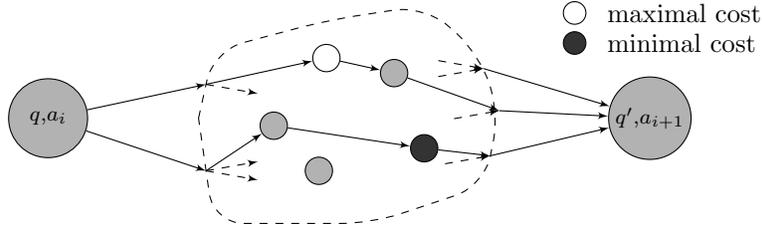
    Let $\tau_{\min}$ (resp.~$\tau_{\vphantom{\min}\max}$) be a sequence of
    transitions in $\Aut_i$ leading from $(q,a_i)$ to~$(q',a_{i+1})$ and going
    through a location with minimal (resp.~maximal) cost
    (see~Figure~\ref{fig-interval}). Easily enough, the possible costs of the
    paths following~$\tau_{\min}$ (resp.~$\tau_{\vphantom{\min}\max}$) form an
    interval whose left (resp.~right) bound is $c^{i,q,q'}_{\min} \cdot
    (a_{i+1}-a_i)$ (resp.~$c^{i,q,q'}_{\vphantom{\min}\max} \cdot
    (a_{i+1}-a_i)$).

    Now, if $c$ and $c'$ are the respective costs of $q$ and~$q'$, then
    $\frac{1}{2} \cdot (c+c') \cdot (a_{i+1}-a_i)$ is in both intervals.
    Indeed, the path following~$\tau_{\min}$
    (resp.~$\tau_{\vphantom{\min}\max}$) which delays
    $\frac{1}{2}\cdot(a_{i+1}-a_i)$ time units in~$q$, then directly goes
    to~$q'$ and waits there for the remaining $\frac{1}{2}\cdot(a_{i+1}-a_i)$
    time units achieves the above-mentioned cost. This implies that the set of
    all possible costs is an interval.

    The bound $c^{i,q,q'}_{\min} \cdot (a_{i+1}-a_i)$ is reached iff there is
    a path from $(q,a_i)$ to~$(q',a_{i+1})$ which delays only in locations
    with cost~$c^{i,q,q'}_{\min}$. This is precisely the condition expressed
    in the lemma. The same holds for the upper bound
    $c^{i,q,q'}_{\vphantom{\min}\max} \cdot (a_{i+1}-a_i)$.
  \end{proof}

  Similar results clearly hold for other kinds of regions:
  \begin{enumerate}[$\bullet$]
  \item between a state~$(q,a_i)$ and a region~$(q',(a_i,a_{i+1}))$
    with~$a_{i+1}\not=+\infty$, the set of possible costs is an
    interval~$\langle 0,c^{i,q,q'}_{\vphantom{\min}\max}\cdot (a_{i+1}-a_i))$,
    where~$0$ can be reached iff it is possible to go from~$(q,a_i)$ to some
    state~$(q'',a_i)$ co-reachable from $(q',x)$ for some $x \in
    (a_i,a_{i+1})$, and~$\cost(q'')=0$.
  \item between a state~$(q,x)$, with~$x\in (a_i,a_{i+1})$,
    and~$(q',a_{i+1})$, the set of costs is~$(a_{i+1}-x) \cdot \langle
    c^{i,q,q'}_{\min}, c^{i,q,q'}_{\vphantom{\min}\max} \rangle$, with similar
    conditions as above for the bounds of the interval.
  \item between a state~$(q,x)$, with~$x\in (a_i,a_{i+1})$, and
    region~$(q',(a_i,a_{i+1}))$ (assuming~$a_{i+1}\not=+\infty$), the set of
    possible costs is $[0,c^{i,q,q'}_{\vphantom{\min}\max}\cdot (a_{i+1}-x))$;
  \item between a state~$(q,a_n)$ and a region~$(q',(a_n,+\infty))$,
    the set of possible costs is either~$[0,0]$, if
    no positive cost rate is reachable and co-reachable, or~$\langle
    0,+\infty)$ otherwise. If the latter case, $0$~can be achieved iff it is
    possible to reach a state~$(q'',a_n)$ with~$\cost(q'')=0$;
  \item between a state~$(q,x)$ with~$x\in (a_n,+\infty)$
    and a region~$(q',(a_n,+\infty))$, the set of costs
    is either~$[0,0]$ or~$[0,+\infty)$, with the same conditions as
    previously.
  \end{enumerate}

  We use these computations and build a graph~$G$ 
  labeled by intervals which will store all possible costs between symbolic
  states (\textit{i.e.},~pairs~$(q,r)$, where $q$ is a location and $r$ a
  region) in~$\TTS_{\Aut}$. Vertices of~$G$ are pairs~$(q,\{a_i\})$ and
  $(q,(a_i,a_{i+1}))$, and tuples $(q,x,\{a_i\})$ and $(q,x,(a_i,a_{i+1}))$,
  where $q$ is a location of $\Aut$.  Their roles are as follows:
  vertices of the form~$(q,x,r)$ are used to initiate a computation,
  they represent a state~$(q,x)$ with~$x\in r$.  States~$(q,\{a_i\})$
  are ``regular'' steps in the computation, while
  states~$(q,(a_i,a_{i+1}))$ are used either for finishing a
  computation, or just before resetting the clock (there will be no
  edge from~$(q,(a_i,a_{i+1}))$ to any~$(q',\{a_{i+1}\})$).

  Edges~of~$G$~are defined as follows:\label{graph_construction}
  \begin{enumerate}[$\bullet$]
  \item $(q,\{a_i\}) \rightarrow (q',\{a_{i+1}\})$ if there is a path from
    $(q,a_i)$ to~$(q',a_{i+1})$. This edge is then labeled with an interval
    $\langle (a_{i+1}-a_i)\cdot c^{i,q,q'}_{\min}, (a_{i+1}-a_i) \cdot
    c^{i,q,q'}_{\vphantom{\min}\max}\rangle$, the nature of the interval
    (left-closed and\slash or right-closed) depending on the criteria exposed
    in Lemma~\ref{lemma4.2}.
  \item $(q,\{a_i\}) \rightarrow (q',\{a_i\})$ if there is an instantaneous
    path from $(q,a_i)$ to $(q',a_i)$ in \Aut, the edge is then labeled with
    the interval $[0,0]$ (because we assumed there are no discrete costs
    on transitions of~$\Aut$).
  \item $(q,\{a_i\}) \rightarrow (q',\{a_0\})$ if there is a transition
    in~$\Aut$ enabled when the value of the clock is~$a_i$ and resetting the
    clock. It is labeled with~$[0,0]$.
  \item $(q,(a_i,a_{i+1})) \rightarrow (q',\{a_0\})$ if there is a transition
    in~$\Aut$ enabled when the value of the clock is in $(a_i,a_{i+1})$ and
    resetting the clock. It~is labeled with~$[0,0]$.
  \item $(q,\{a_i\}) \rightarrow (q',(a_i,a_{i+1}))$ if there is a path
    from~$(q,a_i)$ to some $(q',\alpha)$ with $a_i < \alpha < a_{i+1}$. This
    edge is labeled with the interval $\langle 0, (a_{i+1}-a_i)\cdot
    c^{i,q,q'}_{\vphantom{\min}\max})$.
  \item $(q,x,\{a_i\}) \rightarrow (q,\{a_i\})$ labeled with~$[0,0]$.
  \item $(q,x,(a_i,a_{i+1})) \rightarrow (q',\{a_{i+1}\})$ if there is a path
    from some~$(q,\alpha)$ with~$a_i < \alpha < a_{i+1}$ to~$(q',a_{i+1})$.
    This edge is labeled with $(a_{i+1}-x)\cdot \langle c^{i,q,q'}_{\min},
    c^{i,q,q'}_{\vphantom{\min}\max}\rangle$.
  \item $(q,x,(a_i,a_{i+1})) \rightarrow (q',(a_i,a_{i+1}))$ labeled
    with $[0, (a_{i+1}-x)\cdot c^{i,q,q'}_{\vphantom{\min}\max})$.
  \end{enumerate}

  Figure~\ref{fig-graph}
  represents one part of this graph. Note that each path~$\pi$ of this
  graph is naturally associated with an interval~$\iota(\pi)$
  (possibly depending on variable~$x$ if we start from a node
  $(q,x,(a_i,a_{i+1}))$) by summing up all intervals labeling
  transitions of $\pi$. 
\begin{figure}[!ht]
\centering
\begin{tikzpicture}
\everymath{\scriptstyle}
\draw (1,0.5) node[draw,rounded corners=1mm] (Ax) {$q,x,\{0\}$};
\draw (4,0.5) node[draw,rounded corners=1mm] (Bx) {$q,x,\{a_i\}$};
\draw (6.5,0.5) node[draw,rounded corners=1mm] (Cx) {$q,x,(a_i;a_{i+1})$};
\draw (9,0.5) node[draw,rounded corners=1mm] (Dx) {$q,x,\{a_{i+1}\}$};
\draw (1,-.1) node[draw,rounded corners=1mm] (A'x) {$q',x,\{0\}$};
\draw (4,-.1) node[draw,rounded corners=1mm] (B'x) {$q',x,\{a_i\}$};
\draw (6.5,-.1) node[draw,rounded corners=1mm] (C'x) {$q',x,(a_i;a_{i+1})$};
\draw (9,-.1) node[draw,rounded corners=1mm] (D'x) {$q',x,\{a_{i+1}\}$};
\draw (1,-.7) node {...};
\draw[dotted,rounded corners=2mm] (.3,.8) -- (1.7,.8) -- (1.7,-1) -- (.3,-1) -- cycle;
\draw (4,-.7) node {...};
\draw[dotted,rounded corners=2mm] (3.2,.8) -- (4.8,.8) -- (4.8,-1) -- (3.2,-1) -- cycle;
\draw (6.5,-.7) node {...};
\draw[dotted,rounded corners=2mm] (5.5,.8) -- (7.5,.8) -- (7.5,-1) -- (5.5,-1) -- cycle;
\draw (9,-.7) node {...};
\draw[dotted,rounded corners=2mm] (8.1,.8) -- (9.9,.8) -- (9.9,-1) -- (8.1,-1) -- cycle;
\draw (1,-2) node[draw,rounded corners=1mm] (A) {$q,\{0\}$};
\draw (4,-2) node[draw,rounded corners=1mm] (B) {$q,\{a_i\}$};
\draw (6.5,-2) node[draw,rounded corners=1mm] (C) {$q,(a_i;a_{i+1})$};
\draw (9,-2) node[draw,rounded corners=1mm] (D) {$q,\{a_{i+1}\}$};
\draw (1,-3) node[draw,rounded corners=1mm] (A') {$q',\{0\}$};
\draw (4,-3) node[draw,rounded corners=1mm] (B') {$q',\{a_i\}$};
\draw (6.5,-3) node[draw,rounded corners=1mm] (C') {$q',(a_i;a_{i+1})$};
\draw (9,-3) node[draw,rounded corners=1mm] (D') {$q',\{a_{i+1}\}$};
\draw (1,-3.7) node {...};
\draw[dotted,rounded corners=2mm] (.3,-1.7) -- (1.7,-1.7) -- (1.7,-4.3) -- (.3,-4.3) -- cycle;
\draw (4,-3.7) node {...};
\draw[dotted,rounded corners=2mm] (3.2,-1.7) -- (4.8,-1.7) -- (4.8,-4.3) -- (3.2,-4.3) -- cycle;
\draw (6.5,-3.7) node {...};
\draw[dotted,rounded corners=2mm] (5.5,-1.7) -- (7.5,-1.7) -- (7.5,-4.3) -- (5.5,-4.3) -- cycle;
\draw (9,-3.7) node {...};
\draw[dotted,rounded corners=2mm] (8.1,-1.7) -- (9.9,-1.7) -- (9.9,-4.3) -- (8.1,-4.3) -- cycle;
\draw[dashed,rounded corners=4mm] (0,-1.5) -- (11,-1.5) -- (11,-5) -- (0,-5) -- cycle;
%
\draw[rounded corners=1mm,-latex'] (1,-1) -- (1,-1.7);
\draw[rounded corners=1mm,-latex'] (4,-1) -- (4,-1.7);
\draw[rounded corners=1mm,-latex'] (6.5,-1) -- (6.5,-1.7);
\draw[rounded corners=1mm,-latex'] (6.8,-1) -- ++(-90:3mm) -| (8.7,-1.7);
\draw[rounded corners=1mm,-latex'] (9,-1) -- (9,-1.7);
%
\draw[dashed] (1.7,-3) -- +(0:5mm);
\draw[dashed,latex'-] (3.2,-3) -- +(180:5mm);
\draw[rounded corners=2mm,dashed] (1.1,-4.3) -- ++(-90:3mm) -- +(0:8mm);
\draw[rounded corners=2mm,dashed,latex'-] (3.7,-4.3) -- ++(-90:3mm) -- +(180:8mm);
%
\draw[-latex'] (4.8,-3) -- (5.5,-3);
\draw[dashed] (9.9,-3) -- +(0:5mm);
\draw[rounded corners=2mm,-latex'] (4.1,-4.3) -- +(-90:3mm) -| (8.9,-4.3);
\draw[rounded corners=2mm,dashed] (9.1,-4.3) -- ++(-90:3mm) -- +(0:8mm);
%
\draw[rounded corners=2mm,-latex'] (3.9,-4.3) -- +(-90:5mm) -| (.9,-4.3);
\draw[rounded corners=2mm,-latex'] (6.5,-4.3) -- +(-90:5mm) -| (.9,-4.3);
\end{tikzpicture}
\caption{(Schematic) representation of the graph $G$ (intervals labeling
  transitions have been omitted to improve readability)}\label{fig-graph}
\end{figure}


  The correctness of graph $G$ w.r.t.~costs is stated by the following lemma,
  which is a direct consequence of the previous investigations.

  \begin{lemma}\label{lemma4.5}
    Let $q$ and $q'$ be two locations of $\Aut$. Let $r$ and $r'$ be
    two regions, and let $\alpha \in r$.
    Let~$d\in\R^+$. There exists a path $\pi$ in~$G$ from a
    state~$(q,x,r)$ to~$(q',r')$ with  $\iota(\pi)(\alpha)\ni d$
    if, and only if, there is a path in~$\TTS_{\Aut}$ with total cost~$d$, and
    going from~$(q,\alpha)$ to some~$(q',\beta)$ with~$\beta\in r'$.
  \end{lemma}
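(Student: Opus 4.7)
The proof goes by double implication, and rests on the observation that every path in $\TTS_{\Aut}$ can be decomposed uniquely into maximal \emph{region segments}: sub-paths during which the clock value stays in a single region (open interval $(a_i,a_{i+1})$ or singleton $\{a_i\}$) and no reset takes place. Such segments are separated by three kinds of ``events'' in $\Aut$: reaching the right endpoint of an open region, leaving a singleton $\{a_i\}$ via a positive delay, or firing a reset transition. By construction of the bullet-points defining the edges of $G$ (page~\pageref{graph_construction}), each kind of region segment matches exactly one type of edge of $G$, and the interval labelling that edge is, by Lemma~\ref{lemma4.2} and its stated variants, precisely the set of achievable costs for the corresponding segment (with~$x$ instantiated to the entry value~$\alpha$ of the clock for the first segment).

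For the direction ``there is a concrete path~$\Rightarrow$ there is a path in~$G$'', I would take a path in~$\TTS_{\Aut}$ from $(q,\alpha)$ to some $(q',\beta)$ with $\beta\in r'$ and cost~$d$, decompose it into its region segments, and read off an edge of~$G$ from each segment: the first one starts from a node $(q,x,r)$, subsequent ones from singleton nodes $(q_j,\{a_{i_j}\})$ or from the reset node $(q_j,\{a_0\})$, and the last one ends in a node of shape $(q',r')$. This produces a path~$\pi$ of~$G$, and since each segment cost lies in the interval of the corresponding edge, their sum $d$ lies in~$\iota(\pi)(\alpha)$.

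For the converse, given $\pi=e_1\dots e_k$ in~$G$ and $d\in \iota(\pi)(\alpha)$, I would decompose $d$ as $d=\sum_j d_j$ with $d_j$ in the interval labelling $e_j$ (at $x:=\alpha$ for~$e_1$). This Minkowski-sum decomposition is valid because each label is an interval, so any value in the sum interval splits into values in the summands. By Lemma~\ref{lemma4.2} and the bullet-pointed variants stated just before the construction of~$G$, each $d_j$ is realised by a concrete region segment in~$\TTS_{\Aut}$ between the endpoints prescribed by~$e_j$; concatenating these segments gives a path in~$\TTS_{\Aut}$ from~$(q,\alpha)$ to some~$(q',\beta)$ with~$\beta\in r'$ and total cost~$d$.

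The one point that needs genuine care is the boundary behaviour: when the path reaches $a_{i+1}$ from below, the decomposition must decide whether the path truly crosses into the singleton $\{a_{i+1}\}$ or terminates strictly before, and the closure types of the edge intervals (e.g.\ the half-open $\langle 0,(a_{i+1}-x)\cdot c^{i,q,q'}_{\max})$ labelling the edges into $(q',(a_i,a_{i+1}))$, or the left\slash right closure clauses of Lemma~\ref{lemma4.2}) must agree with what is actually achievable in $\Aut$. This is precisely what the case analysis in Lemma~\ref{lemma4.2} and in the bullet list preceding the construction was designed for; checking each type of edge matches an achievable segment in~$\TTS_{\Aut}$ (strict vs.~non-strict, zero-cost vs.~positive-cost) is the main bookkeeping obstacle, but involves no new idea beyond Lemma~\ref{lemma4.2}.
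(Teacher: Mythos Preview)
Your proposal is correct and follows the natural line the paper intends: in the paper, Lemma~\ref{lemma4.5} is not given a standalone proof but is simply declared ``a direct consequence of the previous investigations'', i.e.\ of Lemma~\ref{lemma4.2}, its bulleted variants, and the construction of~$G$. Your region-segment decomposition, the edge-by-edge correspondence, and the Minkowski-sum splitting of~$d$ are exactly the details one would write out to justify that one-line claim, and your identification of the boundary/closure bookkeeping as the only genuine care point is accurate.
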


  \begin{corollary}\label{coro-8}
    Fix two regions $r$ and~$r'$. Then the set of possible costs of
    paths in~$G$ from $(q,x,r)$ to~$(q',r')$ is of the form
    \[
    \bigcup_{m\in\Nat} \langle \alpha_m - \beta_m\cdot x, \alpha'_m -
    \beta'_m\cdot x\rangle
    \]
    (possibly with $\beta_m$ and\slash or $\beta'_m=0$, and\slash or
    $\alpha'_m=+\infty$).
    Moreover, 
    \begin{enumerate}[$\bullet$]
    \item all constants~$\alpha_m$ and~$\alpha'_m$ are either integral
      multiples of~$1/C^{\max(\height\phi,\height\psi)}$ or~$+\infty$,
      and constants~$\beta_m$ and~$\beta'_m$ are either costs of the
      automaton or~$0$;
    \item if~$r=(a_n,+\infty)$, then $\beta_m=\beta'_m=0$ for all~$m$.
    \end{enumerate}
  \end{corollary}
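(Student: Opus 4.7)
The plan is to analyze the structure of an arbitrary path $\pi$ in $G$ starting at a vertex of the form $(q,x,r)$ and ending at $(q',r')$, and to compute $\iota(\pi)$ by summing the interval labels of the edges of $\pi$.

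First I would observe, by inspecting the list of edges defined in the construction on page~\pageref{graph_construction}, that there are exactly three kinds of edges leaving a vertex of the form $(q,x,r)$: the trivial edge $(q,x,\{a_i\})\to(q,\{a_i\})$ labelled $[0,0]$; the edge $(q,x,(a_i,a_{i+1}))\to(q',\{a_{i+1}\})$ labelled $(a_{i+1}-x)\cdot\langle c^{i,q,q'}_{\min},c^{i,q,q'}_{\max}\rangle$; and the edge $(q,x,(a_i,a_{i+1}))\to(q',(a_i,a_{i+1}))$ labelled $[0,(a_{i+1}-x)\cdot c^{i,q,q'}_{\max})$. In particular, only the first edge of $\pi$ may carry a label that depends on $x$, and this dependence is linear of the form $u-v\cdot x$ where $u$ is an integral multiple of $1/C^{\max(\height\phi,\height\psi)}$ (since $a_{i+1}$ is and $v$ is an integer cost) and $v\in\{0\}\cup\{\cost(\ell)\mid\ell\in\Loc\}$.

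Next I would sum the remaining edge labels. All of them are of the form $[0,0]$, $\langle(a_{i+1}-a_i)\cdot c^{i,q,q'}_{\min},(a_{i+1}-a_i)\cdot c^{i,q,q'}_{\max}\rangle$, $\langle 0,(a_{i+1}-a_i)\cdot c^{i,q,q'}_{\max})$, $[0,0]$ (reset edges), or $[0,+\infty)$ (for the unbounded region). In every case the finite endpoints are integral multiples of $1/C^{\max(\height\phi,\height\psi)}$ because $a_{i+1}-a_i$ is such a multiple by the induction hypothesis and all costs are integers. Since the Minkowski sum of a finite family of (half-closed) nonnegative intervals is again an interval whose endpoints are the sums of the respective endpoints, the total label $\iota(\pi)$ is an interval of the form $\langle \alpha-\beta\cdot x,\alpha'-\beta'\cdot x\rangle$ with $\alpha,\alpha'$ in $(1/C^{\max(\height\phi,\height\psi)})\cdot\Nat\cup\{+\infty\}$ and $\beta,\beta'$ either $0$ or a cost of $\Aut$.

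Then the claim is obtained by taking the union over all paths $\pi$ in $G$ from $(q,x,r)$ to $(q',r')$; since $G$ is countable, this yields the desired $\Nat$-indexed union. Finally, for the special case $r=(a_n,+\infty)$, I~would note that once the clock value exceeds $a_n$ it can never go back (no edge of $G$ goes from $(\cdot,(a_n,+\infty))$ to a lower region, and the clock is never reset to a value above $a_n$), so every edge of every such path $\pi$ carries a label independent of $x$; hence $\beta=\beta'=0$.

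The only mildly delicate point, and the one I would write out most carefully, is the bookkeeping of open versus closed endpoints in the Minkowski sum and the verification that the first-edge label always admits the claimed affine form in $x$; the rest is a direct consolidation of the edge-label inventory already established in Lemma~\ref{lemma4.2} and its surrounding remarks.
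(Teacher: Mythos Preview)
Your approach is essentially the same as the paper's: both proofs rest on the observation that at most one edge along any path in~$G$ (namely the initial one out of a $(q,x,r)$-vertex) carries an $x$-dependent label, and that all other labels are intervals whose endpoints are integral combinations of costs times region widths~$a_{i+1}-a_i$, hence multiples of $1/C^{\max(\height\phi,\height\psi)}$. The paper is terser (it indexes the countable union by path \emph{length} rather than by path), but the content is the same.

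One point to correct: your justification of the last bullet is not right. You claim that ``no edge of~$G$ goes from $(\cdot,(a_n,+\infty))$ to a lower region'', but this is false: the construction explicitly includes reset edges $(q,(a_i,a_{i+1}))\to(q',\{a_0\})$ for any~$i$, including $i=n$. So a path starting in $(q,x,(a_n,+\infty))$ can perfectly well visit lower regions. The correct (and simpler) reason that $\beta_m=\beta'_m=0$ when $r=(a_n,+\infty)$ is exactly the one you already set up earlier in your argument: only the \emph{first} edge could depend on~$x$, and for $r=(a_n,+\infty)$ the only available first edge is $(q,x,(a_n,+\infty))\to(q',(a_n,+\infty))$, whose label is $[0,0]$ or $[0,+\infty)$ according to the bullet points following Lemma~\ref{lemma4.2}---in either case independent of~$x$. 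Replace your last paragraph's reasoning with this observation and the proof is complete.
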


  \begin{proof}
    Applying Lemma~\ref{lemma4.5}, the union of the costs of all paths in~$G$
    from $(q,x,r)$ to $(q',r')$ represents the set of all possible costs of
    paths in~$\TTS_{\Aut}$ from~$(q,\alpha)$ with $\alpha \in r$ to
    some~$(q',\beta)$ with~$\beta \in r'$.
    This set can be written as the countable union, for each~$m\in \Nat$, of
    the costs of paths of length~$m$ in~$G$, thus a countable union of (a
    finite union of) intervals. Now, any path in~$G$ contains at most one
    transition issued from a state~$(q,x,r)$. Thus, coefficients~$\beta_m$ are
    either~$0$, or the cost of some location of~$\Aut$.

    Coefficients~$\alpha_m$ are then integral combinations of terms of the
    form $c \cdot (a_{i+1}-a_i)$ where $c$~is the cost of some location.
    As~all~$a_i$'s are integral multiples of
    $1/C^{\max(\height\phi,\height\psi)}$, we get what we expected. 
    The~special form for the unbounded region is obvious from the construction of~$G$.
  \end{proof}

  \begin{lemma}\label{lemma:gran}
    For every location $q$, and for~$\Phi = \E \phi \U[\sim c] \psi \in\WCTL$,
    the set of clock values~$x$ such that $(q,x)$ satisfies $\Phi$ is a finite
    union of intervals. Moreover,
    \begin{enumerate}[$\bullet$]
    \item the bounds of those intervals are integral multiples
      of~$1/C^{\height\Phi}$; 
    \item the largest finite bound of those intervals is at most the
      maximal constant appearing in the guards of the automaton.
    \end{enumerate}
  \end{lemma}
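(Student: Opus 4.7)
Fix a location~$q$; I aim to describe
\[
X_q = \bigl\{x\in\bbbr_+ : (q,x)\models\Phi\bigr\}.
\]
The plan is to reduce satisfaction of~$\Phi$ at~$(q,x)$ to a Boolean combination of linear inequalities in~$x$, using the graph~$G$ built above together with Corollary~\ref{coro-8}. A~run witnessing $\E\phi\,\U[\sim c]\,\psi$ must eventually reach, via a discrete action, some $\psi$-state, staying inside $\phi$-states until then. By the induction hypothesis on~$\phi$ and~$\psi$, both formulas are uniform over each sub-region of the granularity already built; hence there are only finitely many candidate ``last-visited'' pairs~$(q',r')$ from which a discrete transition enters a $\psi$-state, and $x$ lies in finitely many possible starting regions~$r$.

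For each fixed target~$(q',r')$ and starting region~$r\ni x$, Corollary~\ref{coro-8} describes the set of costs reachable in~$G$ from~$(q,x,r)$ to~$(q',r')$ as a union of intervals of the form $\langle\alpha_m - \beta_m x,\ \alpha'_m - \beta'_m x\rangle$, where each $\alpha_m,\alpha'_m$ is a multiple of~$1/C^{\height\Phi-1}$ (or~$+\infty$) and each $\beta_m,\beta'_m$ is either~$0$ or a positive cost of~\Aut. I~would collapse this a~priori countable union into a \emph{finite} one by cycle-decomposition in~$G$: each path factors into a simple-path skeleton (finitely many choices) plus iterations of simple cycles, and repeating a cycle either leaves the cost set unchanged (zero-cost cycles) or eventually produces an absorbing unbounded interval $[M_0-\beta x,+\infty)$ that subsumes further contributions.

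Next, the constraint ``some reachable cost satisfies~$\sim c$'' rewrites as a finite Boolean combination of atomic inequalities of the form $\alpha-\beta x\sim' c$; each such inequality defines either~$\emptyset$, all of~$\bbbr_+$, or a half-line with endpoint~$(\alpha-c)/\beta$. Since $\alpha$ is a multiple of $1/C^{\height\Phi-1}$ and $\beta$ divides~$C$, the endpoint $(\alpha-c)/\beta$ is a multiple of $1/C^{\height\Phi}$. Intersecting with~$r$ (whose endpoints are already of that granularity) and taking a finite union over targets and starting regions yields $X_q$ as a finite union of intervals with the required granularity. For the bound on the largest finite endpoint: if $x>M$, every guard of~\Aut{} has a truth value independent of~$x$, so the outgoing run-tree from~$(q,x)$ and its associated costs are identical for all $x\in(M,+\infty)$; hence $X_q\cap(M,+\infty)$ is either empty or equal to~$(M,+\infty)$.

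The main obstacle is the finiteness claim in the cycle-decomposition step: it requires careful bookkeeping of Minkowski sums of intervals $\langle\alpha-\beta x,\ \alpha'-\beta' x\rangle$ under iteration of cycles in~$G$, to ensure that the countable union from Corollary~\ref{coro-8} really collapses to a \emph{finite} union of intervals of the prescribed shape. Once this is established, solving the resulting linear inequalities and tracking their granularity is routine arithmetic.
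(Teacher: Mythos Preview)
Your overall structure matches the paper's: decompose by starting region~$r$, handle the unbounded region separately (where the reachable costs do not depend on~$x$, so the whole region is in or out of~$X_q$), and for each bounded region invoke Corollary~\ref{coro-8} and then solve the linear constraint~$\sim c$ against each interval endpoint. The arithmetic you sketch for the granularity of the resulting bounds~$(\alpha-c)/\beta$ is exactly what the paper does.

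The one genuine divergence is your finiteness step. You propose a cycle-decomposition argument in~$G$ to collapse the countable union of Corollary~\ref{coro-8} to a finite one, and you rightly flag this as the delicate point. The paper sidesteps it entirely with a direct parameter-counting argument: for a constraint of the form $\leq c$, $<c$ or $=c$, any interval whose left endpoint $\alpha_m-\beta_m x$ already exceeds~$c$ throughout~$r$ is irrelevant and may be dropped; since $\beta_m$ ranges over the finite set of location costs (or~$0$) and $\alpha_m$ is a nonnegative multiple of~$1/C^{\height\Phi-1}$ bounded above (once~$\beta_m$ and~$r$ are fixed) by the relevance condition, only finitely many pairs $(\alpha_m,\beta_m)$ survive. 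Then, for each surviving left endpoint $(\alpha,\beta)$ and each of the finitely many possible right slopes~$\beta'$, it suffices to keep the single interval with maximal~$\alpha'$, which subsumes all others with the same $(\alpha,\beta,\beta')$. This yields a finite subfamily adequate for testing~$\sim c$, with no structural analysis of paths in~$G$ at all. Your cycle-decomposition route can probably be pushed through, but the Minkowski-sum bookkeeping you anticipate is real (iterated cycle intervals need not coalesce immediately, and one must still bound how many iterations matter before they do), whereas the paper's pigeonhole argument is a few lines of pure arithmetic.
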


  \begin{proof}
    The set of clock values~$x$ such that $(q,x)$ satisfies $\E \phi \U[\sim
    c] \psi$ can be written as
    \[
    \bigcup_{r \text{ region}} \{x \in r\ \mid\ (q,x)\models \E \phi \U[\sim
    c] \psi\}.
    \]
    There is a finite number of regions. For the unbounded region, the set of
    possible costs does not depend on the initial value of~$x$, and thus
    either the whole region satisfies the formula, or no point in that region
    does. Fix a bounded region $r$, and $x \in r$. Then, $(q,x) \models \E
    \phi \U \psi$ if, and only if there exists a path in $\TTS_{\Aut}$ from
    $(q,x)$ to some $(q',r')$ such that $(i)$ a $\psi$-state is immediately
    reachable from $(q',r')$ by a discrete move, and $(ii)$ along that path,
    all states traversed just after a discrete move satisfy $\phi$. 
    For each pair~$(q,r)$ leading to a~$\psi$-state, we can applying
    Corollary~\ref{coro-8} on the graph just obtained after having removed
    discrete transitions not leading to a $\phi$-state.
    The set of
    possible costs of paths satisfying $\phi \U \psi$ is then a (countable) union of the form
    $\bigcup_{m\in\Nat} \langle \alpha_m - \beta_m\cdot x, \alpha'_m -
    \beta'_m\cdot x\rangle$ with the constraints on constants described in the
    previous corollary. We assume that $r = (a_i,a_{i+1})$ 
    and that the constraint $\sim
    c$ is either $\leq c$, or $<c$, or $=c$ (the other cases would be
    handled in a similar way). If $\alpha_m - \beta_m \cdot a_i
    > c$, then the interval $\langle \alpha_m - \beta_m\cdot x, \alpha'_m -
    \beta'_m\cdot x\rangle$ plays no role for the satisfaction of formula $\E
    \phi \U[\sim c] \psi$ in the region $r$, we can thus remove this interval
    from the union. Now, $\beta_m$ is an integer which is either null or
    divides $C$. Thus as $\alpha_m$ is an integral multiple of
    $1/C^{\max(\height\phi,\height\psi)} = 1/C^{\height\Phi-1}$, left-most
    bounds of interesting intervals can be $\alpha_m-\beta_m\cdot x$ for
    finitely many $\alpha_m$'s and $\beta_m$'s (with the further option closed
    or open). Fix some $\alpha$ and $\beta$, and also fix some $\beta'$. For
    two intervals $\langle \alpha - \beta\cdot x, \alpha'_1 - \beta'\cdot
    x\rangle$ and $\langle \alpha - \beta\cdot x, \alpha'_2 - \beta'\cdot
    x\rangle$ in the above union, it is sufficient to keep only the one with
    the largest $\alpha'$ (because the other is included in this interval).
    Thus, in the above countable union of intervals, we can select a finite
    union of intervals which will be sufficient for checking property $\E \phi
    \U[\sim c] \psi$ in region~$r$. 

    We thus assume that the set of costs of paths which may witness formula
    $\phi \U[\sim c] \psi$ is a finite union $\bigcup_{m = 1}^k \langle
    \alpha_m - \beta_m\cdot x, \alpha'_m - \beta'_m\cdot x\rangle$ with
    $\alpha_m$ and~$\alpha'_m$ in~$\Nat/C^{\height\Phi-1}$ and~$\beta_m$
    and~$\beta'_m$ in~$(C/\Nat^* \cap \Nat) \cup \{0\}$.
    Now, the bounds~$a'_i$ of the intervals of positions where~$\Phi$ holds
    should correspond to values of~$x$ where one of the
    bounds~$\alpha_m-\beta_m\cdot x$ or $\alpha'_m-\beta'_m\cdot x$ exactly
    equals~$c$. It easily follows that those bounds~$a'_i$ are integral
    multiples of~$1/C^{\height\Phi}$, as required.

    This proves that we get only finitely many new intervals, and that the
    largest constant is the same as for~$\phi$ and~$\psi$ (because of the
    initial remark on the unbounded region), thus it is the largest constant
    appearing in the automaton.
  \end{proof}
  
  This concludes the induction step for formula $\E\phi\U[\sim c]\psi$ when
  the automaton has no discrete cost. We will now handle the cases of the
  formulas $\E\G[\geq c] \texttt{false}$ and $\E\G[=c] \texttt{false}$ before
  giving several equivalences to handle all the other cases.

  \bigskip 
\noindent$\blacktriangleright$ \ \textbf{We now consider the formulas $\Phi = \E\G[=c]
    \texttt{false}$ and $\Phi = \E\G[\geq c] \texttt{false}$:} 
    handling those modalities is sufficient for our proof, as we explain later.

    To handle those two
  formulas, we will extend the graph~$G$ defined previously for the initial
  automaton (with non-refined classical regions). We add to the graph $G$ new
  ``final'' states which are triples $\overline{(q,y,r)}$ (we overline it to
  distinguish it from the initial states). Such a state has the same incoming
  transitions as the state~$(q,r)$, except that we will enforce the final
  value of the clock be $y$, and not any value in $r$. For instance, a
  transition $(q,\{a_i\}) \rightarrow \overline{(q',y,(a_i,a_{i+1}))}$ will be
  labeled by the interval $\langle 0, (y-a_i) \cdot c_{\max}^{i,q,q'}]$
  (remember the construction of the graph on
  page~\pageref{graph_construction}). From each of these new final states, 
  we~add an outgoing transition labeled by a finite union of intervals
  corresponding to all the costs of a single mixed move leading to a state
  from which infinite runs are possible. These intervals are
  either of the form $\langle 0 , \gamma \cdot (b-y) \rangle$,
  or of the form $\langle \gamma \cdot (a-y) , \gamma \cdot (b-y) \rangle$
  where $\gamma$ is the cost rate of the corresponding state, and $a$,~$b$ are
  constants of the automaton.

  Now, we omit the details, but they are very similar to those for the original
  graph~$G$. In~this extended graph, the set of possible costs of paths in
  $\TTS_{\Aut}$ from~$(q,x)$ to~$(q',y)$ corresponds to the set of costs of
  paths in the new graph from~$(q,x,r)$ to~$\overline{(q',y,r')}$ and is a
  countable union
  \[
  \bigcup_{m \in \Nat} \langle \alpha_m - \beta_m \cdot x + \gamma_m \cdot y,
  \alpha'_m - \beta'_m \cdot x + \gamma'_m \cdot y \rangle
  \]
  where $\alpha_m$ and $\alpha'_m$ are integers (or $+\infty$), and $\beta_m$,
  $\beta'_m$, $\gamma_m$ and $\gamma'_m$ are costs of the automaton or $0$
  (result similar to Corollary~\ref{coro-8}). We~can even be more precise:
  $\beta_m$ is either~$0$ or the cost rate of~$q$, whereas $\beta'_m$ is the
  cost rate of~$q$. Similarly, $\gamma_m$ is either~$0$ or the cost rate
  of~$q'$, and $\gamma'_m$ is the cost rate of~$q'$. 

  \medskip
  A state $(q,x)$ will satisfy the formula $\Phi = \E\G[=c] \texttt{false}$
  whenever there is a run $\varrho$ in~$\Aut$ such that it can be decomposed
  into $\varrho = \varrho_1 \cdot \varrho_2 \cdot \varrho_3$ such that the
  cost of $\varrho_1$ is strictly less than~$c$, the cost of $\varrho_1 \cdot
  \varrho_2$ is strictly larger than~$c$
  and $\varrho_2$ corresponds to a single mixed move. 
  That is, whenever there exists a path from $(q,x,r)$
  to $\overline{(q',y,r')}$ of cost less than~$c$ s.t., when adding up the
  outgoing cost of a single mixed move, we get a cost larger than~$c$. As~in
  Lemma~\ref{lemma:gran}, we~can restrict the above union to a finite union,
  and we thus only need to solve finitely many linear systems of inequations.
  Then, we can analyze all possible cases for the bounds where the truth of
  $\Phi$ changes, and as previously, we see that the granularity needs only to
  be refined by $1/C$, hence the granularity which is required is~$1/C$
  (since we started from the classical region automaton, with
  non-refined constants).

  \medskip A state $(q,x)$ satisfies $\E\G[\geq c] \texttt{false}$ whenever
  there is an infinite run from $(q,x)$ for which the cost of all its prefixes
  is strictly less than $c$ (though the limit of these costs can be $c$
  itself). In such a run, there is a prefix of cost strictly less than $c$ and
  from that point on, the cost of each mixed move is very close to $0$ (and
  indeed as close as we want to $0$). We~thus proceed as follows: we~fix a
  location~$q$ and a region~$r$. For every $x$ and~$y$, we~compute the set of
  possible costs between $(q,x)$ and $(q,y)$ for $x,y \in r$. This is a
  countable union
  \[
  \bigcup_{m \in \Nat} \langle \alpha_m - \beta_m \cdot x + \gamma_m \cdot y,
  \alpha'_m \rangle
  \]
  after having simplified the previous union in which $\beta'_m$ and
  $\gamma'_m$ were both equal to the cost of location $q$. For each of the
  terms of the union, we distinguish between several cases:
  \begin{enumerate}[$\bullet$]
  \item if $\beta_m = \gamma_m = \alpha_m = 0$, then there is a cycle which
    can be iterated from $(q,r)$, and the global cost will be as small as we
    want. If the left-most bound of the interval is closed, then we can ensure
    a zero-cost, otherwise we cannot ensure a zero-cost.
  \item if $\beta_m = \gamma_m = 0$ but $\alpha_m > 0$, then there is no
    corresponding cycle that can be iterated without the cost to diverge.
  \item if $\beta_m = 0$ but $\gamma_m > 0$ is the cost of $q$, then the only
    chance to be able to iterate a cycle without paying too much is to choose
    $y$ be the left-most point $a$ of the region $r$. Then, either $\alpha_m +
    \gamma_m \cdot a = 0$, in which case we can iterate a cycle, or $\alpha_m
    + \gamma_m \cdot a > 0$, in which case we cannot iterate a cycle.
  \item if $\beta_m = 0$ but $\gamma_m > 0$ is the cost of $q$, a similar
    reasoning can be done, but with the right-most bound $b$ of $r$.
  \item if $\beta_m = \gamma_m > 0$ is the cost of location $q$, then it is
    not difficult to check that $\alpha_m$ is then not smaller than $\beta_m
    \cdot (b-a)$ (this can be checked on the graph $G$). Hence, a
    corresponding cycle can only be iterated if $a = b$, and thus if $r$ is a
    punctual region.
  \end{enumerate}
  The analysis of all these cases show that we only need to look at terms of
  the union such that $\alpha_m - \beta_m \cdot b + \gamma_m \cdot a = 0$, and
  either $a=b$, or the $\alpha_m \cdot \beta_m \cdot \gamma_m = 0$. Moreover,
  for each such constraint, it is only necessary to look at one of the
  witnessing intervals. We see that this set of states is a set of regions (we
  do not need to refine the region: a whole region either satisfies the
  property, or does not satisfy the property).

  That way, we can compute the set of states $S_0$ from which there exists an
  infinite run with a cost as small as possible (though possibly not zero).

  It remains to describe the set of states from which there is a finite path
  of cost strictly less than $c$ and reaching a state of $S_0$. This can
  easily be done using the extended graph $G$ we have presented above.

  \bigskip 
  \noindent$\blacktriangleright$ \ 
  \textbf{We now explain how we reduce all the other cases to the
    previous~ones.} We~consider the case of formula $\A\phi\U[\sim c]\psi$,
  still assuming that the automaton has no discrete costs. We~prove this
  result by reducing to the previous case. We consider the region automaton
  of~$\Aut$ w.r.t. constants $(a_i)_{0 \leq i \leq n+1}$ mentioned earlier
  (correct for subformulas $\varphi$ and~$\psi$), we~assume it is still a
  timed automaton (truth of formulas in the original automaton and in this
  region automaton is then equivalent).

  We moreover assume that we
  have two copies of each state, labeled with two extra atomic proposition
  $\textsf{has\_paid}$ and $\textsf{can\_have\_not\_paid}$
  which characterize when the last move had a positive cost, and when it could
  have no cost (for instance an instantaneous transition or a transition from
  a location where the cost rate is null).
  We denote the new automaton by~$\Aut_{\text{ext}}$, and give now a list of
  equivalences, not difficult to check, and useful for proving the induction step
  for  
  formulas of the form $\A\phi\U[\sim c]\psi$.

\begin{enumerate}[$\bullet$]
\item \((q,x),\Aut \models \A\phi\U[\geq c] \psi \ \text{ iff } \ (q,x),\Aut
  \models \A\phi \U \psi \et \A\G[<c] (\A\phi \U \psi) \et \A\F[\geq c]
  \texttt{true}\);

\item \((q,x),\Aut \models \A\phi\U[> c] \psi \ \text{ iff } \ (q,x),\Aut
  \models \A\phi \U \psi \et \A\G[\leq c] (\A\phi \U \psi) \et \A\F[>c]
  \texttt{true}\);

\item \((q,x),\Aut \models \E\G[>c]\texttt{false} \ \text{ iff } \ (q,x),\Aut
  \models \E\G[\geq c]\texttt{false} \ou \E\F[\leq
  c]\E\G(\textsf{can\_have\_not\_paid})\);


\item \((q,x),\Aut \models \A\phi\U[\leq c] \psi \ \text{ iff } \ (q,x),\Aut
  \models \A\phi \U \psi \et \A\F[\leq c] \psi\);

\item \((q,x),\Aut \models \E\G[\leq c] \psi \ \text{ iff } \
  (q,x),\Aut_{\text{ext}} \models \E\G\psi \ou \E\psi \U[ > c]\texttt{true}
  \);

\item \((q,x),\Aut \models \A\phi\U[< c] \psi \ \text{ iff } \ (q,x),\Aut
  \models \A\phi \U \psi \et \A\F[<c] \psi\);

\item \((q,x),\Aut \models \E\G[< c] \psi \ \text{ iff } \
  (q,x),\Aut_{\text{ext}} \models \E\G\psi \ou \E \psi \U[\geq
  c]\texttt{true}\);

\item \((q,x),\Aut \models \A\phi\U[=c] \psi \ \text{ iff } \ (q,x),\Aut
  \models \A\phi\U[\geq c]\psi \et \A\F[=c]\psi\);

\item \((q,x),\Aut \models \E\G[=c]\psi \ \text{ iff } $ \\
  \hfill $(q,x),\Aut_{\text{ext}} \models (\E\G[=c] \texttt{false}) \ou
  (\E\F[=c](\textsf{has\_paid} \wedge \psi \wedge (\E\G\psi \ou \E\psi\U
  \textsf{has\_paid})))\);
    
\end{enumerate}

Those transformations (which do not increase~$\height\Phi$) are sufficient to
lift the result to all the modalities of~\WCTL (under the assumption that we
have no discrete costs).

\bigskip

\subsubsection{We now explain how we can prove the induction step of
  Proposition~\ref{mainprop} for a formula $\Phi = \E\phi\U[\sim c]\psi$ when
  the automaton has discrete costs on transitions.}
  %
  %
  We will simplify the problem and reduce it to the computation of states
  satisfying a formula in an automaton without discrete costs. Then, applying
  the result proved for the automata without discrete costs, we will get the
  induction step.
  We note~$T$ the set of transitions of~$\Aut$ that have a positive discrete
  cost. We unfold the automaton as follows: there is a copy of~\Aut for every
  integer smaller than or equal to~$c+1$. Copy of location~$q$ in the $i$-th
  copy is denoted~$q_{(i)}$. There is a transition from $q_{(i)}$
  to~$q'_{(j)}$ if: either~$i=j$ and there is a transition in~$\Aut$ from $q$
  to $q'$ not in $T$; or $j=i+k \leq c+1$ and there is a transition in $T$
  with discrete cost $k$ from $q$ to~$q'$; or $j=p+1$, $i+k > c+1$ and there
  is a transition in $T$ with discrete cost $k$ from $q$ to~$q'$. We note
  $\Aut_{\text{unf}}$ this unfolding. Then, 
  \begin{eqnarray*}
    (q,x),\Aut \models \E \phi
    \U[\sim c] \psi & \ \text{iff}\ & (q_{(0)},x),\Aut_{\text{unf}} \models \bigvee_{i \leq p+1}
    \E \phi \U[\sim c-i] (\psi \wedge \textsf{copy}_i)
  \end{eqnarray*}
  where $\textsf{copy}_i$ is an atomic proposition labeling all locations
  of~$\Aut_i$. The correctness of this construction is obvious. Now, applying
  the induction hypothesis on automata with no discrete cost on transitions,
  the granularity of regions required for model-checking each formula is
  $1/C^{\max(\height\phi,\height\psi)+1}$, the granularity for the original
  formula in~$\Aut$ is thus also $1/C^{\max(\height\phi,\height\psi)+1} =
  1/C^{\height\Phi}$, which proves the induction step also for automata with
  discrete costs on transitions.

  \medskip 
  Finally, this extension to automata with discrete costs can be adapted to
  modalities of the form~$\A\U$. We~omit the tedious details.
\forceqed
\end{proof}

\begin{remark}
  In the above proof, we have exhibited exponentially many constants $a_i$'s
  at which truth of the formula can change. We will show here that the
  exponential number of constants is unavoidable in general. Indeed, consider
  the one-clock \PPTA~$\Aut$ displayed on Figure~\ref{fig-bin}.
  
\begin{figure}[!ht]
    \centering
    \begin{tikzpicture}
      {\everymath{\scriptstyle}
        \path[use as bounding box] (0,-1.2) -- (6,.9);
        \draw (0,0) node [draw,line width=.6pt,circle] (A) {$\dot{p} = 1$};
        \draw (2,.6) node [draw,line width=.6pt,circle] (B2) {$\dot{p} = 4$};
        \draw (2,-.6) node [draw,line width=.6pt,circle] (B1) {$\dot{p} = 2$};
        \draw (4,.6) node [draw,line width=.6pt,circle] (C2) {$\dot{p} = 2$};
        \draw (4,-.6) node [draw,line width=.6pt,circle] (C1) {$\dot{p} = 1$};
        \draw (6,0) node [draw,line width=.6pt,circle] (D) {$\dot{p} = 1$};
        \draw (8,0) node [draw,line width=.6pt,circle] (E) {$\dot{p} = 1$};
        \draw [-latex',line width=.6pt] (A) -- (B1) node [below,midway,sloped] {$x < 1$};
        \draw [-latex',line width=.6pt] (A) -- (B2) node [above,midway,sloped] {$x \geq 1$};
        \draw [-latex',line width=.6pt] (B1) -- (C1) node [above,midway] {$x=2$} node [below,midway] {$x:=0$};
        \draw [-latex',line width=.6pt] (B2) -- (C2) node [above,midway] {$x=2$} node [below,midway] {$x:=0$};
        \draw [-latex',line width=.6pt] (C1) -- (D) node [below,midway,sloped] {$x<2$};
        \draw [-latex',line width=.6pt] (C2) -- (D) node [above,midway,sloped] {$x<2$};
        \draw [-latex',line width=.6pt] (D) -- (E) node [above,midway] {$x=0$};
        \draw [-latex',line width=.6pt,rounded corners=8pt] (D) -- (6,-1.3) -- (0,-1.3) -- (A);}
      \draw (-.5,-.5) node {${a}$};
      \draw (6.5,-.5) node {${b}$};
      \draw (8.5,-.5) node {${c}$};
    \end{tikzpicture}
\caption{The one-clock \PPTA~$\mathcal A$}\label{fig-bin}
\end{figure}
Using a \WCTL formula, we will require that the cost is exactly $4$
between~$a$ and~$b$. That way, if clock $x$
equals~$x_0.x_1x_2x_3\ldots x_n\ldots$ (this is the binary
representation of a real in the interval $(0,2)$) when leaving~$a$,
then it will be equal to~$x_1.x_2x_3\ldots x_n\ldots$ in~$b$.  We
consider the \WCTL formula $\phi(X) = \E\Big((a\ou b)\U[=0](\non
a\et\E(\non b \U[=4](b\et X)))\Big)$, where~$X$ is a formula we will
specify.  Then formula~$\phi(\E\F[=0] c)$ states that we can go
from~$a$ to~$b$ with cost~$4$, and that $x=0$ when arriving in~$b$
(since we can fire the transition leading to~$c$). From the remark
above, this can only be true if $x=0$ or~$x=1$ in~$a$.  Now, consider
formula~$\phi(\E\F[=0] c \, \ou\, \phi(\E\F[=0] c))$. If it holds in
state~$a$, then state~$c$ can be reached after exactly one or two
rounds in the automaton, \emph{i.e.}, if the value of~$x$ is
in~$\{0,1/2,1,3/2\}$.  Clearly enough, nesting~$\phi$ $n$~times
characterizes values of the clocks of the form~$p/2^{n-1}$ where $p$
is an integer strictly less than $2^n$.
\end{remark}

\subsection{Algorithms and Complexity}
\label{sec:algo}

In this section, we provide two algorithms for model-checking \WCTL on
one-clock \PPTA.  The first algorithm runs in \EXPTIME, whereas the second one
runs in \PSPACE, thus matching the \PSPACE lower bound. However, it is
easier to first explain the first algorithm, and then reuse part of it
in the second algorithm. 
Finally, we will pursue the example of
Subsection~\ref{subsec:example}
for illustrating our \PSPACE algorithm.

\subsubsection{An \EXPTIME Algorithm}

The correctness of the algorithm we propose for model-checking one-clock \PPTA
against \WCTL properties relies on the properties we have proved in the
previous section: if $\Aut$ is an automaton with maximal constant $M$,
writing~$C$ for the l.c.m. of all costs labeling a location, and if $\Phi$ is
a \WCTL formula of constrained size $n$ (the maximal number of nested
constrained modalities), then the satisfaction of~$\Phi$ is uniform on the
regions~$(m/C^n;(m+1)/C^n)$ with $m<M\cdot C^n$, and also on $(M;+\infty)$.
The idea is thus to test the satisfaction of $\Phi$ for each state of the form
$(q,k/2 C^n)$ for $0 \leq k \leq (M\cdot 2 C^n) +1$ (\textit{i.e.} at the
bounds and in the middle of each region).

To check the truth of $\Phi = \E\phi\U[\cost \sim c]\psi$ in state $(q,x)$
with~$x=k/2C^n$, we will non-deterministically guess a witness. Using
graph~$G$ that we have defined in Section~\ref{section:granularite}, we begin
with proving a ``small witness property'': 
\begin{lemma}
  Let~$s$ be the smallest positive cost in~\Aut, and~$C$ be the lcm of all
  positive costs of~\Aut. Let~$q$ be a location of~\Aut, and~$x\in \R^+$.
  Let~$\Phi=\E\phi\U[\sim c]\psi$ be a \WCTL formula of size~$n$. Then
  $(q,x)\models \Phi$ iff there exists a run in~\Aut, from~$(q,x)$ and
  satisfying~$\phi\U[\sim c]\psi$, and whose projection in~$G$ visits at
  most~$N=\lfloor c\cdot C^n/s\rfloor+2$ times each state of~$G$.
\end{lemma}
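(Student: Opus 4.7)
The ``if'' direction is immediate. For the converse, the plan is to start from any witness run $\varrho$ for $\phi\U[\sim c]\psi$ from $(q,x)$ and iteratively shorten it by excising redundant $v$-to-$v$ subcycles in its $G$-projection, until every vertex of $G$ is visited at most $N$ times. The key quantitative observation about the graph $G$ built in Section~\ref{section:granularite} is that every edge of $G$ whose cost interval has strictly positive infimum has that infimum at least $s/C^n$: such an edge represents a delay of duration $a_{i+1}-a_i\geq 1/C^n$ (by the region granularity of Proposition~\ref{mainprop}) in a location whose positive cost rate is at least $s$. Consequently, any \emph{positive cycle} of $G$ (one containing at least one such edge) contributes at least $s/C^n$ to the cost of any concrete realization, while a \emph{zero cycle} (all of whose edges have label with infimum $0$) can be realized with arbitrarily small cost.

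Suppose now that the projection $\pi$ of $\varrho$ visits some vertex $v$ strictly more than $N$ times, yielding at least $N$ consecutive $v$-to-$v$ subcycles. When $\sim\,\in\{{\leq},{<},{=}\}$, the cost of $\varrho$ up to the witnessing $\psi$-position is at most $c$, so at most $\lfloor cC^n/s\rfloor$ of these cycles can be positive, leaving at least $N-\lfloor cC^n/s\rfloor=2$ zero cycles. I would then excise one such zero cycle: its two endpoints project to the same $G$-vertex, which is either a singleton region $\{a_i\}$ (so the clock value is exactly $a_i$ before and after) or an open region from which only clock-reset transitions leave (so the exact clock value at exit is immaterial), hence the shortcut yields a valid run of $\Aut$ whose projection is $\pi$ with that cycle removed. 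Since the truth of $\phi$ is uniform on each $G$-state and no new position is introduced, $\phi$ still holds everywhere on the shortened run. For $\sim\,\in\{{\leq},{<}\}$ the total cost can only decrease, so the constraint is preserved. For the equality constraint, I would reparameterize the achieved costs along the remaining edges---using that the achievable-cost set along any fixed $G$-projection is a genuine interval, by the convexity argument of Lemma~\ref{lemma4.2}---so that the total cost remains exactly $c$: this is feasible because excising a zero cycle only trims the right endpoint of the achievable-cost interval, and picking among the at-least-two available zero cycles the one of smallest maximum cost guarantees enough slack to still hit $c$.

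For the remaining cases $\sim\,\in\{{\geq},{>}\}$, I~would start from a witness $\varrho$ along which the cost first crosses the threshold $c$ in a single final step: up to the penultimate position of the witnessing prefix, the cost is strictly below $c$, so the analysis above bounds the number of positive cycles in the corresponding part of $\pi$ by $\lfloor cC^n/s\rfloor$, while the final step contributes at most one extra visit per vertex; iteratively excising zero cycles (which keeps the cost strictly above $c$ whenever it already was, and otherwise reuses the slack argument) again produces the claimed bound. The main obstacle I anticipate is precisely the equality case: maintaining the cost exactly at $c$ after each excision relies on the above reparameterization and slack argument, which is the most delicate part of the proof.
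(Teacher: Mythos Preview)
Your approach diverges from the paper's and runs into real difficulties in the $\mathord=$, $\mathord\geq$, and $\mathord>$ cases.

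The paper never tracks the cost of one particular run. It reasons entirely about the cost \emph{interval} $\iota(\pi)$ of the $G$-projection~$\pi$, i.e., the set of costs realizable by \emph{some} concrete run projecting to~$\pi$ (Lemma~\ref{lemma4.5}). First it removes every cycle whose interval is exactly~$[0,0]$; this leaves $\iota(\pi)$ unchanged. Every remaining cycle then has \emph{upper} bound at least~$s/C^n$ (this is the paper's dichotomy---$[0,0]$ versus positive supremum---not yours of zero versus positive \emph{infimum}). One then deletes further cycles until every vertex is visited at most~$N$ times, arranging that some vertex ends at exactly~$N$ visits; the $N-1$ surviving cycles through it force the new upper bound above $(N-1)\,s/C^n>c$, while the lower bound can only have decreased. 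Hence the new interval still meets $\{d:d\sim c\}$ whatever~$\sim$ is, and Lemma~\ref{lemma4.5} produces a concrete witnessing run. A single argument covers all five comparison operators, with no reparameterization.

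By contrast, working with a fixed run forces the delicate reparameterization you anticipate, and your justifications for it do not go through. For $\mathord=$, the claim that picking the zero cycle of smallest maximum ``guarantees enough slack'' is unsupported: you need $b_{\min}\le U-c$, and your counting does not give that. (It happens to be true, but for the paper's reason: after one removal at least $N-1$ non-$[0,0]$ cycles remain, so the new upper bound is still $\ge(N-1)s/C^n>c$; smallness of $b_{\min}$ plays no role.) For $\mathord\geq$ and $\mathord>$, your starting hypothesis---that one may choose a witness whose cost first crosses~$c$ on the final step---is simply false in general: the only $\psi$-position may lie far beyond the point where the accumulated cost exceeds~$c$, with only $\phi$-states in between, so no such witness exists. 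Even granting the hypothesis, excising a zero cycle can drop the realized cost below~$c$ (its contribution in the run need not be small), so you are again thrown back on an interval argument you have not supplied. The clean fix is to adopt the paper's interval viewpoint from the outset; your positive/zero cycle distinction and per-run cost accounting then become unnecessary.
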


\begin{proof}
  Let~$\tau$ be a run in~$\Aut$, starting from~$(q,x)$ (with~$x=k/2C^n$ for some~$k$) and
  satisfying~$\phi\U[\sim c]\psi$. To~that run corresponds a path~$\rho$
  in the region graph, starting in~$(q,x)$. Consider a cycle in that path~$\rho$: either
  it has a global cost interval~$[0,0]$, in which case it can be removed and
  still yields a witnessing run; or it has a global cost interval of the
  form~$\langle a,b\rangle$ with~$b>0$. In that case, letting~$s$ be the
  smallest positive cost of the automaton, we know that~$b\geq s/C^n$. Now, if
  some state of~$G$ is visited (strictly) more than~$N=\lfloor c\cdot
  C^n/s\rfloor+2$ times along~$\rho$, we build a path~$\rho'$ from~$\rho$ by
  removing extraneous cycles, in such a way that each state of~$G$ is visited
  at most~$N$ times along~$\rho$ (and that~$\rho$ starts and ends in the same
  states). Since we assumed that~$\rho$ does not contain cycles with cost
  interval $[0;0]$, we know that the upper bound of the accumulated cost
  along~$\rho'$ is above~$c$. Also, the lower bound of the accumulated costs
  along~$\rho'$ is less than that of~$\rho$. Since $\rho$ ``contains'' a run
  witnessing~$\phi\U[\sim c]\psi$, the cost interval of $\rho$ contains a
  value satisfying $\sim c$, thus so does the cost interval of $\rho'$. In
  other words, $\rho'$ still contains a path witnessing~$\phi\U[\sim c]\psi$.
  This path can easily be lifted to a run in $\Aut$ satisfying the formula
  $\phi\U[\sim c]\psi$.
\end{proof}

Since a transition in~$G$ may correspond to a linear sequence of transitions
in~$\Aut$, we know that if~$(q,x)\models \E\phi\U[\sim c]\psi$, then there
exists a witness having at most exponentially many transitions in~$\Aut$.

We now describe our algorithm: assuming we have computed, for each
state~$q$ of~$\Aut$, the intervals of values of~$x$ where~$\phi$
(resp.~$\psi$) holds, we non-deterministically guess the successive states of
a path in~$\Aut$, checking that~$\phi$ holds after each action transition and
that the path reaches a~$\psi$-state after an action transition and with cost
satisfying~$\sim c$. 
This verification can be achieved in~\PSPACE (and can be made deterministic as
$\PSPACE=\NPSPACE$). 
Since we apply this algorithm for each state~$(q,k/2C^n)$ with $0 \leq k \leq
(M \cdot 2 C^n) + 1$, our global algorithm runs in deterministic exponential
time.

It is immediate to design a similar algorithm for formulas~$\E\G[\geq c]\false$
and $\E\G[=c]\false$. 
The other existential modalities are handled by reducing to those case,
as explained in Section~\ref{section:granularite}.

\subsubsection{A \PSPACE Algorithm}

The \PSPACE algorithm will reuse some parts of the previous
algorithm,
but it will improve on space performance by computing and storing only the minimal
information required: instead of computing the truth value of each subformula
in each state~$(q,k/2C^n)$, it~will only compute the information it really needs. 
Our method is thus similar in spirit to the
space-efficient, on-the-fly algorithm for TCTL presented
in~\cite{HKV96}.

We will then need, while guessing a witness for $\E\phi\U[\cost \sim c]\psi$,
to check that all intermediary states reached after an action transition 
satisfy formula~$\phi$. As~$\phi$~might
be itself a \WCTL formula with several nested modalities, we will fork a new
computation of our algorithm on formula~$\phi$ from each intermediary state.
The maximal number of threads running simultaneaously is at most the depth of
the parsing tree of formula~$\Phi$. When a thread is preempted we only need to
store a polynomial amount of information in order to be able to resume it.
Indeed, it is sufficient to store for each preempted thread a triple
$(\alpha,K,I)$ where $\alpha$ is a node of the region graph, $K$~records
the number of steps of the path we are guessing (we~know that
when~$\E\phi\U[\sim c]\psi$ holds, an exponential witness exists), 
and $I$~is an interval corresponding
to the accumulated cost along the path being guessed.

The algorithm thus runs as follows: we start by labeling the root of the tree
by $\alpha=(q,x,r)$, $K=0$ and $I=[0;0]$. Then we guess a sequence of
transitions in the region graph, starting
from $(q,x,r)$; when a new state $(q',r')$ is added, we increment the
value of~$K$ and update the value of the interval, as described in the previous
section. If we just fired an action transition, then either we fork an execution for
checking that~$\phi$ holds, or we check that 
the constraint $\cost\sim c$ can be satisfied by the new interval and we
verify that the new state satisfies~$\psi$ (by again forking a new execution). 

The number of nested guesses can be bounded by the depth of the parsing tree
of~$\Phi$, because when a new thread starts, it starts from a node in the
parsing tree that is a child of the previous node. 
Thus, the memory  needed in this algorithm
is the parsing tree of formula $\Phi$ with each node labeled by a tuple which
can be stored in polynomial space. This globally leads to a \PSPACE algorithm.

\begin{example}
  We illustrate our \PSPACE algorithm on our initial example, with
  formula~$\Phi = \non\E(\textsf{OK}\ \U[t\leq 8] (\textsf{Problem} \et
  \non\E\F[c<30] \textsf{OK}))$. We write~$g=1/C^2$
  for the resulting granularity as defined in Prop.~\ref{mainprop}, and
  consider a starting state, e.g.~$(\textsf{OK},x=mg)$.

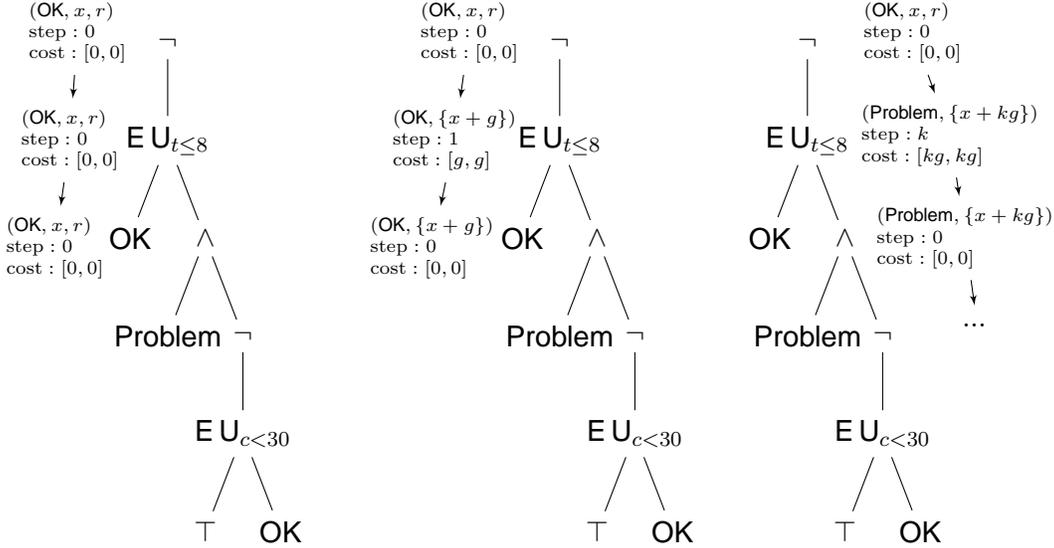
\begin{figure}[!ht]
\centering
\begin{tikzpicture}
\begin{scope}[yscale=1.3]
\path(1,3) node (non1) {$\non$} -- ++(-1.2,.1) node (Q) 
  {\tiny$\begin{array}{l}
    (\textsf{OK},x,r)\\
    \text{step}: 0\\
    \text{cost}: [0,0]\\
    \end{array}$};
\path(1,2) node (EUt) {$\E\U[t\leq 8]$} -- ++(-1.3,0) node (Q') 
  {\tiny$\begin{array}{l}
    (\textsf{OK},x,r)\\
    \text{step}: 0\\
    \text{cost}: [0,0]\\
    \end{array}$};
\path(.5,1) node (Ok1) {$\textsf{OK}$} -- ++(-1,-.1) node (Q'') 
  {\tiny$\begin{array}{l}
    (\textsf{OK},x,r)\\
    \text{step}: 0\\
    \text{cost}: [0,0]\\
    \end{array}$};
\path(1.5,1) node (et) {$\et$};
\path(1,0) node (Pb) {$\textsf{Problem}$};
\path(2,0) node (non2) {$\non$};
\path(2,-1) node (EUc) {$\E\U[c<30]$};
\path(1.5,-2) node (top) {$\top$};
\path(2.5,-2) node (Ok2) {$\textsf{OK}$};
\draw (non1) -- (EUt);
\draw (EUt) -- (Ok1);
\draw (EUt) -- (et);
\draw (et) -- (Pb);
\draw (et) -- (non2);
\draw (non2) -- (EUc);
\draw (EUc) -- (top);
\draw (EUc) -- (Ok2);
\draw[-latex'] (Q) -- (Q');
\draw[-latex'] (Q') -- (Q'');
\end{scope}
\end{tikzpicture}
\hfil
\begin{tikzpicture}
\begin{scope}[yscale=1.3]
\path(1,3) node (non1) {$\non$} -- ++(-1.2,.1) node (Q) 
  {\tiny$\begin{array}{l}
    (\textsf{OK},x,r)\\
    \text{step}: 0\\
    \text{cost}: [0,0]\\
    \end{array}$};
\path(1,2) node (EUt) {$\E\U[t\leq 8]$} -- ++(-1.4,0) node (Q') 
  {\tiny$\begin{array}{l}
    (\textsf{OK},\{x+g\})\\
    \text{step}: 1\\
    \text{cost}: [g,g]\\
    \end{array}$};
\path(.5,1) node (Ok1) {$\textsf{OK}$} -- ++(-1.2,-.1) node (Q'') 
  {\tiny$\begin{array}{l}
    (\textsf{OK},\{x+g\})\\
    \text{step}: 0\\
    \text{cost}: [0,0]\\
    \end{array}$};
\path(1.5,1) node (et) {$\et$};
\path(1,0) node (Pb) {$\textsf{Problem}$};
\path(2,0) node (non2) {$\non$};
\path(2,-1) node (EUc) {$\E\U[c<30]$};
\path(1.5,-2) node (top) {$\top$};
\path(2.5,-2) node (Ok2) {$\textsf{OK}$};
\draw (non1) -- (EUt);
\draw (EUt) -- (Ok1);
\draw (EUt) -- (et);
\draw (et) -- (Pb);
\draw (et) -- (non2);
\draw (non2) -- (EUc);
\draw (EUc) -- (top);
\draw (EUc) -- (Ok2);
\draw[-latex'] (Q) -- (Q');
\draw[-latex'] (Q') -- (Q'');
\end{scope}
\end{tikzpicture}
\hfil
\begin{tikzpicture}
\begin{scope}[yscale=1.3]
\path(1,3) node (non1) {$\non$} -- ++(1.4,.1) node (Q) 
  {\tiny$\begin{array}{l}
    (\textsf{OK},x,r)\\
    \text{step}: 0\\
    \text{cost}: [0,0]\\
    \end{array}$};
\path(1,2) node (EUt) {$\E\U[t\leq 8]$} -- ++(1.9,0.05) node (Q') 
  {\tiny$\begin{array}{l}
    (\textsf{Problem},\{x+kg\})\\
    \text{step}: k\\
    \text{cost}: [kg,kg]\\
    \end{array}$};
\path(.5,1) node (Ok1) {$\textsf{OK}$} ;
\path(1.5,1) node (et) {$\et$} -- ++(1.6,0) node (Q'') 
  {\tiny$\begin{array}{l}
    (\textsf{Problem},\{x+kg\})\\
    \text{step}: 0\\
    \text{cost}: [0,0]\\
    \end{array}$};
\path(1,0) node (Pb) {$\textsf{Problem}$};
\path(2,0) node (non2) {$\non$};
\path(2,-1) node (EUc) {$\E\U[c<30]$};
\path(1.5,-2) node (top) {$\top$};
\path(2.5,-2) node (Ok2) {$\textsf{OK}$};
\draw (non1) -- (EUt);
\draw (EUt) -- (Ok1);
\draw (EUt) -- (et);
\draw (et) -- (Pb);
\draw (et) -- (non2);
\draw (non2) -- (EUc);
\draw (EUc) -- (top);
\draw (EUc) -- (Ok2);
\draw[-latex'] (Q) -- (Q');
\draw[-latex'] (Q') -- (Q'');
\draw[-latex'] (Q'') -- +(-80:7mm) node[below=1mm]  {...};
\end{scope}
\end{tikzpicture}
\caption{Execution of our \PSPACE algorithm on the initial example.}\label{fig-run}
\end{figure}
Figure~\ref{fig-run} shows three steps of our algorithm. The first step
represents the first iteration, where subformula~$\textsf{OK}$ is satisfied at
the beginning of the run. At step~$2$, the execution goes to~$(\textsf{OK},
x+g)$: we check that the left-hand-side formula still holds in~$(\textsf{OK},
x+g)$ (as depicted), but also in intermediary states. The third figure
corresponds to $k$~steps later, when the algorithm decides to go to the
right-hand-part of~$\E\U[t\leq 8]$. In that case, of course, it is checked
that~$kg \leq 8$, and then goes on verifying the second until subformula.
\end{example}





\section{Model-checking linear-time logics}\label{sec:WMTL}

We now turn to the case of linear-time temporal logics. We begin with the
definition of our logic~\WMTL.

\subsection{The Logic \texorpdfstring{\WMTL}{WMTL}}

The logic \WMTL is a weighted extension of \LTL, but can also be
viewed as an extension of~\MTL~\cite{koymans90}, hence its name \WMTL,
holding for ``Weighted \MTL''. 

The syntax of \WMTL is defined inductively as follows:
\[
\WMTL \ni \varphi\ ::=\ a\ \mid\ \neg\varphi\ \mid\ \varphi \vee \varphi\
\mid\ \varphi \U[\cost \sim c] \varphi
\] 
where $a\in \AP$, $\cost$ is a cost function, $c$ ranges over~\Nat, $\mathord\sim
\in \{\mathord< , \mathord\le , \mathord= , \mathord\ge, \mathord> \}$.
If~there is a single cost function or if~the cost function~$\cost$ is clear
from the context, we~simply write $\varphi \U[\sim c] \psi$ for $\varphi
\U[\cost \sim c] \psi$.

We interpret \WMTL formulas over (finite) runs of labeled \PPTA, identifying
each cost of the formula with the corresponding cost in the automaton.

\begin{definition}
Let $\mathcal{A}$ be a labeled~\PPTA, and let $\varrho
= (q_0,v_0) \xrightarrow{\tau_1,e_1} (q_1,v_1)
\cdots \xrightarrow{e_p,\tau_p}
(q_p,v_p)$ be a finite run in~$\mathcal{A}$. 
The satisfaction relation for
\WMTL is then defined inductively as follows:
\begin{eqnarray*}
  \varrho \models a & \Leftrightarrow & a \in \ell(q_0) \\
  \varrho \models \neg \varphi & \Leftrightarrow  & \varrho \not\models \varphi \\
  \varrho \models \varphi_1 \ou \varphi_2 & \Leftrightarrow  & \varrho \models
    \varphi_1\ \text{or}\ \varrho \models \varphi_2 \\
  \varrho \models \varphi_1 \U[\cost \sim c] \varphi_2 & \Leftrightarrow  & \exists
  0<\pi\leq|\varrho| \ \text{s.t.}\ \varrho_{\geq \pi} \models \varphi_2,\
  \forall 0 < \pi'
  < \pi,\ \varrho_{\geq \pi'} \models \varphi_1, \\
  & & \phantom{\exists  0<\pi<|\varrho|\ \text{s.t.}\ } 
       \text{and}\ \cost(\varrho_{\leq \pi}) \sim c. 
\end{eqnarray*}
\end{definition}

\begin{example}
Back on our example of Figure~\ref{ex1}, we can express that there is no path
from \textsf{OK} back to itself in time less than~$10$ \emph{and} cost less
than~$20$. This is achieved by showing that no path satisfies the following formula:
\[
\textsf{OK}\, \U\, (\textsf{Problem}\, \et\, (\non\textsf{OK})\,\U[x\leq 10]\,
\textsf{OK}\, \et\, (\non \textsf{OK})\,\U[c\leq 20]\, \textsf{OK}).
\]
As we will see, model-checking \WMTL will in fact be undecidable when the
automaton involves more than one cost.
\end{example}

\begin{remark}
  Classically, there are two possible semantics for timed temporal
  logics~\cite{raskin99}: the continuous semantics, where the system is
  observed continuously, and the point-based semantics, where the system is
  observed only when the state of the system changes. We have chosen the
  latter, because the model checking problem for \MTL under the continuous
  semantics is already undecidable~\cite{AH90}, whereas the model-checking
  under the point-based semantics is decidable over finite runs~\cite{OW05}.
\end{remark}

\medskip We study existential model-checking of~\WMTL over priced timed
automata, stated as: given a one-clock \PPTA~$\mathcal A$ and a \WMTL
formula~$\varphi$, decide whether there exists a finite 
run~$\varrho$ in~$\mathcal{A}$ starting in an initial state and such that
$\varrho \models \varphi$.
Since \WMTL is closed under negation, our results obviously extend to the dual
problem of \emph{universal} model-checking.

We prove that the model-checking problem against \WMTL properties is decidable
for:
\begin{enumerate}
\def\labelenumi{(\theenumi)}
\item one-clock \PPTA with one stopwatch cost variable.
  \label{res:dec}
\end{enumerate}
Any extension to that model leads to undecidability. Indeed, we prove that the
model-checking problem against \WMTL properties is undecidable for:
\begin{enumerate}\addtocounter{enumi}{1}
\def\labelenumi{(\theenumi)}
\item one-clock \PPTA with one cost variable,
  \label{res:undec1}
\item two-clock \PPTA with one stopwatch cost variable,
  \label{res:undec2}
\item one-clock \PPTA with two stopwatch cost variables.
  \label{res:undec3}
\end{enumerate}

We present our results as follows. 
In Section~\ref{sec:dec}, we explain the positive
result~\eqref{res:dec} using an abstraction proposed in~\cite{OW05}
for proving the decidability of \MTL model checking over timed
automata.
Then, in Section~\ref{sec:undec}, we present all our undecidability
results, starting with the proof for result~\eqref{res:undec1}, and
then slightly modifying the construction for proving
results~\eqref{res:undec2}~and~\eqref{res:undec3}.

\subsection{Decidability of \texorpdfstring{\WMTL}{WMTL} for One-Clock \PPTA With One Stopwatch Cost}
\label{sec:dec}\label{dec}

\begin{theorem}
  Model checking one-clock \PPTA with one stopwatch cost against \WMTL
  properties is decidable, and non-primitive recursive.
\end{theorem}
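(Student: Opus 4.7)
The plan is to adapt the approach of Ouaknine and Worrell~\cite{OW05} for MTL model-checking. Recall that their decidability proof goes through a translation of the MTL formula into a one-clock alternating timed automaton (1-ATA); model checking is then reduced to the emptiness problem for 1-ATA, which is decidable (though non-primitive recursive) because configurations can be abstracted into a well-structured transition system using Higman's lemma on sequences of clock values.

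First I would translate the input \WMTL formula~$\varphi$ into a one-clock alternating automaton~$\Aut_\varphi$ that reads runs of the \PPTA. The key idea is to treat the single stopwatch cost~\cost in exactly the way Ouaknine--Worrell treat time: each cost-constrained subformula $\varphi_1 \U[\cost \sim c]\varphi_2$ spawns, upon being activated at some position, an observer that records the value of~\cost at that position and checks the constraint $\sim c$ when the right-hand side is witnessed. Because there is a \emph{single} cost variable, there is a single global quantity to monitor, so the alternating automaton only needs \emph{one} observer variable (plus the PTA's unique clock). Since the cost is a \emph{stopwatch}, this observer evolves at rate~$0$ or~$1$ depending on the current location of the product; it is never reset, only read and compared with integer constants.

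Next I would form the synchronous product of~$\Aut_\varphi$ with~$\mathcal{A}$. The result is essentially a one-clock alternating timed automaton over the locations of~$\mathcal{A}$, with the stopwatch cost playing the role of a second monotonic variable. The main obstacle is to transfer the well-quasi-ordering argument of OW05 to this enriched model: configurations are now finite multisets of pairs (state, clock-or-observer age), and one must show that the component-wise order on such multisets remains a wqo, even though the observer's evolution is non-uniform (it freezes in zero-slope locations). Because the observer is still monotonic and bounded comparisons are against finitely many integers, a region-like abstraction on the stopwatch value (between consecutive integer thresholds up to the largest constant appearing in~$\varphi$) can be combined with the Higman-style abstraction on the clock ages. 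This yields a well-structured transition system, and hence decidable coverability, which in turn gives decidable emptiness for the product and decidability of the original model-checking problem.

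For the non-primitive recursive lower bound, I would reduce from MTL model checking over one-clock timed automata on finite words. Any MTL formula is a \WMTL formula in which the cost variable is time itself (slope~$1$ everywhere), and a one-clock timed automaton is trivially a one-clock \PPTA with one stopwatch cost of constant slope~$1$. Since MTL model checking on finite timed words is known to be non-primitive recursive~\cite{OW05}, the same lower bound transfers to our setting. The hard part is really the decidability argument: one must check that the OW05 wqo machinery survives the introduction of locations where the cost observer is frozen, which is the only genuinely new ingredient compared with the MTL case.
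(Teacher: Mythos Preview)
Your approach is essentially that of the paper: translate~$\varphi$ into a one-variable alternating automaton whose single variable is the stopwatch cost (so it freezes whenever the current location of~$\mathcal{A}$ has rate~$0$), form the joint product with~$\mathcal{A}$, encode joint configurations as words \`a~la Ouaknine--Worrell, and verify that the induced equivalence is still a time-abstract bisimulation---the only new case being a delay in a zero-rate location, which you correctly single out as the crux. The lower bound via the embedding of \MTL (cost~$=$~time, slope~$1$ everywhere) is also exactly the paper's argument.
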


\begin{proof}
  Time can be viewed as a special $\{1\}$-sloped cost. Hence, the
  non-primitive recursive lower bound follows from that of \MTL model
  checking over finite timed words, see~\cite{OW05,OW07}.

  The decidability then relies on the same encoding as~\cite{OW05}. We
  present the construction, but do not give all details, especially
  when there is nothing new compared with the above-mentioned paper.

  Let $\varphi$ be a~\WMTL formula, and $\mathcal{A}$ be a single-clock \PPTA
  with a stopwatch cost. Classically, from formula~$\varphi$, we construct an
  ``equivalent'' one-variable alternating timed automaton\footnote{We use the
    \emph{eager} semantics~\cite{BMOW07} for alternating automata,
    where 
    configuration of the automaton always have the same sets of successors.}
  $\mathcal{B}_\varphi$. Figure~\ref{fig-TAA} displays an example of such an
  automaton, corresponding to formula~$\G{}[a \thn (\F[\leq 3] b \ou \F[\geq
  2] c)]$ (see~\cite{OW05} for more details on alternating timed automata).
\begin{figure}[!ht]
\centering
\begin{tikzpicture}
\draw (0,0) node[circle,double,draw] (A) {$\scriptstyle \ell_1$};
\draw[latex'-] (A) -- +(-135:7mm);
\draw[-latex'] (A) .. controls +(60:10mm) and +(120:10mm) .. (A)
  node[midway,above] {$\non a$};
\draw (2,0) node[circle,draw] (B) {$\scriptstyle \ell_2$};
\draw[-latex'] (B) .. controls +(60:10mm) and +(120:10mm) .. (B)
  node[midway,above] {};
\draw[-latex'] (A) .. controls +(-10:10mm) and +(190:10mm) .. (B)
  node[midway,below] {$\scriptstyle x:=0$} node[midway,above] {$a$};
\draw[-latex'] (A) .. controls +(-10:10mm) and +(50:10mm) .. (A);
\draw (-2,0) node[circle,draw] (C) {$\scriptstyle \ell_3$};
\draw[-latex'] (C) .. controls +(60:10mm) and +(120:10mm) .. (C)
  node[midway,above] {};
\draw[-latex'] (A) .. controls +(190:10mm) and +(-10:10mm) .. (C)
  node[midway,below] {$\scriptstyle x:=0$} node[midway,above] {$a$};
\draw[-latex'] (A) .. controls +(190:10mm) and +(130:10mm) .. (A);
\draw[-latex'] (B) -- +(0:15mm) node[midway,above] {$b$} node[midway,below]
     {$\scriptstyle x\leq 3$};
\draw[-latex'] (C) -- +(180:15mm) node[midway,above] {$c$} node[midway,below]
     {$\scriptstyle x\geq 2$};
\end{tikzpicture}
\caption{A timed alternating automaton for formula $\G{}[a \thn (\F[\leq 3] b
    \ou \F[\geq 2] c)]$}\label{fig-TAA}
\end{figure}

However, note that in that case, the unique variable of the alternating
automaton is not a clock but a cost variable, whose rate will depend on the
location of~$\mathcal{A}$ which is being visited. However, as~for~\MTL,
we~have the property that $\mathcal{A} \models \varphi$ iff there is an
accepting joint run of~$\mathcal{A}$ and~$\mathcal{B}_\varphi$.

  In the following, we write $q$ for a generic location of
  $\mathcal{A}$ and $\ell$ for a generic location of
  $\mathcal{B}_\varphi$. Similarly, $Q$~denotes the set of locations
  of~$\mathcal{A}$ and $L$ the set of locations of~$\mathcal{B}_\varphi$.

  An \emph{$\mathcal{A}/\mathcal{B}_\varphi$-joint configuration} is a
  finite subset of $Q \times \R_{\geq 0} \cup L \times \R_{\geq 0}$
  with exactly one element of $Q \times \R_{\geq 0}$ (the current
  state in automaton $\mathcal{A}$). The joint behaviour of
  $\mathcal{A}$ and $\mathcal{B}_\varphi$ is made of time evolutions
  and discrete steps in a natural way.  Note that, from a given joint
  configuration $\gamma$, the time evolution is given by the current
  location $q_\gamma$ of $\mathcal{A}$: if the cost rate in $q_\gamma$
  is $1$, then all variables behave like clocks, \textit{i.e.}, grow
  with rate $1$, and if the cost rate in $q_\gamma$ is $0$, then all
  variables in $\mathcal{B}_\varphi$ are stopped, and only the clock
  of $\mathcal{A}$ grows with rate $1$.

  We encode configurations with words over the alphabet $\Gamma =
  2^{(Q \times \Reg \cup L \times \Reg)}$, where $\Reg = \{0,1,\ldots,
  M\} \cup \{\top\}$ ($M$ is an integer above the maximal constant
  appearing in both $\mathcal{A}$ and $\mathcal{B}_\varphi$). A state
  $(\ell,c)$ of $\mathcal{B}_\varphi$ will for instance be encoded by
  $(\ell,\mathit{int}(c))$ \footnote{$\mathit{int}$ represents the
    integral part.} if $c \leq M$, and it will be encoded by
  $(\ell,\top)$ if $c > M$.

  Now given a joint configuration $\gamma = \{(q,x)\} \cup
  \{(\ell_i,c_i) \mid i \in I\}$, partition~$\gamma$ into a sequence
  of subsets $\gamma_0,\gamma_1,\ldots,\gamma_p,\gamma_\top$, such
  that $\gamma_\top = \{(\alpha,\beta) \in \gamma \mid \beta > M\}$,
  and if $i,j \neq \top$, for all $(\alpha,\beta) \in \gamma_i$ and
  $(\alpha',\beta') \in \gamma_j$, $\mathit{frac}(\beta) \leq
  \mathit{frac}(\beta')$ \footnote{$\mathit{frac}$ represents the
    fractional part.} iff $i \leq j$ (so that $(\alpha,\beta)$ and
  $(\alpha',\beta')$ are in the same block $\gamma_i$ iff $\beta$ and
  $\beta'$ are both smaller than or equal to $M$ and have the same
  fractional part). We assume in addition that the fractional part of
  elements in $\gamma_0$ is $0$ (even if it means
  that~$\gamma_0=\varnothing$), and that all $\gamma_i$ for $1 \leq i
  \leq p$ are non-empty.

  If $\gamma$ is a joint configuration, we define its encoding
  $H(\gamma)$ as the word (over~$\Gamma$) 
  \[
  \reg{\gamma_0} \reg{\gamma_1} \ldots \reg{\gamma_p} \reg{\gamma_\top}
  \]
  where $\reg{\gamma_i} = \{(\alpha,\reg{\beta}) \mid (\alpha,\beta) \in
  \gamma_i\}$ with $\reg{\beta} = \mathit{int}(\beta)$ if $\beta \leq M$, and
  $\reg\beta=\top$ otherwise.
 
  \begin{example}
    Consider the configuration 
    \[
    \gamma = \{(q,1.6)\} \cup
    \{(\ell_1,5.2),(\ell_2,2.2),(\ell_2,2.6),(\ell_3,1.5),(\ell_3,4.5)\}.
    \] 
    Assuming that the maximal constant (on both~$\mathcal A$ and~$\mathcal
    B_{\varphi}$) is~$4$, the encoding is then
    \[
    H(\gamma) = \{(\ell_2,2)\} \cdot \{(\ell_3,1)\} \cdot
    \{(q,1),(\ell_2,2)\} \cdot \{(\ell_1,\top),(\ell_3,\top)\}
    \]
  \end{example}

  We define a discrete transition system over encodings of
  $\mathcal{A}/\mathcal{B}_\varphi$-joint configurations: there is a
  transition $W \Rightarrow W'$ if there exists $\gamma \in H^{-1}(W)$
  and $\gamma' \in H^{-1}(W')$ such that $\gamma \rightarrow \gamma'$
  (that can be either a time evolution or a discrete step).

  \begin{lemma}
    \label{lemma:bisimulation}
    The equivalence relation $\equiv$  defined as
    \(
    \gamma_1 \equiv \gamma_2 \eqdef H(\gamma_1) = H(\gamma_2)
    \)
    is a time-abstract bisimulation over joint configurations.
  \end{lemma}

  \begin{proof}
    We assume $\gamma_1 \rightarrow \gamma'_1$ and $\gamma_1
    \equiv \gamma_2$. We write $H(\gamma_1) = H(\gamma_2) = w_0 w_1
    \ldots w_p w_\top$ where $w_i \neq \emptyset$ if $1 \leq i \leq
    p$. We distinguish between the different possible cases for the
    transition $\gamma_1 \rightarrow \gamma'_1$.
    \begin{enumerate}[$\bullet$]
    \item assume $\gamma_1 \rightarrow \gamma'_1$ is a time evolution,
      and the cost rate in the corresponding location of $\mathcal{A}$
      is $0$. If $\gamma_1 = \{(q_1,x_1)\} \cup \{(\ell_{i,1},c_{i,1})
      \mid i \in I_1\}$, then $\gamma'_1 = \{(q_1,x_1+t_1)\} \cup
      \{(\ell_{i,1},c_{i,1}) \mid i \in I_1\}$ for some $t_1 \in
      \R_{\geq 0}$. We assume in addition that $\gamma_2 =
      \{(q_2,x_2)\} \cup \{(\ell_{i,2},c_{i,2}) \mid i \in I_2\}$.

      We set $\gamma^i_1$ the part of configuration $\gamma_1$ which
      corresponds to letter $w_i$, and we write $\alpha_1^i$ for the
      fractional part of the clock values corresponding to
      $\gamma^i_1$. We have $0 = \alpha_1^0 < \alpha_1^1 < \ldots <
      \alpha_1^p < 1$. We define similarly $(\alpha_2^i)_{0 \leq i
        \leq p}$ for configuration $\gamma_2$. We then distinguish
      between several cases:
      \begin{enumerate}[$-$]
      \item either $x_1+t_1 > M$, in which case it is sufficient to
        choose $t_2 \in \R_{\geq 0}$ such that $x_2+t_2 > M$.
      \item or $x_1+t_1 \leq M$ and $\mathit{frac}(x_1+t_1) =
        \alpha_1^i$ for some $0 \leq i \leq p$. In that case, choose
        $t_2 = x_1+t_1-\alpha_1^i+\alpha_2^i -x_2$. As $\gamma_1
        \equiv \gamma_2$, it is not difficult to check that $t_2 \in
        \R_{\geq 0}$. Moreover, $\mathit{frac}(x_2+t_2) = \alpha_2^i$
        and $\mathit{int}(x_2+t_2) = \mathit{int}(x_1+t_1)$.
      \item or $x_1+t_1 \leq M$ and $\alpha_1^i <
        \mathit{frac}(x_1+t_1) < \alpha_1^{i+1}$ for some $0 \leq i
        \leq p$ (setting $\alpha_1^{p+1} = 1$). As previously, in that
        case also, we can choose $t_2 \in \R_{\geq 0}$ such that
        $\alpha_2^i < \mathit{frac}(x_2+t_2) < \alpha_2^{i+1}$ and
        $\mathit{int}(x_2+t_2) = \mathit{int}(x_1+t_1)$.
      \end{enumerate}
      In all cases, defining $\gamma'_2 = \{(q_2,x_2+t_2)\} \cup
      \{(\ell_{i,2},c_{i,2}) \mid i \in I_2\}$, we get that $\gamma_2
      \rightarrow \gamma'_2$ and $\gamma'_1 \equiv \gamma'_2$, which
      proves the inductive case.
    \item there are two other cases (time evolution with rate of all
      variables being $1$, and discrete step), but they are similar to
      the case of \MTL, and we better refer to~\cite{OW07}.\qed
    \end{enumerate}
  \end{proof}

  Hence, from the previous lemma, we get:
  \begin{corollary}
    $W \Rightarrow^* W'$ iff there exist $\gamma \in H^{-1}(W)$ and
    $\gamma' \in H^{-1}(W')$ such that $\gamma \rightarrow^* \gamma'$.
  \end{corollary}

  The set $\Gamma = 2^{(Q \times \Reg \cup L \times \Reg)}$ is
  naturally ordered by inclusion $\subseteq$. We extend the classical
  subword relation for words over $\Gamma$ as follows: Given two words
  $a_0 a_1 \ldots a_n$ and $a'_0 a'_1 \ldots a'_{n'}$ in $\Gamma^*$,
  we say that $a_0 a_1 \ldots a_n \sqsubseteq a'_0 a'_1 \ldots
  a'_{n'}$ whenever there exists an increasing injection $\iota:
  \{0,1,\ldots,n\} \rightarrow \{0,1,\ldots,n'\}$ such that for every
  $i \in \{0,1,\ldots,n\}$, $a_i \subseteq
  a'_{\iota(i)}$. Following~\cite[Theorem~3.1]{AN00}, the
  preorder~$\sqsubseteq$ is a well-quasi-order.

  \begin{lemma}
    \label{lemma:pouet}
    Assume that $W_1 \sqsubseteq W_2$, and that $W_2 \Rightarrow^*
    W'_2$. Then, there exists $W'_1 \sqsubseteq W'_2$ such that $W_1
    \Rightarrow^* W'_1$.
  \end{lemma}

  The algorithm then proceeds as follows: we start from the encoding
  of the initial configuration, say~$W_0$, and then generate the tree
  unfolding of the implicit graph $(\Gamma^*,\Rightarrow)$, stopping a
  branch when the current node is labelled by~$W$ such that there
  already exists a node of the tree labelled by~$W'$ with $W'
  \sqsubseteq W$ (note that by Lemma~\ref{lemma:pouet}, if there is an
  accepting path from $W$, then so is there from $W'$, hence it is
  correct to prune the tree after node $W$).  Note that this tree is
  finitely branching. Hence, if the computation does not terminate,
  then it means that there is an infinite branch (by~K{\"o}nig
  lemma). This is not possible as $\sqsubseteq$ is a
  well-quasi-order. Hence, the computation eventually terminates, and
  we can decide whether there is a joint accepting computation
  in~$\mathcal{A}$ and~$\mathcal{B}_\varphi$, which implies that we
  can decide whether $\mathcal{A}$ satisfies~$\varphi$ or~not.
  \forceqed
\end{proof}

\begin{remark}
  In the case of~\MTL, the previous encoding can be used to prove the
  decidability of model checking for timed automata with any number of
  clocks. In our case, it cannot:
  Lemma~\ref{lemma:bisimulation} does not hold for two-clock \PPTA, even with
  a single stopwatch cost. Consider for instance two clocks $x$ and~$z$, and a
  cost variable~$\cost$. Assume we are in location~$q$ of the automaton with
  cost rate~$0$ and that there is an outgoing transition labelled by the
  constraint $x=1$. Assume moreover that the value of~$z$ is~$0$, whereas the
  value of~$x$ is $0.2$. We consider two cases: either the value of $\cost$ is
  $0.5$, or the value of~$\cost$ is~$0.9$. In both cases, the
  encoding\footnote{We extend the encoding we have presented above to several
    clocks, as originally done in~\cite{OW05}.} of the joint configuration is
  $\{(q,z,0)\} \cdot \{(q,x,0)\} \cdot \{(\cost,0)\}$. However, in the first
  case, the encoding when firing the transition will be $\{(q,x,1)\} \cdot
  \{(\cost,0)\} \cdot \{(q,z,0)\}$, whereas in the second case, it will be
  $\{(q,x,1)\} \cdot \{(q,z,0)\} \cdot \{(\cost,0)\}$. Hence the relation
  $\equiv$ is not a time-abstract bisimulation.
\end{remark}

\begin{remark}
  Let $\Aut$ be a \PPTA with a stopwatch cost. From the construction 
  using encodings by words we have presented above, we~see that truth of \WMTL
  formulas is invariant by classical regions (by~classical regions, we mean
  one-dimensional regions with granularity~$1$): indeed, in the above
  construction, it suffices to change the initial configuration with the
  encoding of the region we want to start from, and applying the previous
  results, we immediately get that the truth of the formula will then not
  depend on the precise initial value of the clock. 
  As~a consequence, the model checking of \WCTLstar\footnote{\WCTLstar is the
    extension of \CTLstar~\cite{CES86} with cost constraints. We omit its
    definition.} is decidable (and non-primitive recursive) for \PPTA with a
  single stopwatch cost: it suffices to label regions (in~the classical sense)
  with the \WMTL subformulas they satisfy.
  Let us mention right now that the undecidability results below directly extend to
  \WCTLstar, so that again, any extension of the model leads to undecidability. 
\end{remark}

\subsection{Undecidability Results}\label{sec:undec}\label{undec}

In this part, we prove that the above result is tight, in the sense that
adding an extra stopwatch cost ot removing the ``stopwatch'' condition yields
undecidability.

\subsubsection{One-Clock \PPTA With One Cost Variable}
 
\begin{theorem}\label{thm-1wta-WMTL}
  Model checking one-clock \PPTA with one (general) cost against \WMTL
  properties is undecidable.
\end{theorem}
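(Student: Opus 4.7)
The plan is to reduce from the halting problem of two-counter Minsky machines, which is well-known to be undecidable. Given a Minsky machine $\mathcal M$, I would construct a one-clock \PPTA $\Aut_{\mathcal M}$ equipped with a single (general) cost variable $\cost$, together with a \WMTL formula $\varphi_{\mathcal M}$, such that $\mathcal M$ halts iff there is a finite run of $\Aut_{\mathcal M}$ from its initial state satisfying $\varphi_{\mathcal M}$. The automaton will be built out of gadgets, one per instruction of~$\mathcal M$, concatenated non-deterministically according to the control flow; the role of~$\varphi_{\mathcal M}$ is to filter out the runs that do not correspond to a faithful simulation.

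The encoding I would use represents a configuration $(q,n_1,n_2)$ of $\mathcal M$ by a bounded-length segment of a run in~$\Aut_{\mathcal M}$: the location~$q$ is entered, then two ``counter slots'' are visited, one per counter. In each slot the unique clock is reset and exactly one time unit elapses; by passing through locations of different cost rates one can arrange that the cost accumulated inside the slot of counter~$i$ equals some rational value~$v_i$ uniquely determined by~$n_i$ (e.g.~$v_i = 1/(n_i+1)$, or $v_i = n_i$ bounded by some constant at each step). The crucial point distinguishing this construction from the stopwatch case is that the cost rates may be chosen larger than~$1$: a rate $\geq 2$ allows gadgets to ``scale'' the accumulated cost between two consecutive slots of the same counter, which is what makes increments, decrements and zero-tests implementable together with a verifiable consistency check.

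The formula $\varphi_{\mathcal M}$ is a conjunction of local sanity conditions, expressed via nested modalities of the form $\psi\,\U[\cost\sim k]\,\chi$. For each instruction of $\mathcal M$, the conjunct checks that the slot for the modified counter in the next configuration carries the correct cost (one more, one less, or zero), and that the slot for the other counter carries exactly the same cost as in the previous configuration. This latter \emph{preservation} assertion is the main obstacle, since it amounts to comparing cost values accumulated at two temporally distant positions of the run; the trick is to run a cost modality from the earlier slot to the later one and force that the accumulated cost between the two matches a specific integer target, which thanks to the freedom in cost rates can be achieved precisely when the two slot-costs agree. A dual universal formula, via negation, rules out runs where the encoding is broken.

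The remaining work is routine but delicate: one must design the increment, decrement and zero-test gadgets so that the interplay between the one clock, the freely-sloped cost and the point-based \WMTL semantics cannot be exploited to pass sanity checks on a cheating run (in particular by inserting spurious events or using off-integer delays). The hard part is therefore not the overall reduction scheme, which follows the now-standard pattern of \cite{BBR04,BBM06}, but the fine-tuning of the gadgets so that \emph{every} satisfying run of $\varphi_{\mathcal M}$ is forced to be the honest encoding of a halting computation of~$\mathcal M$, and conversely every halting computation can be encoded as such a run.
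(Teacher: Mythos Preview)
Your high-level plan---reduce from the halting problem for two-counter machines, build a one-clock \PPTA out of per-instruction gadgets, and use a \WMTL formula to filter out cheating runs---is exactly the strategy the paper follows. Where you diverge is in the \emph{encoding}: you propose two separate ``slots'' per configuration, one per counter, with the counter value carried by the cost accumulated in its slot. The paper instead packs \emph{both} counters into the single clock value, setting $x=2^{-c_1}3^{-c_2}$; incrementing $c_1$ then means halving~$x$, and the formula checks this by requiring the cost across a rate-$2$ location to hit a fixed integer target. The zero-test for~$c_1$ is done by repeatedly tripling~$x$ in a loop until it reaches or overshoots~$1$ (reaching~$1$ exactly iff $c_1=0$), with an auxiliary chain of states that simultaneously reconstructs the original value of~$x$ so the simulation can continue. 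All the ``fine-tuning'' you defer is precisely this module, and it is the heart of the proof.

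Your slot-based alternative is plausible in spirit, but as written it has a concrete gap you do not address: any cost modality you run from one counter's slot in configuration~$k$ to the same counter's slot in configuration~$k{+}1$ necessarily traverses the \emph{other} counter's slot in between, whose cost is itself variable. So the ``force the accumulated cost to hit a fixed integer'' trick does not, on its own, isolate the counter you want to preserve; you need an additional mechanism (e.g.\ complementary slots whose costs sum to a constant, or a careful ordering of slots depending on which counter is being touched) and you have not supplied one. Likewise, your suggested encodings $v_i=1/(n_i{+}1)$ or $v_i=n_i$ are problematic: the first makes increment a non-linear map that no fixed-rate gadget realises, and the second is unbounded. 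An encoding like $v_i=2^{-n_i}$ fixes both issues and makes the zero-test a simple equality check, but then you are essentially back to the paper's idea applied counter-by-counter, and you still owe the preservation argument. In short: the scheme is right, but what you call ``routine but delicate'' is where the actual proof lives, and the paper's single-value encoding $2^{-c_1}3^{-c_2}$ is chosen precisely to sidestep the cross-contamination problem your two-slot design faces.
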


We push some ideas used in~\cite{BBM06,BLM-fossacs07} further to prove this
new undecidability result.  We reduce the halting problem for a
two-counter machine $\mathcal{M}$ to that problem. The unique clock of
the automaton will store both values of the counters. If the first
(resp. second) counter has value $c_1$ (resp. $c_2$), then the value
of the clock will be $2^{-c_1} 3^{-c_2}$. Our machine~$\mathcal M$ has two
kinds of instructions. The first kind increments one of the counter,
say~$\mathtt{c}$, and jumps to the next instruction:
\begin{equation}
  \mathtt{p_i:\ c:=c+1;\ \texttt{goto}\ p_j.}\label{eq1}
\end{equation}
The second kind decrements one of the counter, say $\mathtt{c}$, and
goes to the next instruction, except if the value of the counter was
zero:
\begin{equation}
  \mathtt{p_i:\ \texttt{if}\ (c==0)\ \texttt{then goto } p_j \texttt{ 
      else } c:=c-1; \texttt{ goto }  p_k.} \label{eq2}
\end{equation}

Our reduction consists in building a one-clock \PPTA~$\mathcal A_{\mathcal M}$
and a \WMTL formula~$\phi$ such that the two-counter machine~$\mathcal M$
halts iff $\mathcal A_{\mathcal M}$ has a run satisfying~$\phi$. Each
instruction of~$\mathcal M$ is encoded as a module, all the modules are then
plugged together.

\paragraph{Module for instruction~\eqref{eq1}.}
Consider instruction~\eqref{eq1},
which increments the first counter. To simulate this instruction, we
need to be able to divide the value of the clock by~$2$.  The
corresponding module, named~$\Mod_i$, is depicted on
Figure~\ref{mod-incr1}.\footnote{As there is a unique cost variable, we
  write its rate within the location, and add a discrete
  incrementation (\textit{e.g.}~$+2$) on edges, when the edge has a
  positive cost.}
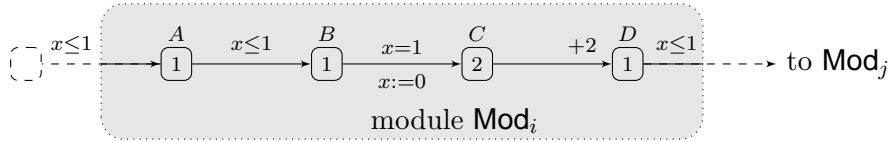
\begin{figure}[!ht]
  \centering
  \begin{tikzpicture}
    \everymath{\scriptstyle}
    \draw[rounded corners=3mm,dotted,fill=white!90!black] (-1,-1) -- (-1,.8) -- 
(7,.8) -- (7,-1) --cycle;
    \draw (0,0) node [draw,rounded corners=1mm] (A) {$1$} +(0,.4) node {$A$};
    \draw (2,0) node [draw,rounded corners=1mm] (B) {$1$} +(0,.4) node {$B$};
    \draw (4,0) node [draw,rounded corners=1mm] (C) {$2$} +(0,.4) node {$C$};
    \draw (6,0) node [draw,rounded corners=1mm] (D) {$1$} +(0,.4) node {$D$};
    \draw [-latex'] (A) -- (B) node [pos=.5,above] {$x \leq 1$};
    \draw [-latex'] (B) -- (C) node [midway,above] {$x=1$} node [midway,below] {$x:=0$};
    \draw [-latex'] (C) -- (D) node [near end,above] {${\scriptstyle +2}$};
    \draw (8.8,0) node (F) 
    {to~$\displaystyle \Mod_j$};
    \draw (-2,0) node (G) [draw,rounded corners=1mm,dashed] {$\phantom1$};
    \path (3.7,-.75) node {module~$\displaystyle \Mod_i$};
    \draw [-latex',dashed] (D) -- (F) 
    node[pos=.25,above] {$x\leq 1$}; 
    \draw (D) -- (7,0);
    \draw[latex'-,dashed] (A) -- (G) node[pos=.75,above] {$x\leq 1$};
    \draw (A) -- (-1,0);
  \end{tikzpicture}
  \caption{Module for incrementing~$c_1$}\label{mod-incr1}
\end{figure}

The following lemma is then easy to prove:
\begin{lemma}\label{lemme1}
  Assume that there is a run~$\rho$ entering
  module~$\Mod_i$ with~$x=x_0\leq 1$, exiting with~$x=x_1$, and such
  that no time elapses in~$A$ and~$D$ and the cost between~$A$ and~$D$
  equals~$3$.  Then $x_1=x_0/2$.
\end{lemma}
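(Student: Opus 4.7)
The plan is a direct calculation tracking the value of the clock and the accumulated cost along the unique sequence of edges $A \to B \to C \to D$ (this sequence is forced by the module's structure).

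First, I would unpack the constraint ``no time elapses in $A$ and~$D$''. Since the run enters~$A$ with clock value $x_0$, it must fire the transition $A \to B$ instantly, arriving in~$B$ still with $x=x_0$; note this is allowed because the guard~$x\le 1$ holds by hypothesis. The transition $B \to C$ demands $x=1$, so the run must delay exactly $1-x_0$ time units in~$B$ before firing it; the clock is then reset and the run enters~$C$ with $x=0$. Let $t$ be the delay spent in~$C$; then the run fires $C \to D$ with clock value~$t$, and since no time elapses in~$D$, the exit value is $x_1 = t$.

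Next, I accumulate the cost. The rate in~$B$ is~$1$ and the delay is $1-x_0$, contributing $1-x_0$. The rate in~$C$ is~$2$ and the delay is~$t$, contributing $2t$. The edge $C \to D$ carries a discrete cost of~$2$. No cost is accrued in~$A$ or~$D$ since no time elapses there (and these edges carry no discrete cost). Setting the total equal to~$3$ gives
\[
(1-x_0) + 2t + 2 = 3,
\]
so $2t = x_0$, i.e.\ $t = x_0/2$, and therefore $x_1 = x_0/2$, as required.

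There is essentially no obstacle: the module is deterministic in its sequence of locations, the ``no time in $A$ and $D$'' hypothesis pins down two of the four delays, and the cost equation uniquely determines the remaining one (the delay in~$C$). The only subtlety worth a brief mention is that we must use $x_0 \le 1$ to ensure the guard on $A \to B$ is satisfied and that the delay $1 - x_0$ in~$B$ is nonnegative, but both follow from the hypothesis.
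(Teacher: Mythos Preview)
Your proof is correct and is precisely the direct calculation the paper has in mind; indeed, the paper does not spell out a proof at all, merely remarking that the lemma ``is then easy to prove,'' and your computation is the obvious one.
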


A similar result can be obtained for a module incrementing~$c_2$: it
simply suffices to replace the cost rate in~$C$ by~$3$ instead of~$2$.

\paragraph{Module for instruction~\eqref{eq2}.}
The simulation of this instruction is much more involved than the
previous instruction. Indeed, we first have to check whether the value
of~$x$ when entering the module is of the
form~$3^{-c_2}$ (\textit{i.e.}, whether~$c_1=0$). This is achieved,
roughly, by multiplying the value of~$x$ by~$3$ until it reaches
(or~exceeds)~$1$.  Depending on the result, this module will then
branch to module~$\Mod_j$ or decrement counter~$c_1$ and go to
module~$\Mod_k$. The difficult point is that clock~$x$ must be re-set
to its original value between the first and the second part.
We consider the module~$\Mod_i$ depicted on Figure~\ref{mod-decr1}.
\begin{figure}[!ht]
  \centering
  \begin{tikzpicture}
    \everymath{\scriptstyle}
    \draw[rounded corners=3mm,dotted,fill=white!90!black] (-.5,-2.7) -- (-.5,2) -- (9.3,2) -- (9.3,-2.7)
 --cycle;
    \draw (0,1) node [draw,rounded corners=1mm] (A0) {$1$} +(-.3,.3) node {$A_0$};
    \draw (1.3,1) node [draw,rounded corners=1mm] (B0) {$3$} +(0,.4) node {$B_0$};
    \draw (2.6,1) node [draw,rounded corners=1mm] (C0) {$1$} +(0,.4) node {$C_0$};
    \draw (0,0) node [draw,rounded corners=1mm] (A) {$1$} +(0,.4) node {$A$};
    \draw (1.3,0) node [draw,rounded corners=1mm] (B) {$3$} +(0,.4) node {$B$};
    \draw (2.6,0) node [draw,rounded corners=1mm] (C) {$1$} +(0.3,.3) node {$C$};
    %
    \draw (2.6,-1) node [draw,rounded corners=1mm] (C') {$1$} +(.4,0) node {$C'$};
    \draw (3.9,0) node [draw,rounded corners=1mm] (D) {$1$} +(0,.4) node {$D$};
    \draw (5.2,1) node [draw,rounded corners=1mm] (E1) {$3$} +(0,.4) node {$E_1$};
    \draw (5.2,-1) node [draw,rounded corners=1mm] (E2) {$3$} +(0,.4) node {$E_2$};
    \draw (6.2,1) node [draw,rounded corners=1mm] (F1) {$1$} +(0,.4) node {$F_1$};
    \draw (6.2,-1) node [draw,rounded corners=1mm] (F2) {$1$} +(0,.4) node {$F_2$};
    \draw (7.5,1) node [draw,rounded corners=1mm] (G1) {$3$} +(0,.4) node {$G_1$};
    \draw (7.5,-1) node [draw,rounded corners=1mm] (G2) {$3$} +(0,.4) node {$G_2$};
    \draw (8.5,1) node [draw,rounded corners=1mm] (H1) {$1$} +(0,.4) node {$H_1$};
    \draw (8.5,-1) node [draw,rounded corners=1mm] (H2) {$1$} +(0,.4) node {$H_2$};
    \draw (5,-2) node [draw,rounded corners=1mm] (A2) {$1$} +(0,-.4) node {$A_2$};
    \draw (6.3,-2) node [draw,rounded corners=1mm] (B2) {$2$} +(0,-.4) node {$B_2$};
    \draw (7.6,-2) node [draw,rounded corners=1mm] (C2) {$1$} +(0,-.4) node {$C_2$};
    \draw (8.9,-2) node [draw,rounded corners=1mm] (D2) {$1$} +(0,-.4) node {$D_2$};
    \draw[-latex'] (A0) -- (B0) node[midway,above=-2pt] {$x<1$};
    \draw[-latex'] (B0) -- (C0) node[midway,above=-2pt] {$x=1$} node[midway,below=-2pt] {$x:=0$};
    \draw[-latex'] (C0) -- (C);
    \draw[-latex'] (C) -- (D);
    \draw[-latex',rounded corners=1mm] (D) |- +(-1.5,-1.5) node[pos=.1,left] {$x<1$} -| (A);
    \draw[-latex'] (A) -- (B);
    \draw[-latex'] (C) .. controls +(-110:5mm) and +(110:5mm) .. (C');
    \draw[-latex'] (C') .. controls +(70:5mm) and +(-70:5mm) .. (C);
    \draw[-latex'] (B) -- (C) node[midway,above=-2pt] {$x=1$}
    node[midway,below=-2pt] {$x:=0$};
    \draw[-latex'] (D) -- (E1) node[midway,above=-2pt,sloped] {$x=1$}
    node[midway,below=-2pt,sloped] {$x:=0$}; 
    \draw[-latex'] (D) -- (E2) node[midway,above=-2pt,sloped] {$x>1$}
    node[midway,below=-2pt,sloped] {$x:=0$}; 
    \draw[-latex'] (E1) -- (F1);
    \draw[-latex'] (E2) -- (F2);
    \draw[-latex'] (F1) -- (G1) node[midway,below=-2pt] {$x:=0$};
    \draw[-latex'] (F2) -- (G2) node[midway,below=-2pt] {$x:=0$};
    \draw[-latex'] (G1) -- (H1);
    \draw[-latex'] (G2) -- (H2);
    \draw[-latex',rounded corners=1mm] (H2) |- +(-1,-.5) -| (A2);
    \draw[-latex'] (A2) -- (B2);
    \draw[-latex'] (B2) -- (C2) node[midway,above=-2pt] {$x=1$} node[midway,below=-2pt] {$x:=0$};
    \draw[-latex'] (C2) -- (D2) node[pos=.7,above=-2pt] {$\scriptscriptstyle +1$};
    \draw[dashed,-latex'] (D2) -- (9.7,-2) node[right] {to~$\displaystyle\Mod_k$};
    \draw (D2) -- (9.3,-2); 
    \draw[dashed,-latex'] (H1) -- (9.7,1) node[right] (lj) {to~$\displaystyle\Mod_j$};
    \draw (H1) -- (9.3,1); 
    \draw (-1.6,1) node (Z) [draw,rounded corners=1mm,dashed] {$\phantom1$};
    \draw[dashed] (Z) -- (A0) node[pos=.4,above=-2pt] {$x\leq 1$};
    \draw[-latex'] (-.5,1) -- (A0);
    \draw[-latex',dashed,rounded corners=1mm] (A0) |- +(.8,.7) -| (lj);
    \draw[rounded corners=1mm] (A0) |- +(9.3,.7) node[pos=.2,right=-2pt] {$x=1$};
    \path (2,-2.5) node {module~$\displaystyle \Mod_i$};
  \end{tikzpicture}
  \caption{Module testing/decrementing~$c_1$}\label{mod-decr1}
\end{figure}
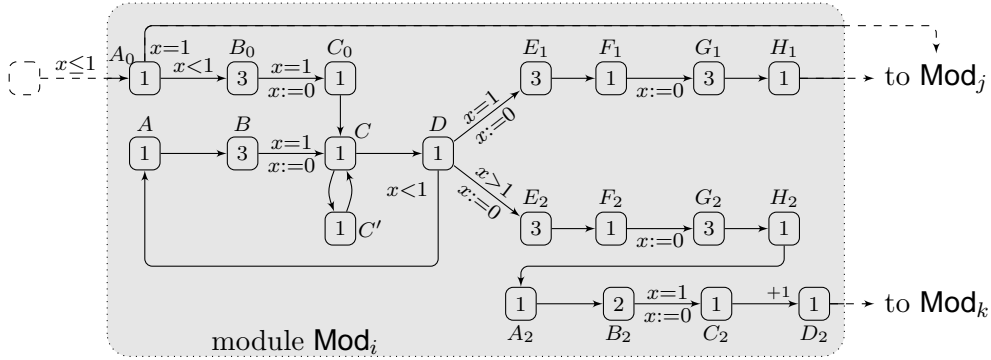

\begin{lemma}\label{lemme2}
  Assume there exists a run~$\rho$ entering module~$\Mod_i$
  with~$x=x_0\leq 1$, exiting to module~$\Mod_j$ with~$x=x_1$, and
  such that
  \begin{enumerate}[$\bullet$]
  \item no time elapses in~$A_0$, $C_0$, $D$, $A$, $C'$, $F_1$
    and~$H_1$;
  \item any visit to~$C_0$ or~$C'$ is eventually followed (strictly)
    by a visit to~$C'$ or~$F_1$;
  \item the cost exactly equals~$3$ along each part of~$\rho$
    between~$A$ or~$A_0$ and the next visit in~$D$, between~$C_0$
    or~$C'$ and the next visit in~$C'$ or~$F_1$, and between the last
    visit to~$D$ and~$H_1$.
  \end{enumerate}
  Then $x_1=x_0$ and there exists $n\in\N$ s.t.~$x_0=3^{-n}$.
\end{lemma}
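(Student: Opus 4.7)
The plan is to work forward through the module, using the three cost-$3$ constraints \textup{(i)} between $A/A_0$ and the next~$D$, \textup{(ii)} between $C_0/C'$ and the next $C'/F_1$, and \textup{(iii)} between the last $D$ and $H_1$ to pin down the shape of any witnessing run~$\rho$. A trivial base case handles $n=0$: the edge $A_0 \to \Mod_j$ guarded by $x=1$ allows a direct jump with $x_1=x_0=1=3^0$. In every other case $\rho$ must begin by firing $A_0 \to B_0$ (guarded by $x<1$), which I analyze below.

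Consider a single $A \to D$ (or $A_0 \to D$) iteration. Since no time elapses in $A, A_0, D$, time passes only in $B$ (or $B_0$), with rate~$3$, and in $C$, with rate~$1$ (the $C \leftrightarrow C'$ excursions being instantaneous). If $\rho$ enters $A$ (or $A_0$) with clock value~$y$, the guard $x=1$ on $B \to C$ forces delay $1-y$ in $B$ for cost $3(1-y)$; the clock is then reset, some total time $t$ is spent in $C$ for cost~$t$, and $D$ is reached with clock value~$t$. Constraint~\textup{(i)} gives $3(1-y)+t=3$, hence $t=3y$ and the clock value at $D$ equals~$3y$. Since no edge between $D$ and the next $A$ resets the clock, each iteration triples the entry clock value.

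To exit through~$H_1$, $\rho$ must take the edge $D\to E_1$ (guard $x=1$), so after some $n \geq 1$ iterations one has $3^n x_0 = 1$, i.e., $x_0 = 3^{-n}$. After the reset at $D\to E_1$, $\rho$ spends $t_{E_1}$ in $E_1$ (rate~$3$), passes through $F_1$ in zero time, is reset again at $F_1\to G_1$, and spends $t_{G_1}$ in $G_1$ (rate~$3$); thus $x_1=t_{G_1}$. Constraint~\textup{(iii)} yields $3t_{E_1}+3t_{G_1}=3$, so $t_{E_1}+t_{G_1}=1$ with both values nonnegative. Summing the $n$ iterations of cost~$3$ given by~\textup{(i)}, the cost from~$A_0$ to the last $D$ equals $3n$; subtracting the $A_0\to C_0$ prefix of cost $3(1-x_0)=3-3^{1-n}$ and adding the $D\to F_1$ tail of cost $3 t_{E_1}$ gives cost $3(n-1)+3^{1-n}+3t_{E_1}$ from $C_0$ to $F_1$. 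Constraint~\textup{(ii)} forces this quantity to equal $3(m+1)$, where $m$ is the number of $C'$ visits strictly between $C_0$ and $F_1$, so $t_{E_1}=m+2-n-3^{-n}$. Since $3^{-n}\in(0,1)$ for $n\geq1$, the constraint $t_{E_1}\in[0,1]$ uniquely forces $m=n-1$; hence $t_{E_1}=1-3^{-n}$ and $t_{G_1}=3^{-n}=x_0$, so $x_1=x_0$.

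The main obstacle is the cost bookkeeping for the path from $C_0$ to $F_1$ across multiple $A \to D$ iterations interleaved with $C \leftrightarrow C'$ excursions: one must confirm that each iteration contributes exactly cost~$3$ to this total, so the sum telescopes cleanly, and that the range of $t_{E_1}$ is tight enough to pin the integer~$m$ uniquely. The role of $C'$ with its chained cost-$3$ constraint is precisely to force the cost-to-$F_1$ to be a multiple of~$3$, which is what turns the linear relation between $t_{E_1}$ and $n$ into the integrality that yields $t_{G_1}=3^{-n}$.
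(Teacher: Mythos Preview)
Your argument is correct and, for the first half (tripling of the clock along each $A/A_0\to D$ segment, hence $x_0=3^{-n}$), coincides with the paper's.  For the second half, showing $x_1=x_0$, you take a genuinely different route.  The paper proves inductively that at the $k$-th visit to~$C_0$ or~$C'$ the clock equals~$(3^k-3)x_0$, then plugs this into the final $C'\to F_1$ and $D\to H_1$ segments to read off~$\gamma=1-x_0$ and~$\delta=x_0$ directly.  You instead compute the total cost from $C_0$ to~$F_1$ in two ways---once as $3(n-1)+3^{1-n}+3t_{E_1}$ via the $A/A_0\to D$ decomposition, once as $3(m+1)$ via the $C_0/C'\to C'/F_1$ decomposition---and use the range constraint $t_{E_1}\in[0,1]$ to pin the integer~$m$ to~$n-1$.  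Your accounting argument has the advantage that it is indifferent to \emph{where} the $C'$ visits occur within the loop (the paper's inductive step tacitly assumes a specific transition pattern between consecutive $C'$ visits, which strictly speaking needs a side argument to rule out, e.g., two $D$ visits between consecutive $C'$'s); the paper's approach, on the other hand, is more constructive in that it exhibits the clock value at every $C'$ visit rather than only constraining a global sum.  One small point you leave implicit: the $D\to A$ transitions contribute zero cost (no time in~$D$ or~$A$, no discrete cost on the edge), which is why the $n$ segment costs really do sum to exactly~$3n$.
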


\begin{proof}
  Let~$\rho$ be such a run. First, if~$x_0=1$ and $\rho$ goes
  directly to module~$\Mod_j$, then the result immediately follows.

  Otherwise, $\rho$ visits~$D$ at least once. We prove inductively
  that, at the~$k$-th visit in~$D$, the value of~$x$ equals~$3^k x_0$
  (remember that no time can elapse in~$D$). The first part of~$\rho$
  between~$A_0$ and~$D$ is as follows\footnote{By contradiction, it can be
  proved that~$C'$ cannot be visited along that part of~$\rho$, since the cost
  between~$C_0$ and~$C'$ must be exactly~$3$.} 
  (the labels on the arrows represent the
  cost of the corresponding transition):
  \[
  (A_0,x_0) \xrightarrow{0} (B_0,x_0) \xrightarrow{3(1-x_0)} (B_0,1)
  \xrightarrow{0} (C_0,0) \xrightarrow{0} (C,0) \xrightarrow{\alpha}
  (C,\alpha) \xrightarrow{0} (D,\alpha).
  \] 
  The total cost, $3(1-x_0)+\alpha$, must equal~$3$. Thus~$\alpha=3
  x_0$.  A~similar argument shows that one turn in the loop (from~$D$ back to
  itself) also
  multiplies clock~$x$ by~$3$, hence the result.  Since $\rho$
  eventually fires the transition from~$D$ to~$E_1$, it must be the
  case that~$x_0=3^{-n}$ for some~$n\in\N$.

  \medskip We now prove that~$x_1=x_0$. The proof follows a similar
  line: we prove that at the~$k$-th visit to~$C_0$ or~$C'$, the value
  of~$x$ is~$(3^k-3) x_0$. This clearly holds when~$k=1$ (\emph{i.e.}, when we
  visit~$C_0$). Assuming that~$\rho$ eventually visits~$C'$, we consider 
  the part of~$\rho$ between~$C_0$ and the first visit to~$C'$:
  \begin{multline*}
    (C_0,0) \xrightarrow{0} (C,0) \xrightarrow{3x_0} (C,3x_0)
    \xrightarrow{0} (D,3x_0) \xrightarrow{0} (A,3x_0) \xrightarrow{0}
    (B,3x_0) \\
    (B,3x_0) \xrightarrow{3(1-3x_0)} (B,1) \xrightarrow{0} (C,0)
    \xrightarrow{\beta} (C,\beta) \xrightarrow{0} (C',\beta).
  \end{multline*}
  The cost of this part is $3-6x_0+\beta$, and must equal~$3$. Thus
  $\beta=6x_0$ as required.  A similar computation (considering each part
  of~$\rho$ between two consecutive visits to~$C'$) proves the
  inductive case.

  Now, consider the part from the last visit of~$C'$ to~$H_1$:
  \begin{multline*}
    (C',(3^n-3)x_0) \xrightarrow{0} (C,(3^n-3)x_0) \xrightarrow{3x_0}
    (C,3^nx_0)
    \xrightarrow{0} (D,3^nx_0) \xrightarrow{0} (E_1,0) \\
    (E_1,0) \xrightarrow{3\gamma} (E_1,\gamma) \xrightarrow{0}
    (F_1,\gamma) \xrightarrow{0} (G_1,0) \xrightarrow{3\delta}
    (G_1,\delta) \xrightarrow{0} (H_1,\delta).
  \end{multline*}
  Remember that~$3^nx_0=1$, which explains why the computation goes to~$E_1$
  instead of~$E_2$). 
  The cost between~$C'$ and~$F_1$ is
  $3x_0+3\gamma$, and equals~$3$. Thus~$\gamma=1-x_0$. Similarly, the
  cost between~$D$ and~$H_1$ is $3\gamma+3\delta$ and must equal~$3$,
  which proves that~$\delta$, which is precisely~$x_1$, equals~$x_0$.
\end{proof}

We have a similar result for a run going to module~$\Mod_k$:
\begin{lemma}\label{lemme3}
  Assume there exists a run~$\rho$ entering module~$\Mod_i$
  with~$x=x_0\leq 1$, exiting to module~$\Mod_k$ with~$x=x_1$, and
  such that
  \begin{enumerate}[$\bullet$]
  \item no time elapses in~$A_0$, $C_0$, $D$, $A$, $C'$, $F_2$ $H_2$,
    $A_2$ and~$D_2$;
  \item any visit to~$C_0$ or~$C'$ is eventually followed (strictly)
    by a visit to~$C'$ or~$F_2$;
  \item the cost exactly equals~$3$ along each part of~$\rho$
    between~$A$ or~$A_0$ and the next visit in~$D$, between~$C_0$
    or~$C'$ and the next visit in~$C'$ or~$F_2$, between the last visit to~$D$
    and~$H_2$, and between~$H_2$ and~$D_2$.
  \end{enumerate}
  Then $x_1=2x_0$ and for every~$n\in\N$, $x_0\not=3^{-n}$.
\end{lemma}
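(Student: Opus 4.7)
The plan is to follow the proof of Lemma~\ref{lemme2} almost verbatim, exploiting the fact that the module in Figure~\ref{mod-decr1} differs from the ``$\Mod_j$-branch'' only in its output phase: after leaving $D$, the run passes through $E_2, F_2, G_2, H_2$ and then the new block $A_2, B_2, C_2, D_2$ (instead of $E_1, F_1, G_1, H_1$). The $A, B, C, D$-loop that triples the clock, as well as the $C_0/C'$ bookkeeping, are untouched.

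First, I would reuse the two inductions from the proof of Lemma~\ref{lemme2}: at the $k$-th visit to~$D$, the clock~$x$ equals $3^k x_0$; at the $k$-th visit to $C_0$ or $C'$, it equals $(3^k-3)x_0$. Both depend only on the $A_0, B_0, C_0, A, B, C, D$ part of the module together with the cost-$3$ hypotheses ``$A$/$A_0$ to the next $D$'' and ``$C_0$/$C'$ to the next $C'$/$F_2$'', which are formally identical to those of Lemma~\ref{lemme2} (only $F_1$ is replaced by $F_2$). Next, I would derive $x_0\neq 3^{-n}$: since $\rho$ reaches $\Mod_k$, the final visit to~$D$ must fire $D\to E_2$, whose guard is $x>1$, while every previous visit must fire $D\to A$, whose guard is $x<1$; so the sequence $(3^k x_0)_k$ must jump strictly from values $<1$ to a value $>1$ and can never equal~$1$. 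If $x_0 = 3^{-n}$, then at the $n$-th visit $x=1$ exactly, neither $D\to A$ nor $D\to E_2$ is enabled, and the only available transition is $D\to E_1$, which leads to $\Mod_j$, contradicting the hypothesis.

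The only new computation is the output phase. Let $n$ be the index of the final visit to~$D$, so $x=3^n x_0>1$ there; let $\gamma$ and $\delta$ denote the delays in $E_2$ and $G_2$ (no time elapses in~$D$, $E_2\to F_2$ is instantaneous, $F_2\to G_2$ resets~$x$). Exactly as in Lemma~\ref{lemme2}, the cost-$3$ constraint from the last visit to $C_0$ or $C'$ (value $(3^n-3)x_0$, or~$0$ if $n=0$, after waiting $3x_0$ in~$C$) to the next~$F_2$ reads $3x_0 + 3\gamma = 3$, giving $\gamma = 1-x_0$, and the cost-$3$ constraint from the last~$D$ to~$H_2$ reads $3\gamma + 3\delta = 3$, hence $\delta=x_0$. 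Since no time elapses in $H_2$, $A_2$, or $D_2$, and $H_2\to A_2$ does not reset~$x$, we arrive in~$B_2$ with $x=x_0$. The guard $x=1$ on $B_2\to C_2$ forces a delay of $1-x_0$ in~$B_2$ (cost rate~$2$), then a reset, then a delay $\zeta$ in~$C_2$ (cost rate~$1$), and the discrete~$+1$ on $C_2\to D_2$. The cost-$3$ constraint ``$H_2$ to $D_2$'' reads $2(1-x_0) + \zeta + 1 = 3$, so $\zeta = 2x_0$, which is precisely $x_1$.

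The main ``obstacle'' is purely bookkeeping: ensuring that each of the four cost-$3$ hypotheses is consumed exactly once (for the $A$-$D$ loop, for the $C_0/C'$ induction, to fix $\delta$, and to fix $\zeta$), and checking that the cost rate~$2$ in~$B_2$ together with the discrete~$+1$ on $C_2\to D_2$ do combine to \emph{double} $x_0$ rather than preserve it, which is what makes this module decrement~$c_1$. No conceptual step beyond the arithmetic already carried out for Lemma~\ref{lemme2} is required.
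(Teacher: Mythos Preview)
Your proposal is correct and follows essentially the same approach as the paper's proof: reuse the two inductions from Lemma~\ref{lemme2} verbatim, argue that the guard $x>1$ on $D\to E_2$ rules out $x_0=3^{-n}$, show that the value of~$x$ at~$H_2$ is again~$x_0$, and then compute the $H_2$--$D_2$ segment to obtain $2(1-x_0)+\zeta+1=3$, hence $x_1=\zeta=2x_0$. The paper is simply terser (it says ``the arguments of the previous proof still apply'' and writes out only the final $H_2$-to-$D_2$ chain), whereas you spell out the $E_2,F_2,G_2$ portion explicitly; the parenthetical ``or~$0$ if $n=0$'' is a harmless slip since $n\geq 1$ always.
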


\begin{proof}
  The arguments of the previous proof still apply: the value of~$x$ at
  the~$k$-th visit to~$D$ is~$3^kx_0$. If~$x_0$ had been of the
  form~$3^{-n}$, then $\rho$~would not have been able to fire the
  transition to~$E_2$. Also, the value of~$x$ when~$\rho$ visits~$H_2$
  is precisely~$x_0$. The part from~$H_2$ to~$D$ is then as follows:
  \[
  (H_2,x_0) \xrightarrow{0} (A_2,x_0) \xrightarrow{0} (B_2,x_0)
  \xrightarrow{2(1-x_0)} (B_2,1) \xrightarrow{0} (C_2,0)
  \xrightarrow{\kappa} (C_2,\kappa) \xrightarrow{1} (D_2,\kappa).
  \]
  The cost of this part is~$2(1-x_0)+\kappa+1$, 
  so that~$x_1=\kappa=2x_0$. 
\end{proof}

Again, these results can easily be adapted to the case of an
instruction testing and decrementing~$c_2$: it suffices to
\begin{enumerate}[$\bullet$]
\item set the costs of states~$B_0$, $B$, $E_1$, $E_2$, $G_1$
  and~$G_2$ to~$2$,
\item set the cost of~$B_2$ to~$3$, 
\item set the discrete cost of $C_2\to D_2$ to~$0$ 
\item set the discrete costs of~$C\to D$, $G_1\to H_1$ and~$G_2\to
  H_2$ to~$+1$.
\end{enumerate}  

\paragraph{Global reduction.}

We now explain the global reduction: the automaton~$\mathcal A_{\mathcal
  M}$ is obtained by plugging the modules above following the
instructions of~$\mathcal M$. There is one special module for
instruction~$\textsf{Halt}$, which is made of a single~$\textsf{Halt}$
state.  We also add a special initial state that lets $1$~t.u. elapse
(so that~$x=1$) before entering the first module.

The \WMTL formula is built as follows:
we first define an
intermediary subformula stating that no time can elapse in some given
state. It writes 
\(
\textsf{zero}(P) = \G(P \thn (P \U[=0] \non P))
\).
If the local cost in state~$P$ is not zero (which is the case in all
the states of~$\mathcal A_{\mathcal M}$), this formula forbids time
elapsing in~$P$.  We then let~$\phi_1$ be the formula requiring that
time cannot elapse in a state labelled with~$A$, $D$, $A_0$, $C_0$,
$C'$, $F_1$, $F_2$ $H_1$, $H_2$, $A_2$ and~$D_2$.
It remains to express the other conditions of
Lemmas~\ref{lemme1}, \ref{lemme2} and~\ref{lemme3}. 
We~write~$\phi_2$ for the corresponding formula..
For instance, the conditions of Lemmas~\ref{lemme2} and~\ref{lemme3} 
would be expressed as follows\footnote{The atomic
  proposition~$\Mod_{\text{decr}}$ is used to indicate that we are in a module
  decrementing one of the counters. It implicitly labels all the states of
  such modules.}:
\[
\G\left[
  A_0 \et \Mod_{\text{decr}} \thn 
  \left\{
  \begin{array}{c}
    \left(
      \begin{array}{l}
	(A\ou A_0) \thn (\non D \U[=3] D) \et\null\\
	(C_0\ou C') \thn (\non(C'\ou F_1) \U[=3] (C' \ou F_1)) \et\null \\
	(D\et \non D\U H_1) \thn (\non H_1 \U[=3] H_1)
      \end{array}
    \right) \U H_1
    \\\\
    \bigvee \\\\
    \left( 
      \begin{array}{l}
	(A\ou A_0) \thn (\non D \U[=3] D) \et\null\\
	(C_0\ou C') \thn (\non(C'\ou F_2) \U[=3] (C' \ou F_2)) \et\null \\
	(D\et \non D\U H_2) \thn (\non H_2 \U[=3] H_2) \et\null \\
	H2 \thn (\non D_2 \U[=3] D_2)
      \end{array}
    \right)\U H_2
  \end{array}
  \right\}
\right]
\]

The following proposition 
is now straightforward:
\begin{proposition}
  The machine~$\mathcal M$ halts iff there exists a run
  in~$\mathcal A_{\mathcal M}$ satisfying~$\phi_1\et\phi_2 \et
  \F\textsf{\upshape Halt}$.
\end{proposition}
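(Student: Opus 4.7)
The plan is to establish both directions of the equivalence by simultaneous induction on the number of simulated instructions, maintaining the invariant that whenever the run enters module~$\Mod_i$ corresponding to instruction~$\mathtt{p_i}$ of~$\mathcal M$, the clock value equals $2^{-c_1}3^{-c_2}$, where $(c_1,c_2)$ are the current counter values. The initial special state, which lets one time unit elapse, ensures the invariant holds at the first instruction with $c_1=c_2=0$, so that $x=1$.

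For the ``only if'' direction, suppose $\mathcal M$ halts. I~would construct a witnessing run~$\varrho$ module by module: on entry to $\Mod_i$ with clock value $x_0=2^{-c_1}3^{-c_2}$, the computations in the proofs of Lemmas~\ref{lemme1}--\ref{lemme3} prescribe exactly the delays to take in each location of~$\Mod_i$ so that (i)~no time elapses in the locations flagged by~$\phi_1$, (ii)~the cost of each flagged sub-run is exactly~$3$, and (iii)~the correct exit transition ($\Mod_j$ versus~$\Mod_k$ in the decrement case) is fireable. The exit clock value then matches the updated counters: $x_0/2$ or $x_0/3$ after an increment, $x_0$ after a zero-test, $2x_0$ (resp.~$3x_0$) after a decrement of~$c_1$ (resp.~$c_2$). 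By construction~$\varrho\models\phi_1\et\phi_2$, and since the simulated computation reaches $\textsf{Halt}$ after finitely many steps, also $\varrho\models\F\textsf{Halt}$.

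For the ``if'' direction, assume $\varrho\models\phi_1\et\phi_2\et\F\textsf{Halt}$. The conjunct $\phi_1$ supplies the ``no time elapses'' hypotheses of Lemmas~\ref{lemme1}--\ref{lemme3}, while~$\phi_2$ supplies the ``cost~$=3$'' hypotheses. Hence on each entry to a module, the relevant lemma applies; inductively the invariant is preserved. In a decrement module, the branching is moreover forced to agree with the actual zero-test on~$c_1$: by the invariant $x_0=2^{-c_1}3^{-c_2}$ is of the form~$3^{-n}$ precisely when $c_1=0$, which is exactly the dichotomy between Lemma~\ref{lemme2} (exit to~$\Mod_j$, $x_1=x_0$) and Lemma~\ref{lemme3} (exit to~$\Mod_k$, $x_1=2x_0$). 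The hypothesis $\F\textsf{Halt}$ then forces the simulation to reach the distinguished $\textsf{Halt}$ module, which means $\mathcal M$ halts.

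The main delicate point is writing $\phi_2$ so that its cost constraints attach to a single visit of a module rather than to later, structurally identical modules encountered along~$\varrho$. The displayed instance of $\phi_2$ already suggests the right device: each constraint is guarded by a module-type atomic proposition (such as $\Mod_{\text{decr}}$) and each $\U[=3]$ is anchored at a location that occurs only once per module visit, so that the universal closure $\G[\ldots]$ distributes the constraint correctly over successive module instances. The two branches of the disjunction inside $\phi_2$ are selected by the exit location ($H_1$ versus~$H_2$), which are mutually exclusive, so the ``or'' in the formula faithfully encodes the conditional branching of instruction~\eqref{eq2}.
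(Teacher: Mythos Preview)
Your proposal is correct and follows exactly the approach the paper intends: the paper itself offers no proof, declaring the proposition ``straightforward'' once Lemmas~\ref{lemme1}--\ref{lemme3} are in place, and your sketch spells out precisely the two-direction induction on module traversals with the invariant $x=2^{-c_1}3^{-c_2}$ that those lemmas are designed to support. Your final paragraph on anchoring the constraints of~$\phi_2$ is a useful observation that the paper leaves implicit; note also that $\phi_2$ simultaneously supplies the ``eventually followed by $C'$ or $F_i$'' hypothesis of Lemmas~\ref{lemme2} and~\ref{lemme3}, not only the cost-$3$ condition.
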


\begin{remark}
  \begin{enumerate}[$\bullet$]
  \item For the sake of simplicity, our reduction uses discrete costs,
    so that our \WMTL formulas only involve constraints~``$=0$''
    and~``$=3$'' (and the same formula~$\varphi_2$ can be used for both
    counters). 
    But our undecidability result easily extends to automata without discrete costs.
  \item Our reduction uses a $\{1,2,3\}$-sloped cost variable, but it
    could be achieved with any $\{p,q,r\}$-sloped cost variable
    (with~$0<p<q<r$, and $p$, $q$ and~$r$ are pairwise coprime) by
    encoding the values of the counters by the clock value
    $(p/q)^{c_1}\cdot(p/r)^{c_2}$.
  \item Our \WMTL formula can easily be turned into a \WMITL formula
    (whose syntax is that of \MITL~\cite{AFH96}, \textit{i.e.}, with
    no punctual constraints). It suffices to replace formulas
    of the form $(\non p)\U[=n] p$ with $(\non p)\U[\leq n]p \et (\non
    p)\U[\geq n]p$.
  \end{enumerate}
\end{remark}

\subsubsection{Two-Clock \PPTA with One Stopwatch-Cost Variable}
While this case does not fit in our ``one-clock'' setting, it is an
interesting intermediate step between the previous and the next results.

\begin{theorem}
  Model checking two-clock \PPTA with one stopwatch cost against \WMTL
  properties is undecidable.
\end{theorem}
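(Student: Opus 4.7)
The plan is to extend the reduction from Theorem~\ref{thm-1wta-WMTL}: starting from the halting problem for a two-counter Minsky machine $\mathcal M$, I would build a two-clock \PPTA $\mathcal A_{\mathcal M}$ with a single stopwatch cost $p$, together with a \WMTL formula $\phi$, such that $\mathcal M$ halts iff $\mathcal A_{\mathcal M}$ has a run satisfying $\phi$. The first clock $x$ still encodes the pair of counter values as $x=2^{-c_1}\cdot 3^{-c_2}$, and the global architecture (a module per machine instruction, connected by guarded transitions, plus a starting gadget that normalises $x$ to~$1$ and a special $\textsf{Halt}$ location) is kept unchanged from the previous reduction. The extra resources that must now take over the role of the $\{1,2,3\}$-sloped cost of Theorem~\ref{thm-1wta-WMTL} are the second clock $y$ and the fact that $p$, while only $\{0,1\}$-sloped, can still be constrained exactly via \WMTL subformulas of the shape $(\non q)\U[=3] q$.

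The key technical step is to replace every location whose cost rate was $2$ or $3$ in the previous modules (e.g.\ $C$, $B_2$, $E_1$, $E_2$, $G_1$, $G_2$, and $B,B_0$) by a small rate-$1$ sub-gadget. The idea is to use clock $y$ as a ``witness of the delay'': the first time we enter the sub-gadget, we reset $y$ and let a duration $\alpha$ elapse in a rate-$1$ state (so $p$ grows by $\alpha$ and $y$ reaches $\alpha$); then, via a self-loop controlled by a fresh atomic proposition, we reset $y$ again and force one (resp.\ two) additional passes of duration $\alpha$, producing a total contribution of $2\alpha$ (resp.\ $3\alpha$) to $p$. A \WMTL constraint of the form $(\non q)\U[=3]q$ then imposes exactly the same arithmetic relation between successive values of $x$ that was established in the previous proof, so the analogues of Lemmas~\ref{lemme1}--\ref{lemme3} go through verbatim: the incrementing module still sends $x_0$ to $x_0/2$, and the testing/decrementing module still either confirms $x_0=3^{-n}$ and exits with $x_1=x_0$, or witnesses $x_0\neq 3^{-n}$ and exits with $x_1=2x_0$.

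The main obstacle will be the faithful design of these sub-gadgets and of the accompanying subformulas. With two clocks, a simulating run has more freedom to interleave resets and delays, and one must make sure that the WMTL formula leaves only the intended schedule as a witness. Concretely, I would add atomic propositions marking the entry and exit of each ``rate-$k$'' sub-gadget, rephrase $\phi_2$ to impose that (i) the durations of the $k$ sub-phases are pairwise equal (by forcing $y$ to be reset at specified propositions and then constraining the cost accumulated between consecutive resets), and (ii) the whole sub-gadget is traversed atomically between two visits to the boundary states (as was done via the $\textsf{zero}$ macro). Apart from that, the formula $\phi_1$ forbidding time elapse in the critical states, the local cost-equals-$3$ constraints, and the reachability conjunct $\F\textsf{Halt}$ carry over unchanged, and one obtains $\mathcal M$ halts iff $\mathcal A_{\mathcal M},s_0\models\phi_1\wedge\phi_2\wedge\F\textsf{Halt}$, which proves undecidability.
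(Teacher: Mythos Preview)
Your overall architecture---reuse the reduction of Theorem~\ref{thm-1wta-WMTL}, keep the encoding $x=2^{-c_1}3^{-c_2}$, and replace only the rate-$2$/rate-$3$ locations by stopwatch gadgets---is exactly what the paper does. The gap is in the gadget itself.

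You propose a ``rate-$1$ sub-gadget'' that resets~$y$, lets some duration~$\alpha$ elapse, resets~$y$ again, and then forces one or two further passes of the \emph{same} duration~$\alpha$ via \WMTL. Two things fail. First, if every state in the sub-gadget has rate~$1$, the accumulated cost equals the elapsed time; so either~$x$ advances by~$2\alpha$ (if it is not reset on the self-loop) whereas the original rate-$2$ location advances it only by~$\alpha$, or~$x$ is reset and you must force the two phases to have equal length. Second---and this is the real obstacle---\WMTL cannot express ``the cost of phase~1 equals the cost of phase~2'' for an unknown~$\alpha$: the only constraints available are $\cost\sim c$ for \emph{integer constants}~$c$. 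The global ``cost equals~$3$'' constraint of Lemmas~\ref{lemme1}--\ref{lemme3} pins down only the \emph{sum} of the phase durations, not their individual values, so runs with unequal phases would also satisfy your formula and the simulation becomes unfaithful.

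The paper avoids this entirely by using rate-$0$ states together with guards on \emph{both} clocks, and needs no additional \WMTL constraints. For a rate-$2$ location entered with $x=x_0$ and left via the guard $x=1$: reset the second clock~$z$ on entry; in a rate-$1$ state wait until $x=1$ (so $z=1-x_0$) and reset~$x$; in a rate-$0$ state wait until $z=1$ (which brings~$x$ back to~$x_0$) and reset~$z$; then repeat the rate-$1$ phase. Each rate-$1$ phase contributes cost $1-x_0$, the rate-$0$ phase contributes nothing, and the two guards $x=1$ and $z=1$ make the replay deterministic---no equality of unknown durations needs to be expressed in the logic. A dual gadget handles locations entered with $x=0$, and rate~$3$ is obtained by adding one more $0$/$1$ pair. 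The essential ingredient you are missing is the rate-$0$ state: that is what lets the second clock store and restore the value~$x_0$ between passes.
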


\begin{proof}
  The proof uses the same encoding, except that states with cost~$2$
  or~$3$ are replaced by sequences of states with costs~$0$ and~$1$
  having the same effect. We have two different kinds of states with
  cost~$2$ (or~$3$):
  \begin{enumerate}[$\bullet$]
  \item those in which we stay until~$x=1$:  
    \begin{center}
      \begin{tikzpicture}
        \everymath{\scriptstyle}
        \draw (0,0) node [dashed,draw,rounded corners=1mm] (A) {$\phantom1$} +(0,.4) node {$A$};
        \draw (2,0) node [draw,rounded corners=1mm] (B) {$2$} +(0,.4) node {$B$};
        \draw (4,0) node [dashed,draw,rounded corners=1mm] (C) {$\phantom1$} +(0,.4) node {$C$};
        \draw [dashed,-latex'] (A) -- (B) node [pos=.7,above] {$x \leq 1$};
        \draw [dashed,-latex'] (B) -- (C) node [pos=.3,above] {$x=1$} node [pos=.3,below] {$x:=0$};
        \draw (B) -- +(180:10mm);
        \draw (B) -- +(0:10mm);
      \end{tikzpicture}
    \end{center}
    These states are replaced by the following submodule:
    \begin{center}
      \begin{tikzpicture}
        \everymath{\scriptstyle}
        \draw (0,0) node [dashed,draw,rounded corners=1mm] (A) {$\phantom1$} +(0,.4) node {$A$};
        \draw (2,0) node [draw,rounded corners=1mm] (B1) {$1$} +(0,.4) node {$B$};
        \draw (4,0) node [draw,rounded corners=1mm] (B2) {$0$} +(0,.4) node {$B$};
        \draw (6,0) node [draw,rounded corners=1mm] (B3) {$1$} +(0,.4) node {$B$};
        \draw (8,0) node [dashed,draw,rounded corners=1mm] (C) {$\phantom1$} +(0,.4) node {$C$};
        \draw [dashed,-latex'] (A) -- (B1) node [pos=.7,above] {$x \leq 1$} 
        node [pos=.7,below] {$z:=0$};
        \draw [-latex'] (B1) -- (B2) node [pos=.5,above] {$x=1$} 
        node [pos=.5,below] {$x:=0$};
        \draw [-latex'] (B2) -- (B3) node [pos=.5,above] {$z=1$} 
        node [pos=.5,below] {$z:=0$};
        \draw [dashed,-latex'] (B3) -- (C) node [pos=.3,above] {$x=1$} node [pos=.3,below] {$x:=0$};
        \draw (B1) -- +(180:10mm);
        \draw (B3) -- +(0:10mm);
      \end{tikzpicture}
    \end{center}
   A simple computation shows that both sequences have the same
   effect on clock~$x$ and induce the same cost. Of course, the case
   of cost~$3$ is handled by adding one more pair of states with
   costs~$0$ and~$1$.

  \item those in which we enter with~$x=0$ (and exit with~$x\leq 1$):
    \begin{center}
      \begin{tikzpicture}
        \everymath{\scriptstyle}
        \draw (0,0) node [dashed,draw,rounded corners=1mm] (A) {$\phantom1$} +(0,.4) node {$A$};
        \draw (2,0) node [draw,rounded corners=1mm] (B) {$2$} +(0,.4) node {$B$};
        \draw (4,0) node [dashed,draw,rounded corners=1mm] (C) {$\phantom1$} +(0,.4) node {$C$};
        \draw [dashed,-latex'] (A) -- (B) node [pos=.7,below] {$x:=0$};
        \draw [dashed,-latex'] (B) -- (C) node [pos=.7,above] {$x\leq 1$};
        \draw (B) -- +(180:10mm);
        \draw (B) -- +(0:10mm);
      \end{tikzpicture}
    \end{center}    
    Those are replace with a slightly different sequence of states:
    \begin{center}
      \begin{tikzpicture}
        \everymath{\scriptstyle}
        \draw (0,0) node [dashed,draw,rounded corners=1mm] (A) {$\phantom1$} +(0,.4) node {$A$};
        \draw (2,0) node [draw,rounded corners=1mm] (B1) {$1$} +(0,.4) node {$B$};
        \draw (4,0) node [draw,rounded corners=1mm] (B2) {$0$} +(0,.4) node {$B$};
        \draw (6,0) node [draw,rounded corners=1mm] (B3) {$1$} +(0,.4) node {$B$};
        \draw (8,0) node [dashed,draw,rounded corners=1mm] (C) {$\phantom1$} +(0,.4) node {$C$};
        \draw [dashed,-latex'] (A) -- (B1) node [pos=.7,below] {$x:=0$};
        \draw [-latex'] (B1) -- (B2) node [pos=.5,above] {$x\leq 1$} node [pos=.5,below] {$z:=0$};
        \draw [-latex'] (B2) -- (B3) node [pos=.5,above] {$x=1$} 
        node [pos=.5,below] {$x:=0$};
        \draw [dashed,-latex'] (B3) -- (C) node [pos=.3,above] {$z=1$};
        \draw (B1) -- +(180:10mm);
        \draw (B3) -- +(0:10mm);
      \end{tikzpicture}
    \end{center}
  Again, one is easily convinced that both sequences are
  ``equivalent'', and that this transformation adapts to states with
  cost~$3$.\qed
  \end{enumerate}
\end{proof}

\subsubsection{One-Clock \PPTA with Two Stopwatch-Cost Variables}


In the above constructions, each clock can be replaced with an
observer variable, \textit{i.e.}, with a ``clock cost'' that is not
involved in the guards of the automaton anymore. We briefly explain
this transformation on an example, and leave the details to the keen
reader.

\begin{figure}[!ht]
  \centering
  \begin{minipage}{.45\linewidth}
    \begin{tikzpicture}
      \everymath{\scriptstyle}
      \draw (1,1) node[draw,rounded corners=1mm] (A) {$\phantom1$} +(0,.4) node {$A$};
      \draw (2.5,1) node[draw,rounded corners=1mm] (a) {$\phantom1$} +(0,.4) node {$\phantom B$};
      \draw (3.5,.5) node[draw,rounded corners=1mm] (b) {$\phantom1$} +(0,.4) node {$\phantom B$};
      \draw (3.5,1.5) node[draw,rounded corners=1mm] (c) {$\phantom1$} +(0,.4) node {$\phantom B$};
      \draw (5,.5) node[draw,rounded corners=1mm] (B) {$\phantom1$} +(0,.4) node {$B$};
      \draw (5,1.5) node[draw,rounded corners=1mm] (C) {$\phantom1$} +(0,.4) node {$C$};
      \draw[-latex'] (A) -- (a) node[midway,above=-2pt] {$x:=0$};
      \draw[-latex'] (a) -- (b);
      \draw[-latex'] (a) -- (c);
      \draw[-latex'] (b) -- (B) node[midway,above=-2pt] {$x=1$};
      \draw[-latex'] (c) -- (C) node[midway,above=-2pt] {$x<1$};
    \end{tikzpicture}
  \end{minipage}\hfil
  \begin{minipage}{.45\linewidth}
    \begin{tikzpicture}
      \everymath{\scriptstyle}
      \draw (1,1) node[draw,rounded corners=1mm] (A) {$1$} +(0,.4) node {$A$};
      \draw (1.9,1) node[draw,rounded corners=1mm,scale=.7] (x0) 
            {$\scriptscriptstyle 1$} +(0,.2) node[scale=.7] {$x_0$};
      \draw (2.5,1) node[draw,rounded corners=1mm] (a) {$1$} +(0,.4) node {$\phantom B$};
      \draw (3.5,.5) node[draw,rounded corners=1mm] (b) {$1$} +(0,.4) node {$\phantom B$};
      \draw (3.5,1.5) node[draw,rounded corners=1mm] (c) {$1$} +(0,.4) node {$\phantom B$};
      \draw (4.1,1.5) node[draw,rounded corners=1mm,scale=.7] (x<1) 
            {$\scriptscriptstyle 1$} +(0,.2) node[scale=.7] {$x_{<1}$};
      \draw (5,.5) node[draw,rounded corners=1mm] (B) {$1$} +(0,.4) node {$B$};
      \draw (4.1,.5) node[draw,rounded corners=1mm,scale=.7] (x=1) 
            {$\scriptscriptstyle 1$} +(0,.2) node[scale=.7] {$x_{=1}$};
      \draw (5,1.5) node[draw,rounded corners=1mm] (C) {$1$} +(0,.4) node {$C$};
      \draw[-latex'] (A) -- (x0); \draw[-latex'] (x0) -- (a); 
      \draw[-latex'] (a) -- (b);
      \draw[-latex'] (a) -- (c);
      \draw[-latex'] (b) -- (x=1); 
      \draw[-latex'] (x=1) -- (B);
      \draw[-latex'] (c) -- (x<1); 
      \draw[-latex'] (x<1) -- (C);      
    \end{tikzpicture}
  \end{minipage}
  \caption{Replacing a clock with an extra ``clock cost''}\label{fig-ex-obs}
\end{figure}
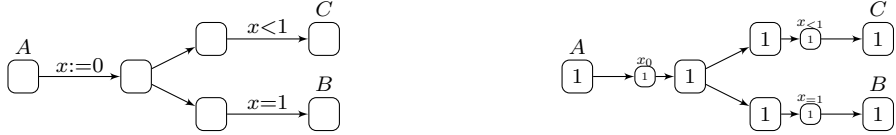
Figure~\ref{fig-ex-obs} displays the transformation to be applied to the
automaton. It then suffices to enforce that no time elapses in
states~$x_0$, $x_{<1}$ and~$x_{=1}$, and that the following formula
holds: 
\[
\bigwedge_{\mathord{\sim n} \in \{\mathord{<1}, \mathord{=1}\}} \G \Bigl[\bigl(x_0 \et \non x_0 \U x_{\sim n}\bigl)
\thn \bigl( \non x_0 \U[(c_x \sim n)] x_{\sim n}\bigl) \Bigr]
\]
This precisely encodes the role of clock~$x$ in the original automaton with a
clock cost, which is in particular a stopwatch cost. Note that this
transformation is not correct in general, but it is here because our reduction
never involves two consecutive transitions with the same guard. Thus, we get
immediately the following result:
\begin{theorem}
  Model checking one-clock \PPTA with two stopwatch-cost variables against
  \WMTL properties is undecidable.
\end{theorem}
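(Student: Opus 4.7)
The plan is to reduce from the immediately preceding result (two-clock \PPTA with one stopwatch cost), by replacing the second clock $x$ of the automaton $\mathcal{A}_{\mathcal{M}}$ built in that proof with a fresh stopwatch-cost variable $c_x$ having rate $1$ in every location. The reduction thus keeps a single genuine clock (the one already used in the original construction), but adds a second stopwatch cost besides the cost used in the previous undecidability proof. The halting problem for two-counter machines will then be encoded as before, with $c_x$ simulating $x$ through \WMTL observer constraints.

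Concretely, I would perform the syntactic transformation sketched in Figure~\ref{fig-ex-obs} on each transition of the two-clock automaton. Every transition guarded by a constraint on $x$ (in our construction, only $x=1$, $x<1$, $x\leq 1$, or a reset $x:=0$ occurs) is removed, and its guard is replaced by a passage through a freshly labelled instantaneous observer state $x_0$ (for the reset), $x_{=1}$, or $x_{<1}$. All new states have cost rate~$1$ for $c_x$ (so $c_x$ keeps growing there as $x$ would have) and are forced to be instantaneous by conjoining $\textsf{zero}(x_0) \et \textsf{zero}(x_{=1}) \et \textsf{zero}(x_{<1})$ to the specification. The clock $x$ itself is then discarded, so the resulting automaton has exactly one clock and two stopwatch costs (the original cost and~$c_x$).

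To force $c_x$ to behave as the clock $x$ did, I would then add, to the \WMTL formula~$\phi_1 \et \phi_2 \et \F\textsf{Halt}$ of the previous reduction, the constraint
\[
\bigwedge_{\mathord{\sim n}\in\{\mathord{<1},\mathord{=1}\}} \G\Bigl[\bigl(x_0 \et \non x_0 \U x_{\sim n}\bigr) \thn \bigl(\non x_0 \U[(c_x \sim n)] x_{\sim n}\bigr)\Bigr]
\]
which says that between a reset-observer $x_0$ and the next observation $x_{\sim n}$ (with no intermediate reset), the accumulated cost $c_x$ must satisfy~$\sim n$. The two directions of correctness are then straightforward: any accepting run of the two-clock automaton lifts to a run of the new one-clock, two-stopwatch-cost automaton by inserting the observer states with value matching $x$; and conversely, any run of the new automaton satisfying the added formula has $c_x$ agree with the ``virtual'' value of $x$ at each observation point.

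The main subtlety, and the only non-routine step, is justifying that the observer formula is sufficient to simulate $x$ faithfully. As remarked in the excerpt, this encoding is \emph{not} correct for arbitrary two-clock automata: if two consecutive transitions tested $x$ with the same comparator and no intervening reset, the formula above could not distinguish their two cost measurements. The correctness argument therefore relies on inspecting the modules $\Mod_i$ of the previous proof and checking the structural property that along any execution, consecutive non-reset observations of $x$ are always separated by a reset $x:=0$. Once this property is verified module by module, the reduction goes through unchanged and yields undecidability.
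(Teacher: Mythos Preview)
Your proposal is correct and takes essentially the same approach as the paper: replace one of the two clocks of the preceding construction by a rate-$1$ cost observer via the transformation of Figure~\ref{fig-ex-obs}, enforce the removed clock's guard semantics through exactly the \WMTL formula you quote, and appeal to a structural property of the modules to justify that the observer formula is faithful. The only minor discrepancy is that the paper states the needed structural property as ``no two consecutive transitions with the same guard'' rather than your stronger ``any two observations separated by a reset''; you already identify the weaker condition in your discussion of the subtlety, and it is the one you would end up verifying module by module.
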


\section{Conclusion}

In this paper, we have studied various model-checking problems for one-clock
priced timed automata. We have proved that the model-checking of one-clock
priced timed automata against \WCTL properties is \PSPACE-complete. This is
rather surprising as model-checking \TCTL over one-clock timed automata has
the same complexity, though it allows much less features. For proving this
result, we have exhibited a sufficient granularity such that truth of formulas
over regions defined with this granularity is uniform.
Based on this result, we developed a space-efficient algorithm which
computes satisfaction of subformulas on-the-fly.
This result has to be contrasted with the undecidability result
of~\cite{BBM06} which establishes that model-checking priced timed automata
with three clocks and more against \WCTL properties is undecidable.

We have also depicted the precise decidability border for \WMTL
model-checking, a cost-constrained extension of \LTL. We have proved that the
restriction to single-clock single-stopwatch cost variable leads to
decidability, and that any single extension leads to undecidability.

There are several natural research directions: the decidability of
\WCTL model-checking for two-clocks priced timed automata is not
known, we just know that these models have an infinite
bisimulation~\cite{BBR04}; another interesting extension 
is multi-constrained modalities, 
e.g.~$\E\phi\U[\cost_1 \leq 5,\cost_2>3]\phi$?



\begin{thebibliography}{BFH{\etalchar{+}}01b}

\bibitem[AAM06]{AAM06}
Yasmina Abdedda{\"i}m, Eugene Asarin, and Oded Maler.
\newblock Scheduling with timed automata.
\newblock {\em Theoretical Computer Science}, 354(2):272--300, 2006.

\bibitem[ABM04]{ABM04}
Rajeev Alur, Mikhail Bernadsky, and P.~Madhusudan.
\newblock Optimal reachability in weighted timed games.
\newblock In {\em Proc. 31st International Colloquium on Automata, Languages
  and Programming (ICALP'04)}, volume 3142 of {\em Lecture Notes in Computer
  Science}, pages 122--133. Springer, 2004.

\bibitem[ACD93]{ACD93}
Rajeev Alur, Costas Courcoubetis, and David Dill.
\newblock Model-checking in dense real-time.
\newblock {\em Information and Computation}, 104(1):2--34, 1993.

\bibitem[AD94]{AD94}
Rajeev Alur and David Dill.
\newblock A theory of timed automata.
\newblock {\em Theoretical Computer Science}, 126(2):183--235, 1994.

\bibitem[AFH96]{AFH96}
Rajeev Alur, Tom{\'a}s Feder, and {\relax Th}omas~A. Henzinger.
\newblock The benefits of relaxing punctuality.
\newblock {\em Journal of the ACM}, 43(1):116--146, 1996.

\bibitem[AH90]{AH90}
Rajeev Alur and {\relax Th}omas~A. Henzinger.
\newblock Real-time logics: Complexity and expressiveness.
\newblock In {\em Proc. 5th Annual Symposium on Logic in Computer Science
  (LICS'90)}, pages 390--401. IEEE Computer Society Press, 1990.

\bibitem[AL88]{AL}
Martin Abadi and Leslie Lamport.
\newblock The existence of refinement mappings.
\newblock In {\em Proc. of the 3rd Annual IEEE Symp. on Logic In Computer
  Science (LICS'88)}, pages 165--175. IEEE Computer Society Press, 1988.

\bibitem[ALP01]{ATP01}
Rajeev Alur, Salvatore La{ }Torre, and George~J. Pappas.
\newblock Optimal paths in weighted timed automata.
\newblock In {\em Proc. 4th International Workshop Hybrid Systems: Computation
  and Control (HSCC'01)}, volume 2034 of {\em Lecture Notes in Computer
  Science}, pages 49--62. Springer, 2001.

\bibitem[AN00]{AN00}
Parosh~Aziz Abdulla and Aletta Nyl{\'e}n.
\newblock Better is better than well: {O}n efficient verification of
  infinite-state systems.
\newblock In {\em Proc. 15th Annual Symposium on Logic in Computer Science
  (LICS'00)}, pages 132--140. IEEE Computer Society press, 2000.

\bibitem[BBBR07]{BBBR07}
Patricia Bouyer, {\relax Th}omas Brihaye, V{\'e}ronique Bruy{\`e}re, and
  Jean-Fran{\c c}ois Raskin.
\newblock On the optimal reachability problem on weighted timed automata.
\newblock {\em Formal Methods in System Design}, 31(2):135--175, October 2007.

\bibitem[BBL04]{BBL04}
Patricia Bouyer, Ed~Brinksma, and Kim~G. Larsen.
\newblock Staying alive as cheaply as possible.
\newblock In {\em Proc. 7th International Workshop on Hybrid Systems:
  Computation and Control (HSCC'04)}, volume 2993 of {\em Lecture Notes in
  Computer Science}, pages 203--218. Springer, 2004.

\bibitem[BBL08]{BBL05}
Patricia Bouyer, Ed~Brinksma, and Kim~G. Larsen.
\newblock Optimal infinite scheduling for multi-priced timed automata.
\newblock {\em Formal Methods in System Design}, 32(1):2--23, February 2008.

\bibitem[BBM06]{BBM06}
Patricia Bouyer, {\relax Th}omas Brihaye, and Nicolas Markey.
\newblock Improved undecidability results on weighted timed automata.
\newblock {\em Information Processing Letters}, 98(5):188--194, 2006.

\bibitem[BBR04]{BBR04}
{\relax Th}omas Brihaye, V{\'e}ronique Bruy{\`e}re, and Jean-Fran{\c c}ois
  Raskin.
\newblock Model-checking for weighted timed automata.
\newblock In {\em Proc. Joint Conf.\ Formal Modelling and Analysis of Timed
  Systems and Formal Techniques in Real-Time and Fault Tolerant System
  (FORMATS+FTRTFT'04)}, volume 3253 of {\em Lecture Notes in Computer Science},
  pages 277--292. Springer, 2004.

\bibitem[BBR05]{BBR05}
{\relax Th}omas Brihaye, V{\'e}ronique Bruy{\`e}re, and Jean-Fran{\c c}ois
  Raskin.
\newblock On optimal timed strategies.
\newblock In {\em Proc. 3rd International Conference on Formal Modeling and
  Analysis of Timed Systems (FORMATS'05)}, volume 3821 of {\em Lecture Notes in
  Computer Science}, pages 49--64. Springer, 2005.

\bibitem[BCFL04]{BCFL04}
Patricia Bouyer, Franck Cassez, Emmanuel Fleury, and Kim~G. Larsen.
\newblock Optimal strategies in priced timed game automata.
\newblock In {\em Proc. 24th Conf.\ Found.\ Softw.\ Tech.\ \& Theor.\ Comp.\
  Science (FST{\&}TCS'04)}, volume 3328 of {\em Lecture Notes in Computer
  Science}, pages 148--160. Springer, 2004.

\bibitem[BES93]{BES-lics93}
Ahmed Bouajjani, Rachid Echahed, and Joseph Sifakis.
\newblock On model checking for real-time properties with durations.
\newblock In {\em Proc. 8th Annual Symposium on Logic in Computer Science
  (LICS'93)}. IEEE Computer Society Press, 1993.

\bibitem[BFH{\etalchar{+}}01a]{guiding}
Gerd Behrmann, Ansgar Fehnker, {\relax Th}omas Hune, Kim~G. Larsen, Paul
  Pettersson, and Judi Romijn.
\newblock Efficient guiding towards cost-optimality in {UPPAAL}.
\newblock In {\em Proc.\ 7th International Conference on Tools and Algorithms
  for the Construction and Analysis of Systems (TACAS'01)}, volume 2031 of {\em
  Lecture Notes in Computer Science}, pages 174--188, 2001.

\bibitem[BFH{\etalchar{+}}01b]{BFH+01}
Gerd Behrmann, Ansgar Fehnker, Thomas Hune, Kim~G. Larsen, Paul Pettersson,
  Judi Romijn, and Frits Vaandrager.
\newblock Minimum-cost reachability for priced timed automata.
\newblock In {\em Proc. 4th International Workshop on Hybrid Systems:
  Computation and Control (HSCC'01)}, volume 2034 of {\em Lecture Notes in
  Computer Science}, pages 147--161. Springer, 2001.

\bibitem[BLM07]{BLM-fossacs07}
Patricia Bouyer, Kim~G. Larsen, and Nicolas Markey.
\newblock Model-checking one-clock priced timed automata.
\newblock In {\em {P}roceedings of the 10th {I}nternational {C}onference on
  {F}oundations of {S}oftware {S}cience and {C}omputation {S}tructures
  ({FoSSaCS}'07)}, volume 4423 of {\em Lecture Notes in Computer Science},
  pages 108--122. Springer, March 2007.

\bibitem[BLMR06]{BLMR06}
Patricia Bouyer, Kim~G. Larsen, Nicolas Markey, and Jacob~I. Rasmussen.
\newblock Almost optimal strategies in one-clock priced timed automata.
\newblock In {\em Proc. 26th Conf.\ Found.\ Softw.\ Tech.\ \& Theor.\ Comp.\
  Science (FST{\&}TCS'06)}, volume 4337 of {\em Lecture Notes in Computer
  Science}, pages 346--357. Springer, 2006.

\bibitem[BLR05a]{performance}
Gerd Behrmann, Kim~G. Larsen, and Jacob~I. Rasmussen.
\newblock Optimal scheduling using priced timed automata.
\newblock {\em ACM SIGMETRICS Performance Evaluation Review}, 32(4):34--40,
  2005.

\bibitem[BLR05b]{cora}
Gerd Behrmann, Kim~G. Larsen, and Jacob~I. Rasmussen.
\newblock Priced timed automata: Algorithms and applications.
\newblock In {\em Revised Lectures 3rd International Symposium on Formal
  Methods for Components and Objects ({FMCO}'04)}, volume 3657 of {\em Lecture
  Notes in Computer Science}, pages 162--182. Springer, 2005.

\bibitem[BM07]{BM-formats07}
Patricia Bouyer and Nicolas Markey.
\newblock Costs are expensive!
\newblock In {\em {P}roceedings of the 5th {I}nternational {C}onference on
  {F}ormal {M}odelling and {A}nalysis of {T}imed {S}ystems ({FORMATS}'07)},
  volume 4763 of {\em Lecture Notes in Computer Science}, pages 53--68.
  Springer, October 2007.

\bibitem[BMOW07]{BMOW07}
Patricia Bouyer, Nicolas Markey, Jo{\"e}l Ouaknine, and James Worrell.
\newblock The cost of punctuality.
\newblock In {\em Proc. 21st Annual Symposium on Logic in Computer Science
  (LICS'07)}, pages 109--118. IEEE Computer Society Press, 2007.

\bibitem[CES86]{CES86}
Edmund~M. Clarke, E.~Allen Emerson, and A.~Prasad Sistla.
\newblock Automatic verification of finite-state concurrent systems using
  temporal logic specifications.
\newblock {\em ACM Transactions on Programming Languages and Systems},
  8(2):244--263, April 1986.

\bibitem[HJ94]{HJ-fac94}
Hans Hansson and Bengt Jonsson.
\newblock A logic for reasoning about time and reliability.
\newblock {\em Formal Aspects of Computing}, 6(5):512--535, 1994.

\bibitem[HKV96]{HKV96}
Thomas~A. Henzinger, Orna Kupferman, and Moshe~Y. Vardi.
\newblock A space-efficient on-the-fly algorithm for real-time model checking.
\newblock In {\em {P}roc.\ 7th {I}ntl. {C}onf.\ {C}oncurrency {T}heory
  ({CONCUR}'96)}, volume 1119 of {\em Lecture Notes in Computer Science}, pages
  514--529. Springer, 1996.

\bibitem[Koy90]{koymans90}
Ron Koymans.
\newblock Specifying real-time properties with {M}etric {T}emporal {L}ogic.
\newblock {\em Real-Time Systems}, 2(4):255--299, 1990.

\bibitem[LMS04]{LMS04}
Fran{\c c}ois Laroussinie, Nicolas Markey, and {\relax Ph}ilippe Schnoebelen.
\newblock Model checking timed automata with one or two clocks.
\newblock In {\em Proc. 15th International Conference on Concurrency Theory
  (CONCUR'04)}, volume 3170 of {\em LNCS}, pages 387--401. Springer, 2004.

\bibitem[LR05]{LR05b}
Kim~G. Larsen and Jacob~I. Rassmussen.
\newblock Optimal conditional reachability for multi-priced timed automata.
\newblock In {\em Proc. 8th Internationl Conference on Foundations of Software
  Science and Computation Structures (FoSSaCS'05)}, volume 3441 of {\em Lecture
  Notes in Computer Science}, pages 234--249. Springer, 2005.

\bibitem[LW05]{LW05}
Slawomir Lasota and Igor Walukiewicz.
\newblock Alternating timed automata.
\newblock In {\em Proc. 8th International Conference on Foundations of Software
  Science and Computation Structures (FoSSaCS'05)}, volume 3441 of {\em LNCS},
  pages 250--265. Springer, 2005.

\bibitem[LW08]{LW07}
Slawomir Lasota and Igor Walukiewicz.
\newblock Alternating timed automata.
\newblock {\em ACM Transactions on Computational Logic}, 9(2), March 2008.

\bibitem[OG76]{OG}
Susan Owicki and David Gries.
\newblock An axiomatic proof technique for parallel programs.
\newblock {\em Acta Informatica}, 6(4):319--340, August 1976.

\bibitem[OW05]{OW05}
Jo{\"e}l Ouaknine and James Worrell.
\newblock On the decidability of {M}etric {T}emporal {L}ogic.
\newblock In {\em Proc. 19th Annual Symposium on Logic in Computer Science
  (LICS'05)}, pages 188--197. IEEE Computer Society Press, 2005.

\bibitem[OW07]{OW07}
Jo{\"e}l Ouaknine and James Worrell.
\newblock On the decidability and complexity of {M}etric {T}emporal {L}ogic
  over finite words.
\newblock {\em Logical Methods in Computer Science}, 3(1:8), 2007.

\bibitem[Ras99]{raskin99}
Jean-Fran\c{c}ois Raskin.
\newblock {\em Logics, Automata and Classical Theories for Deciding Real-Time}.
\newblock PhD thesis, University of Namur, Namur, Belgium, 1999.

\bibitem[RLS04]{RLS04}
Jacob~I. Rasmussen, Kim~G. Larsen, and K.~Subramani.
\newblock Resource-optimal scheduling using priced timed automata.
\newblock In {\em Proc. 10th International Conference on Tools and Algorithms
  for the Construction and Analysis of Systems (TACAS'04)}, volume 2988 of {\em
  Lecture Notes in Computer Science}, pages 220--235. Springer, 2004.

\end{thebibliography}
\newcommand{\etalchar}[1]{$^{#1}$}

\end{document}